\def\Z{\mathbb{Z}}
\def\Q{\mathbb{Q}}
\def\F{\mathbb{F}}
\def\wt{\widetilde}
\def\wh{\widehat}
\def\d{\mathrm d}
\def\e{\mathrm{e}}
\def\i{\mathrm{i}}
\def\pa{\partial}
\def\p{\bs p}
\newcommand{\SSF}{\Lambda^*}
\newcommand{\NO}[1]{\vcentcolon\mathrel{#1}\vcentcolon}
\providecommand{\vcentcolon}{\mathrel{\mathop{:}}}
\def\be{\begin{equation}}
\def\ee{\end{equation}}
\newtheorem{theorem}{Theorem}[section]
\newtheorem{conjecture}[theorem]{Conjecture}
\newtheorem{example}[theorem]{Example}
\newtheorem{lemma}[theorem]{Lemma}
\newtheorem{proposition}[theorem]{Proposition} 
\newtheorem{corollary}[theorem]{Corollary}
\theoremstyle{remark}
\newtheorem{remark}[theorem]{Remark}
\newcommand{\partitions}{\mathscr{P}}
\newcommand{\pdv}[2]{\frac{\partial #1}{\partial #2}}
\def\fq#1{\left\langle{#1}\right\rangle_{\! q}}
\def\bs#1{\boldsymbol{#1}}
\renewcommand{\=}{\: =\: }
\newcommand{\+}{\,+\,}
\newcommand{\meno}{\,-\,}
\begin{document}

\numberwithin{equation}{section}

\title[Quantum KdV hierarchy and shifted symmetric functions]{Quantum KdV hierarchy \\ and shifted symmetric functions}

\author{Jan-Willem M. van Ittersum}
\address{Max-Planck-Institut f\"ur Mathematik, Vivatsgasse 7, 53111 Bonn, Germany}
\curraddr{Department of Mathematics and Computer Science, University of Cologne,
Weyertal 86-90, 50931 Cologne, Germany}

\email{j.w.ittersum@uni-koeln.de}

\author{Giulio Ruzza}
\address{Departamento de Matem\'atica, Faculdade de Ci\^encias da Universidade de Lisboa, Campo Grande Edif\'{i}cio C6, 1749-016, Lisboa, Portugal}
\email{gruzza@fc.ul.pt}

\date{}

\begin{abstract}
We study spectral properties of the quantum Korteweg--de Vries hierarchy defined by Buryak and Rossi.
We prove that eigenvalues to first order in the dispersion parameter are given by shifted symmetric functions.
The proof is based on the boson-fermion correspondence and an analysis of quartic expressions in fermions.
As an application, we obtain a closed evaluation of certain double Hodge integrals on the moduli spaces of curves.
Finally, we provide an explicit formula for the eigenvectors to first order in the dispersion parameter. In particular, we show that its Schur expansion is supported on partitions for which the Hamming distance is minimal.
\end{abstract}

\subjclass[2020]{
05A17; 
14H70; 
37K10. 
}
\keywords{Double ramification hierarchies; Quantum KdV; Partitions; Shifted symmetric functions.}

\maketitle

\section{Introduction and results}

\subsection{Generating functions of double ramification intersection numbers}

The quantum double ramification hierarchy is a remarkable construction by Buryak and Rossi~\cite{BR}, also inspired by the symplectic field theory program of Eliashberg, Givental and Hofer~\cite{EGH}, which associates a quantum integrable hierarchy to an arbitrary cohomological field theory~\cite{KM}.
This construction was an important step forward in studying relations between the geometry of moduli spaces of curves and the theory of integrable systems, initiated with Witten's seminal conjecture~\cite{Witten}.
Double ramification hierarchies have been extensively studied in recent years, cf.~\cite{Blot,BB,BDGR,BGR,BLS}.
In this paper, we are interested in the spectral problem for 
this hierarchy.

Let~$\wt{\mathcal B}$ be the $\Q$-algebra with generators~$\omega_a$ ($a\in\Z$) and~$\hbar$ and commutation relations
\be
\label{eq:ccr}
[\hbar,\omega_a]=0,\quad[\omega_a,\omega_b]\=-a\,\delta_{a,-b}\,\hbar\qquad(a,b\in\Z),
\ee
where we denote by $[A,B]=AB-BA$ the commutator. Let us introduce the normal ordering
\be
\label{eq:no}
\NO{\omega_{a_1}\dots \omega_{a_n}}\=\prod_{a_i\geq 0}\omega_{a_i}\prod_{a_i<0}\omega_{a_i}.
\ee
In~\cite{BR}, for $k\geq 0$, Buryak and Rossi consider generating functions 
\be
\label{eq:gk}
g_k(\epsilon;z) \= \sum_{\substack{g,n\geq 0 \\ 2g-1+n>0}}\frac{\hbar^g}{n!} \sum_{a_1,\dots,a_n\in\mathbb Z} I_{g,k;a_1,\dots,a_n}\biggl(\frac \epsilon\hbar\biggr)\, z^{\sum_\ell a_\ell}\,\NO{\omega_{a_1}\dots\omega_{a_n}}
\ee
of intersection numbers
\be
\label{eq:I}
I_{g,k;a_1,\dots,a_n}(y)\=\int_{\mathrm{DR}_g(-\sum_\ell a_\ell,a_1,\dots,a_n)}\psi_1^k \,\bigl(1+y\lambda_1+y^2\lambda_2+\dots+y^g\lambda_g\bigr)
\ee
of~$\psi$ and~$\lambda$ classes over the double ramification cycle $\mathrm{DR}_g(-\sum_l a_l,a_1,\dots,a_n)$ in the moduli spaces of curves~$\overline{\mathcal M}_{g,n+1}\mspace{1mu}$. (Here,~$y$ is a formal variable.)
Note that $g+n\leq k+2$ in the right-hand side of Eq.~\eqref{eq:gk} for dimensional reasons.
For more details, we refer the reader to~\cite{B,BR,BSSZ,JPPZ}.

To deal with the infinite sums in the definition of these generating functions in Eq.~\eqref{eq:gk}, we need to work with a completion~$\mathcal B$ of~$\wt{\mathcal B}$ (see~\cite{BR,EGH}), whose elements are (formal) power series in $\omega_{-1}, \omega_{-2},\ldots$ with coefficients which are polynomials in $\hbar,\omega_0,\omega_1,\omega_2,\dots$; the product of two such power series can always be brought into the form of another such power series by use of the commutation relations~\eqref{eq:ccr}.
Then, $g_k(\epsilon;z)\in\mathcal B[\epsilon]\llbracket z^{\pm 1}\rrbracket$.

\begin{remark}\label{rem:normalizationdiff}
The variables $\epsilon,\hbar,z,\omega_a$ of the present paper coincide with $-\varepsilon^2,\i \hbar,\e^{-\i x},p_{-a}$ in~\cite{BR}.
As we will recall below, $I_{g,k;a_1,\dots,a_n}(y)$ is an even polynomial in the variables $a_i$.
\end{remark}

The results of~\cite{BR} give interesting properties of these generating functions (generalizing to arbitrary cohomological field theories).
In particular, in op.~cit.\ it is shown that $g_k(\epsilon;z)$ can be computed by an explicit recursion in~$k$ (which we recall in Section~\ref{sec:rec} below). Moreover,  
\be
G_k(\epsilon)\=\mathrm{Res}_z\, g_k(\epsilon;z)\frac{\d z}z \in \mathcal B[\epsilon]\qquad (k\geq 0)
\ee
where $\mathrm{Res}_z\,\sum_{n\in\Z}A_n\,z^n\,\d z=A_{-1}$ is the formal residue, commute pairwise, i.e.,
\be
\label{eq:integrability}
[G_k(\epsilon), G_\ell(\epsilon)]=0\qquad (k,\ell\geq 0).
\ee
The pairwise commuting elements~$G_k(\epsilon)$ form the quantum double ramification hierarchy for the trivial cohomological theory.
\begin{example}
\label{ex1}
When $k=0$, only terms with $(g,n)=(0,2)$, $(1,1)$, or $(1,0)$ contribute to the sum in Eq.~\eqref{eq:gk}.
These terms are readily computed {\upshape (}cf.{\upshape~\cite{B,BR,JPPZ})} and we have
\be
g_0(\epsilon;z)\=\frac 12\sum_{a_1,a_2\in\Z}\NO{\omega_{a_1}\omega_{a_2}}z^{a_1+a_2}\+\frac {\epsilon}{24}\sum_{a\in\Z}a^2\omega_a z^a \meno \frac \hbar {24}.
\ee
Similarly,
\begin{multline}
g_1(\epsilon;z)\=\frac 16\sum_{a_1,a_2,a_3\in\Z}\!\!\NO{\omega_{a_1}\omega_{a_2}\omega_{a_3}}z^{a_1+a_2+a_3}\+\frac\epsilon{24}\sum_{a_1,a_2\in\Z}\!\!\frac{a_1^2+a_2^2}2\NO{\omega_{a_1}\omega_{a_2}}z^{a_1+a_2}
\\
\label{eq:g1example}
\+\frac{\epsilon^2}{1152}\sum_{a\in\Z}a^4\omega_a\, z^a\+\frac{\hbar}{24}\sum_{a\in\Z}(a^2-1)\omega_a \,z^a\+\frac{\epsilon\hbar}{2880}.
\end{multline}
\end{example}

\begin{remark}[Quantum KdV]
In the classical limit, $\hbar=0$ and the variables~$\omega_a$ ($a\in\Z$) form a commutative algebra.
Then, $G^{\mathrm{cl}}_k(\epsilon) = \lim_{\hbar \to 0} G_k(\epsilon)$ are the conserved quantities of the classical Korteweg--de\,Vries (KdV) hierarchy, cf.~\cite{B} ($G_1^{\mathrm{cl}}$ being the KdV Hamiltonian). 
We have $\bigl\lbrace G_k^{\mathrm{cl}}(\epsilon),G_\ell^{\mathrm{cl}}(\epsilon)\bigr\rbrace=0$, where the Poisson bracket is defined by $\bigl\lbrace\omega_a,\omega_b\bigr\rbrace\=\i \,a\,\delta_{a,-b}\mspace{1mu}$, which is the {\it first} Poisson structure of the KdV hierarchy (in Fourier coordinates).
Hence, the quantum double ramification hierarchy described above solves the quantization problem for the KdV hierarchy by explicitly providing an $\hbar$-deformation of the classical KdV conserved quantities which form a commuting family of quantum conserved charges, which we call the {\it quantum KdV hierarchy}.
We direct the reader to~\cite[Section~2]{BR} and~\cite[Section~1]{Dubrovin} and references therein for more details.

It is worth noting that the classical KdV hierarchy also admits a {\it second} Poisson structure, whose quantization involves the Virasoro algebra and has been extensively studied in Conformal Field Theory after the impetus of the seminal work by Bazhanov, Lukyanov, and Zamolodchikov~\cite{BLZ}.
\end{remark}

\subsection{Spectral problem}

Consider the representation $\rho_{c}:\mathcal B\to\mathrm{End}\,\mathsf B$, where $\mathsf B=\Q[\bs p]$, $\bs p=(p_1,p_2,\dots)$, defined by 
\be
\label{eq:repB}
\rho_{c}(\hbar) \, \phi\= \phi,\quad
\rho_{c}(\omega_a) \, \phi\= \begin{cases}
p_a\, \phi, & \mbox{if }a>0, \\
c \,\phi, & \mbox{if } a=0,\\
-a\frac{\partial \phi}{\partial p_{-a}}, & \mbox{if }a<0,
\end{cases}\qquad (\phi\in\mathcal{B},\, a\in\mathbb Z)\,.
\ee
depending on an arbitrary rational number~$c$.

\begin{remark}
We could let~$\hbar$ act by multiplication by an arbitrary nonzero constant.
Without loss of generality, we set this constant to~$1$, as we can always reduce to this case by rescaling the variables~$p_k\mspace{1mu}$.
\end{remark}

We obtain the family of commuting \emph{quantum KdV Hamiltonian operators} on~$\mathsf B$
\be
\wh G_k(\epsilon,c) = \rho_{c}\bigl(G_k(\epsilon)\bigr)\qquad (k\geq 0),
\ee
polynomially depending on~$\epsilon$ and~$c$.
It is readily checked that these operators preserve the grading $\mathsf B=\bigoplus_{n\geq 0}\mathsf B_n$ induced by assigning weight~$k$ to the variable~$p_k\mspace{1mu}$.

In this paper, we continue the study of the spectral problem for such a family of commuting operators on~$\mathsf B$.
Rossi~\cite{Rossi} (see also the work of Pogrebkov~\cite{Pogrebkov}) showed that $\wh G_k(0;c)$ is quadratic in fermions under the boson-fermion correspondence, which we will recall in Section~\ref{sec:BF}.
Dubrovin then showed that, when $\epsilon=0$, a basis of common eigenvectors is given by Schur functions.
Namely,~\cite[Corollary~2.4]{Dubrovin}
\be
\label{eq:Dubrovin}
\wh G_k(0;c)\, s_\lambda \= E_k^{[0]}(\lambda;c)\, s_\lambda\qquad (k\geq 0),
\ee
where the Schur functions $s_\lambda=s_\lambda(\p)$ are labeled by partitions $\lambda=(\lambda_1,\ldots,\lambda_r)\in \partitions$ and are defined by (cf.~\cite{Mac})
\be
\label{eq:schur}
s_\lambda(\bs p) \= \det\bigl(h_{\lambda_i-i+j}(\bs p)\bigr)_{i,j=1}^{r}\,,\quad \sum_{\ell\in\Z} h_\ell(\bs p)\,y^\ell \= \exp\Bigl(\sum\nolimits_{\ell\geq 1}\frac{p_\ell}\ell y^\ell\Bigr).
\ee
Moreover, Dubrovin showed that the eigenvalue in Eq.~\eqref{eq:Dubrovin} is a {\it shifted symmetric} function on partitions, that is
\be
\label{eq:E0}
E_k^{[0]}(\lambda;c) \= \sum_{j=0}^{k+2}\frac{c^{k+2-j}}{(k+2-j)!}\,Q_j(\lambda)\hspace{1pt},
\ee
where the functions $Q_k:\partitions\to \Q$ are given by $Q_0(\lambda)=1$ and 
\be
\label{eq:Qk}
Q_k(\lambda) \= \frac 1{(k-1)!}\sum_{i\geq 1}\left[\left(\lambda_i-i+\tfrac{1}{2}\right)^{k-1}-\left(-i+\tfrac{1}{2}\right)^{k-1}\right]\+\beta_k\qquad (k\geq 1)
\ee
with
\be
\label{eq:betagen}
\sum_{k\geq 0}\beta_k\,y^k=\frac{y/2}{\sinh(y/2)}.
\ee
Equivalently, $\beta_k \= \frac{1}{k!}\bigl(\frac 1{2^{k-1}}-1\bigr)B_k\mspace{1mu}$,  where~$B_k$ denotes the~$k$th Bernoulli number.

\begin{remark}\label{rem:ss}
The algebra of shifted symmetric functions is a deformation of the algebra of symmetric functions, for which the natural generators are the~$Q_k\mspace{1mu}$. A vector space basis is given by the central characters of the symmetric group. Shifted symmetric functions appear in asymptotic representation theory \cite{IvanovOlshanski,KerovOlshanski,RiosZertuche} and in enumerative geometry, e.g., in the Hurwitz/Gromov--Witten theory of curves \cite{Dijkgraaf, Engel, EskinOkounkov, HahnIttersumLeid, Ochiai,OkounkovPandharipande}, and in computations of volumes and Siegel--Veech constants of the moduli space of flat surfaces \cite{ChenMollerZagier,ChenMollerSauvagetZagier,EskinOkounkov}. For the latter, a remarkable relation to quasimodular forms~\cite{BO,Zagier} is crucial, which will be recalled below (cf.~Theorem~\ref{thm:BO}).
\end{remark}

Furthermore, it was shown in~\cite{RY} that the commuting Hamiltonians~$\wh{G}_k$ admit a common basis of eigenvectors deforming the Schur polynomials. That is, for each partition~$\lambda$ of size $|\lambda|=\sum_{i}\lambda_i$ there exists a unique power series $r_\lambda(\bs p;\epsilon)\in \mathsf B_{|\lambda|}\llbracket\epsilon\rrbracket$  satisfying
\be
\label{eq:spectrum}
\wh G_k(\epsilon,c)\,r_\lambda(\bs p;\epsilon)\= E_k(\lambda;\epsilon,c)\, r_\lambda(\bs p;\epsilon)
\ee
such that
\begin{align}
\label{eq:spectrum1}
r_\lambda(\bs p;\epsilon)&=s_\lambda+\sum_{m\geq 1}\epsilon^mr_\lambda^{[m]}(\bs p),\\
E_k(\lambda;\epsilon,c)&=E^{[0]}_k(\lambda;c)+\sum_{m\geq 1}\epsilon^mE_k^{[m]}(\lambda;c)
\end{align}
and
\be
\label{eq:gauge}
\langle s_\lambda,r_\lambda^{[m]}\rangle=0 \qquad (m\geq 1),
\ee
where the standard scalar product on~$\mathsf B$ is given by $\langle s_\lambda(\bs p),s_\mu(\bs p)\rangle=\delta_{\lambda,\mu}\mspace{1mu}$.

\subsection{Results}

Our first result expresses the eigenvalue~$E_k^{[1]}(\lambda;c)$ as a shifted symmetric function in the partition~$\lambda$.

\begin{framed}
\begin{theorem}\label{thm:1}
For all $k\geq 0$ we have
\be
\label{eq:E1}
E_k^{[1]}(\,\cdot\,;c) \:=\: \frac 1{24}\sum_{\ell=0}^k\frac{c^\ell}{\ell!}\biggl(2Q_2Q_{k+1-\ell}\+(k-\ell)(k-\ell+3)Q_{k+3-\ell}\biggr)
\ee
where the shifted symmetric function~$Q_k$ is defined by Eq.~\eqref{eq:Qk}.
\end{theorem}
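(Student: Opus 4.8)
The plan is to combine first-order perturbation theory with an explicit fermionic evaluation of a diagonal matrix element. Expand $\wh G_k(\epsilon,c)=\sum_{m\ge 0}\epsilon^m\,\wh G_k^{[m]}(c)$, so that $\wh G_k^{[0]}(c)=\wh G_k(0,c)$ is Dubrovin's operator~\eqref{eq:Dubrovin}; it is self-adjoint for the standard scalar product, since it has the real orthonormal eigenbasis $\{s_\lambda\}$ (equivalently $\omega_a^\dagger=\omega_{-a}$ under $\rho_c$). Substituting~\eqref{eq:spectrum1} into the eigenvalue equation~\eqref{eq:spectrum} and reading off the coefficient of $\epsilon$ gives
\[
\wh G_k^{[0]}(c)\,r_\lambda^{[1]}+\wh G_k^{[1]}(c)\,s_\lambda=E_k^{[0]}(\lambda;c)\,r_\lambda^{[1]}+E_k^{[1]}(\lambda;c)\,s_\lambda.
\]
Pairing with $s_\lambda$, the two terms containing $r_\lambda^{[1]}$ cancel by self-adjointness together with the gauge condition~\eqref{eq:gauge}, leaving the identity $E_k^{[1]}(\lambda;c)=\langle s_\lambda,\wh G_k^{[1]}(c)\,s_\lambda\rangle$. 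It therefore suffices to compute the diagonal matrix element of the $\epsilon$-linear part of $\wh G_k$ in the Schur basis.

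First I would make $\wh G_k^{[1]}$ explicit. Propagating the $\epsilon$-linear part of $g_k(\epsilon;z)$ through the recursion in $k$ recalled in Section~\ref{sec:rec}, and then applying $\mathrm{Res}_z(\,\cdot\,)\tfrac{\d z}{z}$ (which retains only the terms of total $z$-degree zero, $\sum_\ell a_\ell=0$), one obtains an operator whose diagonal action is governed by a part quadratic in the modes $\omega_a$ together with an $\hbar$-proportional constant, as is already visible for $k=1$ in~\eqref{eq:g1example}. Two features of~\eqref{eq:E1} are explained here: the powers of $c$ come from insertions of the zero mode $\rho_c(\omega_0)=c$ produced by the recursion, giving the convolution $\sum_\ell\tfrac{c^\ell}{\ell!}(\cdots)$ just as for $E_k^{[0]}$ in~\eqref{eq:E0}; and the additive constants from the $\hbar$-terms (such as $\tfrac{\epsilon\hbar}{2880}$ in~\eqref{eq:g1example}) are exactly what the shifts $\beta_j$ in~\eqref{eq:Qk} will absorb. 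I would fold the entire $c$-tower into a generating function, so as to establish~\eqref{eq:E1} for all $k$ at once and reduce to its top-weight ($\ell=0$) core.

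Next I would pass to fermions via the boson-fermion correspondence of Section~\ref{sec:BF}, under which $s_\lambda$ is the semi-infinite-wedge state with occupied modes the modified contents $\lambda_i-i+\tfrac12$. Each bosonic bilinear $\NO{\omega_a\omega_{-a}}$ rewrites as a normal-ordered \emph{quartic} fermion expression together with a single Wick-contraction term that is \emph{quadratic} in fermions; this is the split the abstract refers to. The diagonal matrix element in this state of a fermionic number operator weighted by a polynomial in its mode equals, after the standard vacuum subtraction and $\zeta$-regularization of the sum over unoccupied modes, a power sum over the $\lambda_i-i+\tfrac12$; by~\eqref{eq:Qk} this is a shifted symmetric function $Q_j$, with the half-integer shift and the correction terms $(-i+\tfrac12)^{\,j-1}$ and $\beta_j$ arising from the vacuum normalization.

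The crux is the diagonal element of the quartic operator. By Wick's theorem it factorizes into a product of two regularized power sums, producing the factorized term $2\,Q_2\,Q_{k+1-\ell}$ (the fixed factor $Q_2$ reflecting the universal degree-two weight $\tfrac{a_1^2+a_2^2}{2}$ of the $\epsilon$-term), while the contraction collapses the quartic to a single regularized power sum and produces $(k-\ell)(k-\ell+3)\,Q_{k+3-\ell}$; the overall $\tfrac1{24}$ descends from the genus-one coefficient already present in~\eqref{eq:g1example}. I expect the main difficulty to lie precisely here: controlling the several divergent sums so that their vacuum subtractions reassemble exactly into the $Q_j$ of~\eqref{eq:Qk} with the correct $\beta_j$, checking that any cubic-or-higher contributions (potentially arising for $k\ge 2$) do not reach the diagonal element, and verifying that the contraction term carries the specific weight $(k-\ell)(k-\ell+3)$ rather than another quadratic in $k-\ell$; I would pin these coefficients down through a generating-function identity matched against the recursion of Section~\ref{sec:rec}. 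As a consistency test I would use Example~\ref{ex1}: for $k=0$ the formula reads $\tfrac1{12}Q_1Q_2=0$, in agreement with $\wh G_0^{[1]}=0$ since $Q_1\equiv0$, while for $k=1$ it reads $\tfrac1{12}Q_2^2+\tfrac16 Q_4$, which matches the direct evaluation on one-part partitions once the constant from $\tfrac{\epsilon\hbar}{2880}$ is included.
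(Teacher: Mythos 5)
Your overall architecture coincides with the paper's: first-order perturbation theory plus the gauge condition~\eqref{eq:gauge} gives $E_k^{[1]}(\lambda;0)=\frac{1}{24}\langle s_\lambda,\wh G_k^{[1]}s_\lambda\rangle$ (the paper's Lemma~\ref{lemma:simple}), the $c$-tower is exactly the zero-mode convolution (Lemma~\ref{lemma:c0}), and the computation is then done fermionically. But there are two genuine gaps at the decisive points. First, your structural claim that after the recursion ``the diagonal action is governed by a part quadratic in the modes $\omega_a$ together with an $\hbar$-proportional constant'' is false for $k\geq 2$: by the dimension constraint in~\eqref{eq:I}, $g_k^{[1]}$ contains bosonic terms of every degree up to $k+1$, and such higher terms do contribute to diagonal Schur-basis matrix elements (the cut-and-join operator, cubic in the $\omega_a$, is even \emph{diagonal} on Schur functions). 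So converting only bosonic bilinears to quartic fermions, and hoping that ``cubic-or-higher contributions do not reach the diagonal,'' does not work. What is true --- and is the actual content of the paper's Lemma~\ref{lem:ABpols} and Proposition~\ref{prop:G1fermion} --- is that the \emph{full} operator $\Phi^{-1}\wh g_k^{[1]}(z)\Phi$ is exactly quartic-plus-quadratic in fermions, $\sum A_k(a,b,c,d)\,\Xi_{abcd}\,z^{a-b+c-d}+\sum B_k(a,b)\,\Xi_{ab}\,z^{a-b}+\gamma_k$, with explicit polynomials $A_k,B_k$; this is proved by induction on the Buryak--Rossi recursion~\eqref{eq:BRrec}, using the commutator formulas for $\Xi_{abcd}$ and, crucially, a verification that the ansatz is of the differential-polynomial form~\eqref{eq:diffpolyop} --- without which the recursion does not determine the answer uniquely. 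Your plan contains no substitute for this step, and the coefficients you flag as uncertain (the weight $(k-\ell)(k-\ell+3)$, and the $Q_2Q_{k+1-\ell}$ factorization) come precisely from the closed-form reductions $B_k(a,a)$ and $A_k(a,b,b,a)=\frac{(a-b)(a^k-b^k)}{k!}$ of that lemma, not from a Wick factorization of a quadratic bosonic operator.

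Second, your proposal to ``pin these coefficients down through a generating-function identity matched against the recursion'' cannot determine the additive constant: the $z^0$ coefficient is annihilated by $z\partial_z$ in~\eqref{eq:BRrec}, and the recursion is explicitly insensitive to constants depending on $\epsilon,\hbar$ only. Those constants are the double Hodge integrals $\int_{\overline{\mathcal M}_{g,1}}\lambda_1\lambda_g\psi_1^k$ --- the very quantities that Corollary~\ref{corollary:doubleHodge} \emph{deduces} from Theorem~\ref{thm:1}, so they cannot be taken as known inputs. The paper leaves $\gamma_k$ undetermined through Proposition~\ref{prop:G1fermion} and fixes it only at the end, by observing that the $\lambda$-dependent part assembles into $2Q_2Q_{k+1}+k(k+3)Q_{k+3}$ plus a constant, and that the constant must vanish by the quasimodularity of $\langle E_k\rangle_q$ in pure weight (Theorems~\ref{thm:BO} and~\ref{thm:IR}); alternatively the string equation~\eqref{eq:string} would do. Your sketch has no mechanism playing this role, so even granting the fermionic computation, your argument would establish~\eqref{eq:E1} only up to an undetermined additive constant depending on $k$. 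Your $k=0,1$ consistency checks are correct, but they rely on the explicitly known constants in~\eqref{eq:g1example}, which is exactly what is unavailable for general $k$.
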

\end{framed}
The proof is given in Section~\ref{sec:proof1} and employs the boson-fermion correspondence.
A challenging aspect of this proof is the fact that the linear part in~$\epsilon$ of the quantum KdV Hamiltonian operators are quartic expressions in fermions, rather than simpler quadratic expression, cf.~\cite{Dubrovin,Rossi}.

We obtain the following application.
\begin{framed}
\begin{corollary}
\label{corollary:doubleHodge}
We have the following explicit evaluation of double Hodge intersection numbers on the moduli spaces of curves:
\be
\label{eq:in}
\int_{\overline{\mathcal M}_{g,1}}\lambda_1\lambda_g\psi_1^{2g-3} \:=\: \frac{(-1)^g}{24}\biggl(2g(2g-3)\beta_{2g}-\frac 1{12}\beta_{2g-2}\biggr)\qquad (g>1),
\ee
where~$\beta_k$ is defined by Eq.~\eqref{eq:betagen}.
\end{corollary}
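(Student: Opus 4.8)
The plan is to read off the desired integral from the value of the first-order eigenvalue $E_k^{[1]}(\,\cdot\,;c)$ of Theorem~\ref{thm:1} at the \emph{empty partition}. First I would note that the vacuum eigenvector is exactly the constant function: since $\mathsf B_0=\Q$, the gauge condition~\eqref{eq:gauge} forces $r_\emptyset=s_\emptyset=1$ identically, so that $E_k(\emptyset;\epsilon,c)$ is simply the scalar by which $\wh G_k(\epsilon,c)$ acts on $1$, i.e.\ the constant term of that operator. Thus the whole problem reduces to identifying this constant term geometrically and comparing it with Theorem~\ref{thm:1}.

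Next I would set up the dictionary between the constant term and the intersection numbers. Acting with a normal-ordered monomial $\NO{\omega_{a_1}\cdots\omega_{a_n}}$ on $1$ produces a nonzero constant only if no annihilation operator is present (all $a_i\geq 0$), and, after applying $\mathrm{Res}_z\,\frac{\d z}z$ (which selects $\sum_i a_i=0$), only if every $a_i=0$; in that case the monomial reduces to $\omega_0^n\mapsto c^n$. Hence the coefficient of $c^0$ in $E_k^{[1]}(\emptyset;c)$ receives contributions only from the $n=0$ terms of~\eqref{eq:gk}. Tracking the powers of $\hbar$ and $\epsilon$ — recalling that the class $\lambda_j$ in~\eqref{eq:I} enters with $y^j=(\epsilon/\hbar)^j$ while $\rho_c(\hbar)=\mathrm{id}$, so the factors of $\hbar$ cancel — the coefficient of $\epsilon^1$ isolates the $\lambda_1$ term, giving
\be
[c^0]\,E_k^{[1]}(\emptyset;c)\=\sum_{g\geq 1}\int_{\mathrm{DR}_g(0)}\psi_1^k\,\lambda_1.
\ee
Using the classical identity $\mathrm{DR}_g(0)=(-1)^g\lambda_g$ on $\overline{\mathcal M}_{g,1}$ and matching degrees against $\dim\overline{\mathcal M}_{g,1}=3g-2$, only $g=(k+3)/2$ survives; thus for $k=2g-3$ one obtains $[c^0]\,E_{2g-3}^{[1]}(\emptyset;c)=(-1)^g\int_{\overline{\mathcal M}_{g,1}}\lambda_1\lambda_g\psi_1^{2g-3}$.

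Finally I would evaluate the right-hand side of Theorem~\ref{thm:1}. Since the sums in~\eqref{eq:Qk} are empty at $\lambda=\emptyset$, we have $Q_j(\emptyset)=\beta_j$, so the $\ell=0$ (that is, $c^0$) summand of~\eqref{eq:E1} reads
\be
[c^0]\,E_k^{[1]}(\emptyset;c)\=\frac1{24}\Bigl(2\beta_2\,\beta_{k+1}+k(k+3)\,\beta_{k+3}\Bigr).
\ee
Substituting $k=2g-3$ and $\beta_2=-\tfrac1{24}$ (read off from~\eqref{eq:betagen}) and equating with the geometric expression above yields~\eqref{eq:in} after multiplying through by $(-1)^g$.

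The step I expect to be the main obstacle is the dictionary itself: one must justify with care that, among all $(g,n)$ contributing to~\eqref{eq:gk}, the coefficient of $c^0$ in the vacuum eigenvalue selects precisely the single $n=0$, $j=1$ term over $\mathrm{DR}_g(0)$, with the correct power of $\epsilon$ and with the normalization $\rho_c(\hbar)=\mathrm{id}$ absorbing all $\hbar$'s. Once this identification is secured, the remaining ingredients — the reduction $\mathrm{DR}_g(0)=(-1)^g\lambda_g$, the dimension count fixing $k=2g-3$, and the combinatorial evaluation $Q_j(\emptyset)=\beta_j$ together with $\beta_2=-\tfrac1{24}$ — are routine.
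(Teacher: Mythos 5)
Your proposal is correct and takes essentially the same route as the paper's own proof: evaluate at $\lambda=\emptyset$ using $r_\emptyset=1$, match the $\epsilon$-coefficient of the $\omega$-free constant term of $G_k(\epsilon)$ with $\int_{\mathrm{DR}_g(0)}\psi_1^k\lambda_1=(-1)^g\int_{\overline{\mathcal M}_{g,1}}\lambda_1\lambda_g\psi_1^k$ via Poincar\'e duality, fix $k=2g-3$ by the dimension count, and compare with Theorem~\ref{thm:1} through $Q_j(\emptyset)=\beta_j$ and $\beta_2=-\tfrac1{24}$. Your explicit justification that only the $n=0$ terms contribute to the $c^0$ coefficient (the paper instead simply sets $c=0$, cf.\ Eq.~\eqref{eq:E1simple}) merely spells out a step the paper leaves implicit.
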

\end{framed}

For our second result, we focus on the eigenvectors in Eq.~\eqref{eq:spectrum}, also to first order in~$\epsilon$.
We show that $\langle r_\lambda^{[1]},s_\mu\rangle \neq 0$ if and only if the Hamming distance~$d$, given by Eq.~\eqref{eq:Hamming}, of the Frobenius coordinates of two partitions~$\lambda$ and~$\mu$ is exactly~$2$ (which is the minimal distance for two partitions of the same integers).
If this is the case,~$\lambda$ can be obtained from~$\mu$, in exactly two ways, by removing a border strip~$\gamma_i$ and adding a border strip~$\gamma_i'$ ($i=1,2$, with $|\gamma_i|=|\gamma_i'|$, $\gamma_1\subset\gamma_2$, $\gamma_1'\subset\gamma_2'$).
(More detailed explanations and proofs are given in Sections~\ref{sec:backgroundBF} and~\ref{introrefsecthm2}.)

\begin{framed}
\begin{theorem}
\label{thm:2}
For all partitions~$\lambda$
\be
\label{eq:V1}
r^{[1]}_\lambda(\bs p)\,= \frac 1{12} \sum_{d(\lambda,\mu)=2}w_{\lambda,\mu}\, (-1)^{\mathrm{ht}(\gamma_1)+\mathrm{ht}(\gamma_1')} \,\Biggl(\frac{|\gamma_1|}{|\gamma_2|}-\frac{|\gamma_2|}{|\gamma_1|}\Biggr)\, s_{\mu}(\bs p)\,.
\ee
Here $w_{\lambda,\mu}=1$ if $\lambda_i>\mu_i$ for the smallest $i\geq 1$ such that $\lambda_i\not=\mu_i$, and $w_{\lambda,\mu}=-1$ otherwise. 
Moreover,~$\mathrm{ht}(\gamma)$ and~$|\gamma|$ are, respectively, the height and size of a border strip~$\gamma$.
\end{theorem}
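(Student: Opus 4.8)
The plan is to run first-order (Rayleigh--Schr\"odinger) perturbation theory in~$\epsilon$ around the Schur basis, and then convert the resulting matrix elements into border-strip combinatorics via the Murnaghan--Nakayama rule. Writing $\wh G_k(\epsilon,c)=\wh G^{[0]}_k(c)+\epsilon\,\wh G^{[1]}_k(c)+O(\epsilon^2)$ and extracting the coefficient of~$\epsilon^1$ in Eq.~\eqref{eq:spectrum}, using $\wh G^{[0]}_k s_\mu=E^{[0]}_k(\mu;c)\,s_\mu$ and the self-adjointness of $\wh G^{[0]}_k$ with respect to the standard scalar product (the $s_\lambda$ form an orthonormal eigenbasis by Eq.~\eqref{eq:Dubrovin}), one obtains for $\mu\neq\lambda$
\be
\langle r^{[1]}_\lambda,s_\mu\rangle \= \frac{\langle \wh G^{[1]}_k(c)\,s_\lambda,\,s_\mu\rangle}{E^{[0]}_k(\lambda;c)-E^{[0]}_k(\mu;c)}\,,
\ee
valid for any~$k$ for which the denominator is nonzero, while $\langle r^{[1]}_\lambda,s_\lambda\rangle=0$ by the gauge condition~\eqref{eq:gauge}. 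Since $r^{[1]}_\lambda$ is already known to exist and be unique by Eq.~\eqref{eq:spectrum1}, this expression is independent of the chosen~$k$, and I am free to evaluate each coefficient at the most convenient~$k$.

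First I would pin down the support. Identifying $s_\lambda$ with the charge-zero fermionic state under the boson-fermion correspondence, the operators $\wh G^{[1]}_k(c)$ are quartic in fermions (as established in the proof of Theorem~\ref{thm:1}), hence move at most two beads of the Maya diagram; consequently $\langle \wh G^{[1]}_k s_\lambda, s_\mu\rangle=0$ whenever $d(\lambda,\mu)>2$. Because the joint spectrum $\bigl(E^{[0]}_k(\lambda;c)\bigr)_{k\geq 0}$ separates partitions (the $Q_k$ of Eq.~\eqref{eq:Qk} do), for each such~$\mu$ some~$k$ makes the denominator nonzero, so $\langle r^{[1]}_\lambda,s_\mu\rangle=0$ for $d(\lambda,\mu)>2$. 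As distance~$0$ is removed by the gauge condition and $d=2$ is the minimal positive distance between partitions of the same size, only $d(\lambda,\mu)=2$ survives; there $\mu$ and $\lambda$ differ by removing one border strip and adding one of equal size, realized by exactly the two ways of pairing the four beads of $S_\lambda\triangle S_\mu$ (the nesting $\gamma_1\subset\gamma_2$, $\gamma_1'\subset\gamma_2'$ ordering them by size).

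For the distance-$2$ coefficients I would take $k=1$. From Eq.~\eqref{eq:g1example}, taking the residue (which selects $a_1+a_2=0$) and applying the representation~\eqref{eq:repB} gives the particularly simple, $c$-independent form
\be
\wh G^{[1]}_1 \= \frac1{12}\sum_{a\geq 1}a^3\,p_a\,\frac{\partial}{\partial p_a}\ +\ \mathrm{const},
\ee
the constant being irrelevant off the diagonal. Using $a\,\partial_{p_a}=p_a^{\perp}$ together with the Murnaghan--Nakayama rule, $p_a^\perp s_\lambda=\sum_{|\gamma|=a}(-1)^{\mathrm{ht}(\gamma)}s_{\lambda\setminus\gamma}$ and $p_a s_\nu=\sum_{|\gamma'|=a}(-1)^{\mathrm{ht}(\gamma')}s_{\nu\cup\gamma'}$, the matrix element collapses onto the two pairings: writing $\ell_i=|\gamma_i|=|\gamma_i'|$ and $\sigma_i=(-1)^{\mathrm{ht}(\gamma_i)+\mathrm{ht}(\gamma_i')}$ one finds $\langle \wh G^{[1]}_1 s_\lambda,s_\mu\rangle=\tfrac1{12}\bigl(\ell_1^2\,\sigma_1+\ell_2^2\,\sigma_2\bigr)$. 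For the denominator, Eqs.~\eqref{eq:E0}--\eqref{eq:Qk} show that $E^{[0]}_1(\lambda;c)-E^{[0]}_1(\mu;c)=Q_3(\lambda)-Q_3(\mu)$, since $Q_0,Q_1$ are constant and $Q_2$ depends only on $|\lambda|$. Expressing $Q_3$ through the regularized second moment $\tfrac12\sum_{s\in S_\lambda}s^2$ of the Maya diagram and inserting the two bead moves, a short computation yields $Q_3(\lambda)-Q_3(\mu)=w_{\lambda,\mu}\,\ell_1\ell_2$, the sign $w_{\lambda,\mu}$ recording which partition carries the higher of the four displaced beads.

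The crux is the sign identity $\sigma_2=-\sigma_1$: the two pairings of the four beads always have opposite total height parity. I would prove this from the abacus description $\mathrm{ht}(\gamma)=\#\{\text{beads strictly between the endpoints of the move}\}$, comparing the bead counts of the two pairings across the nested configuration $S_\lambda\triangle S_\mu$; the re-pairing trades a nesting for a crossing of the two moves and flips the parity. Granting this,
\be
\langle r^{[1]}_\lambda,s_\mu\rangle \= \frac{\tfrac1{12}\,\sigma_1\,(\ell_1^2-\ell_2^2)}{w_{\lambda,\mu}\,\ell_1\ell_2} \= \frac1{12}\,w_{\lambda,\mu}\,\sigma_1\Bigl(\frac{\ell_1}{\ell_2}-\frac{\ell_2}{\ell_1}\Bigr),
\ee
using $w_{\lambda,\mu}^{-1}=w_{\lambda,\mu}$, which is exactly Eq.~\eqref{eq:V1}. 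I expect the main difficulties to be this parity lemma together with the determination of the sign of $Q_3(\lambda)-Q_3(\mu)$ and its identification with $w_{\lambda,\mu}$: both are elementary but demand careful bookkeeping of bead positions and border-strip heights, and of the orientation conventions relating the $\lambda\to\mu$ and $\mu\to\lambda$ descriptions.
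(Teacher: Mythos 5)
Your proposal is correct and follows essentially the same route as the paper: first-order perturbation around the Schur eigenbasis (the paper's Lemma~\ref{lemma:simple2}), support restriction to $d(\lambda,\mu)=2$ via the quartic fermionic form of $\wh G_k^{[1]}$ together with $d\neq 1$ for equal sizes (Lemmas~\ref{lemma:Hamming} and~\ref{lemma:scalarproduct}), the Murnaghan--Nakayama/border-strip evaluation of the $k=1$ matrix element (the paper does this with $\alpha_{\pm n}$ acting on $v_\lambda$, equivalent to your $p_a^\perp$ computation), the identity $Q_3(\lambda)-Q_3(\mu)=w_{\lambda,\mu}\,|\gamma_1|\,|\gamma_2|$, and the opposite-parity claim $\sigma_2=-\sigma_1$, which the paper proves exactly as you propose, by counting beads between the endpoints of the two moves in the Maya diagram (Lemma~\ref{lemma:borderstrips}(iii)). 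Your observation that for $\mu\in\mathcal U(\lambda)$ the $k=1$ denominator is automatically nonzero, with the separation property of the $Q_k$ invoked only for $d>2$, is a slightly streamlined but equivalent handling of the degenerate case treated in the paper's Lemma~\ref{lemma:simple2}.
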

\end{framed}

\subsection{Further comments and conjectures}

We denote by $\SSF=\Q[Q_2, Q_3,\dots]$ the algebra of shifted symmetric functions on partitions. 
The expressions for $E^{[0]}_k(\lambda;c)$ in Eq.~\eqref{eq:E0} and for~$E_k^{[1]}(\lambda;c)$ in Eq.~\eqref{eq:E1} prompt a conjecture, suggested to us by Don Zagier and already stated in~\cite{IR}.
Namely, we conjecture that~$E_k^{[m]}(\lambda;c)$ belongs to~$\SSF[c]$. More precisely, by assigning weight~$k$ to~$Q_k$ and weight~$1$ to~$c$ the algebra $\SSF[c]=\bigoplus_{n\geq 0}\SSF[c]_{n}$ becomes a graded algebra, and we conjecture that $E_k^{[m]}(\lambda;c)$ is of homogeneous weight $k+2+m$ in~$\SSF[c]$ for all $k,m\geq 0$. Based on extensive numerical verification, which we report in Appendix~\ref{app:table}, we formulate the following refined version of that conjecture.
\begin{conjecture}\label{conj}
For each integer $D\geq0$ and each partition $\nu$, there exists a polynomial~$f_{D,\nu}$ with rational coefficients of degree $2D$ such that the following identity holds for all $k\geq 0$:
\be 
E_k(\,\cdot\,;\epsilon,0) \= \sum_{\substack{\nu \in \partitions \\ D\geq 0}}   \epsilon^{D+|\nu|} f_{D,\nu}(k)\, Q_{k+D+2-\ell(\nu)} \prod_{i}  Q_{\nu_i+1}\mspace{1mu}.
\ee
\end{conjecture}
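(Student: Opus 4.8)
The statement to be established is a conjecture, so what follows is a strategy rather than a complete argument. The plan is to combine three ingredients: the boson--fermion correspondence of Section~\ref{sec:BF}, the realization of the quantum KdV Hamiltonians inside the algebra generated by Okounkov's operators $\mathcal{E}_r$ (whose diagonal members $\mathcal{E}_0$ have eigenvalues on $s_\lambda$ governed precisely by the generating series~\eqref{eq:betagen} together with the content sum in~\eqref{eq:Qk}, hence by the $Q_k$), and the Buryak--Rossi recursion in~$k$ recalled in Section~\ref{sec:rec}. Throughout, the eigenvalue can be computed without producing the eigenvector explicitly: the gauge condition~\eqref{eq:gauge} gives $\langle s_\lambda,r_\lambda\rangle=1$, whence $E_k(\lambda;\epsilon,0)=\langle s_\lambda,\wh G_k(\epsilon,0)\,r_\lambda\rangle$.

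First I would extend the fermionic computation underlying Theorem~\ref{thm:1} to all orders in~$\epsilon$. The expectation is that the coefficient of $\epsilon^m$ in $\wh G_k(\epsilon,0)$ is a normal-ordered fermionic operator of degree $2(m+1)$, lying in the completed algebra spanned by products $\mathcal{E}_{r_1}\cdots\mathcal{E}_{r_{m+1}}$ (the case $m=0$ being quadratic as in~\cite{Rossi,Dubrovin}, the case $m=1$ quartic as in Section~\ref{sec:proof1}). Since the $\mathcal{E}_0$ are diagonal in the Schur basis and their commutators close the algebra, any such product can be brought to a normal form whose diagonal part is a polynomial in the~$Q_j$. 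This would already establish the weaker conjecture that each $E_k^{[m]}(\,\cdot\,;0)$ is shifted symmetric, i.e.\ lies in~$\SSF$.

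Second, to pass from ``a polynomial in the $Q_j$'' to the refined shape of the conjecture — a single $k$-dependent factor $Q_{k+D+2-\ell(\nu)}$ times a $k$-independent product $\prod_i Q_{\nu_i+1}$, with coefficient $f_{D,\nu}(k)$ — I would argue inductively in~$k$ via the recursion of Section~\ref{sec:rec}. The hope is that the recursion acts at the level of eigenvalues as a raising operation incrementing the index of the unique $k$-dependent $Q$-factor while multiplying by factors polynomial in~$k$, the product $\prod_i Q_{\nu_i+1}$ being carried along unchanged. The bound $\deg f_{D,\nu}=2D$ should then follow from the grading in which~$\epsilon$ carries spectral weight~$2$ (recall $\epsilon=-\varepsilon^2$ and that the $a_i$ enter~\eqref{eq:I} only through even powers, cf.\ Remark~\ref{rem:normalizationdiff}): of the $m=D+|\nu|$ units of $\epsilon$-order, those absorbed into creating the small factors $Q_{\nu_i+1}$ contribute no $k$-dependence, while each of the remaining~$D$ units supplies a factor $a^2$, hence at most a quadratic polynomial in the index $\sim k$ of the leading $Q$; this gives $2(m-|\nu|)=2D$, consistent with the $m=1$ values $f_{1,\emptyset}(k)=k(k+3)/24$ and $f_{0,(1)}(k)=1/12$.

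The \textbf{main obstacle} is the higher-order perturbation theory hidden in $E_k(\lambda;\epsilon,0)=\langle s_\lambda,\wh G_k(\epsilon,0)\,r_\lambda\rangle$: because $s_\lambda$ is a common eigenvector only at $\epsilon=0$, this bracket couples $\wh G_k$ to its off-diagonal ($\mathcal{E}_{r\neq0}$) part through the eigenvector corrections $r_\lambda^{[m]}$. At order $\epsilon^m$ one is thus led to a Rayleigh--Schr\"odinger sum over intermediate Schur states weighted by energy denominators built from the unperturbed joint spectrum $\{E_j^{[0]}(\,\cdot\,;0)\}_{j\geq0}$, and the difficult point is to show that these sums — controlled combinatorially by the border-strip and Hamming-distance data of Theorem~\ref{thm:2} and its higher-order analogues — collapse to a polynomial in the $Q_j$ containing \emph{exactly one} $k$-dependent factor. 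Finally, establishing $\deg f_{D,\nu}=2D$ exactly, rather than as an upper bound, will require a non-vanishing argument for the leading coefficient, which I expect to be the most delicate part of the program.
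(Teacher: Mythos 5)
This statement is Conjecture~\ref{conj}: the paper does not prove it, so there is no ``paper's own proof'' to compare against. The paper offers only supporting evidence --- Theorem~\ref{thm:1} (the case $D+|\nu|=1$), numerical verification for $m\leq 5$ in Appendix~\ref{app:table} (where even the degree bound $2D$ is explicitly \emph{not} derived, only checked case by case), the quasimodularity result of Theorem~\ref{thm:IR} from~\cite{IR}, and an announced future publication constructing the $f_{D,\nu}$. You correctly frame your text as a strategy rather than a proof, and its ingredients are well aligned with the paper's toolkit: the identity $E_k(\lambda;\epsilon,0)=\langle s_\lambda,\wh G_k(\epsilon,0)\,r_\lambda\rangle$ does follow from the normalization in Eqs.~\eqref{eq:spectrum1} and~\eqref{eq:gauge}, and the route through the boson--fermion correspondence combined with the Buryak--Rossi recursion is exactly how the paper handles the orders $\epsilon^0$ and $\epsilon^1$.

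As a proof, however, every load-bearing step remains open, and you should be aware of where each one stands. First, your claim that the $\epsilon^m$-coefficient of $\wh G_k(\epsilon,0)$ is fermionic of degree $2(m+1)$ is only established in the paper for $m=0$ (Proposition~\ref{prop:G0fermion}) and $m=1$ (Proposition~\ref{prop:G1fermion}), and already the quartic case required the nontrivial polynomial recursion of Lemma~\ref{lem:ABpols}; nothing in your outline supplies the analogous structure for $m\geq 2$. Second --- and this is the gap you rightly flag as the main obstacle, but it is worth being precise about why it is hard --- for $m\geq 2$ the eigenvalue is not the diagonal part of $\wh G_k^{[m]}$: the Rayleigh--Schr\"odinger sums carry energy denominators which, by Lemma~\ref{lemma:simple2} and Theorem~\ref{thm:2b}, are of the form $Q_3(\lambda)-Q_3(\mu)=w_{\lambda,\mu}\,|\gamma_1|\,|\gamma_2|$, i.e.\ products of border-strip sizes. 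Summing over intermediate states therefore produces, a priori, \emph{rational} rather than polynomial expressions in strip data, and no mechanism you describe forces the collapse into $\SSF$, let alone into the very rigid shape $f_{D,\nu}(k)\,Q_{k+D+2-\ell(\nu)}\prod_i Q_{\nu_i+1}$ with a single $k$-dependent factor. Your hope that the recursion of Section~\ref{sec:rec} acts as a raising operation on one $Q$-index is unsubstantiated: already at order $\epsilon^1$ the recursion mixes quartic and quadratic pieces through the polynomials $P_k$, $R_k$, $T_k$ of Lemma~\ref{lem:ABpols}, and there is no reason given why a one-factor structure survives such mixing at higher order. Third, your weight-counting for $\deg f_{D,\nu}$ is at best a heuristic upper bound, and (as you concede) the exactness of the degree requires a non-vanishing argument that is nowhere in sight --- indeed the paper itself treats the degree as part of the conjecture. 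In short: a sensible research program, consistent with the paper's methods and with the authors' announced plans, but the conjecture's core content is untouched by it.
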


\begin{remark}
Note that by Lemma~\ref{lemma:c0}, we have $E_k(\lambda;\epsilon,c)=\sum_{\ell=0}^k\frac{c^\ell}{\ell!}E_{k-\ell}(\lambda;\epsilon,0)$, such that the restriction to $c=0$ in this conjecture does not imply any loss of generality.
\end{remark}

\begin{example}
We have
\be
f_{0,\emptyset}(k) = 1, \quad f_{1,\emptyset}(k)=\frac{1}{24}k(k+3), \quad \text{and} \quad f_{0,(1)}(k) = \frac{1}{12},
\ee
and with these polynomials we recover Dubrovin's result in Eq.~\eqref{eq:E0}, and Theorem~\ref{thm:1}. 
\end{example}
See Appendix~\ref{app:table} for a table of such polynomials, which give a conjectural expression for~$E_k^{[m]}$ up to $m\leq 5$.
For example, we conjecture 
\begin{multline}\label{eq:conjE2}
E_k^{[2]}(\,\cdot\,;0)\= \frac 1{24^2}\biggl(2 Q_3Q_{k+1}\+2 Q_2^2Q_{k}\+2(k^2+2k-1)Q_2Q_{k+2} \\
\+\frac 1{6}k(k+4)(3k^2+16k+17)Q_{k+4}\biggr)\,.
\end{multline}
We plan to provide an explicit construction for the polynomials~$f_{D,\nu}$ in a future publication in collaboration with Don Zagier.

\begin{remark}
The arguments in the proof of Corollary~\ref{corollary:doubleHodge} can be easily adapted to show that Conjecture~\ref{conj} implies the existence of polynomials~$p_{r,s}$ of degree $2(s-r)$ for all integers $r,s$ with $0\leq r\leq s$ such that
\be
\label{eq:conjHodge}
\int_{\overline{\mathcal M}_{g,1}}\lambda_s\lambda_g\psi_1^{2g-2-s} \:=\:(-1)^g\sum_{r=0}^{s}p_{r,s}(g)\,\beta_{2(g-r)}\qquad (s,g\in \mathbb{N} \text{ with } s<g).
\ee
For example, the conjectural Eq.~\eqref{eq:conjE2} yields
\be
\int_{\overline{\mathcal M}_{g,1}}\!\!\lambda_2\lambda_g\psi_1^{2g-4} \:=\: \frac {(-1)^g}{24^2}\biggl(\frac{\beta_{2g-4}}{288}-\frac{(4g^2-12g+7)\beta_{2g-2}}{12}+\frac {2 g(g-2)(12g^2-16g+1)\beta_{2g}}3\biggr)
\ee
for $g>2$.
It is interesting to compare the expression~\eqref{eq:conjHodge} with known formulas for Hodge integrals, for example 
\cite{Fab,GP}.
\end{remark}

Supporting evidence for the above conjecture, in addition to Theorem~\ref{thm:1} and~Appendix~\ref{app:table}, is provided in our previous work~\cite{IR} through a connection between KdV Hamiltonian operators and quasimodular forms which we briefly review now.

Let us first recall the following theorem of Bloch and Okounkov~\cite{BO} (see also~\cite{Zagier}) concerning the $q$-bracket of functions on partitions, given by
\be
\fq f \= \frac{\sum_{\lambda\in\partitions}f(\lambda)\,q^{|\lambda|}}{\sum_{\lambda\in\partitions}q^{|\lambda|}} \in \mathbb{Q}\llbracket q \rrbracket
\ee
for any function $f:\partitions\to\Q$ on the set~$\partitions$ of all partitions. Write~$\widetilde{M}_k$ for the space of all weight~$k$ quasimodular forms for the full modular group~$\mathrm{SL}_2(\Z)$.

\begin{theorem}[\cite{BO}]\label{thm:BO}
For all $k\geq 0$ and all $f\in\SSF_k$, we have $\langle f \rangle_q \in \widetilde{M}_k\mspace{1mu}.$
\end{theorem}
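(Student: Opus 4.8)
The plan is to follow the strategy of Bloch and Okounkov. Since the $q$-bracket $\fq{\,\cdot\,}$ is $\Q$-linear, since the spaces $\wt M_k$ assemble into the graded ring $\wt M_* = \bigoplus_{k\geq 0}\wt M_k = \Q[E_2,E_4,E_6]$, and since $\SSF = \Q[Q_2,Q_3,\dots]$ is spanned by the weight-graded monomials, it suffices to prove that for every finite sequence $k_1,\dots,k_n\geq 2$ the bracket $\fq{Q_{k_1}\cdots Q_{k_n}}$ is quasimodular of weight $k_1+\dots+k_n$. The key point is that the $q$-bracket is \emph{not} multiplicative, so these products must be handled all at once; this is what forces the introduction of an $n$-point generating function. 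First I would package the generators into the single series $T_\lambda(z) = \sum_{i\geq 1}\e^{(\lambda_i-i+1/2)z}$ (as a regularized sum), whose $z$-expansion reproduces $\sum_{k\geq 1}Q_k(\lambda)z^{k-1}$ up to the explicit $\beta_k$- and regularization terms read off from Eq.~\eqref{eq:Qk}. The object to evaluate is then the $n$-point function $\fq{\prod_{j=1}^n T_\lambda(z_j)}$, from which each $\fq{Q_{k_1}\cdots Q_{k_n}}$ is recovered as the coefficient of $z_1^{k_1-1}\cdots z_n^{k_n-1}$.

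Second, I would evaluate this $n$-point function using the boson--fermion correspondence recalled in Section~\ref{sec:BF}. Under it a partition $\lambda$ corresponds to its charge-zero Maya diagram, the subset $S_\lambda = \{\lambda_i-i+\tfrac12:i\geq 1\}\subset\Z+\tfrac12$, and $T_\lambda(z)=\sum_{s\in S_\lambda}\e^{sz}$ is the eigenvalue on the basis vector $v_\lambda$ of the diagonal fermion bilinear $\mathcal E(z)=\sum_{s\in\Z+1/2}\e^{sz}\,\NO{\psi_s\psi_s^*}+\bigl(2\sinh(z/2)\bigr)^{-1}$. Hence, writing $H$ for the energy operator with $q^H v_\lambda = q^{|\lambda|}v_\lambda$,
\[
\fq{\textstyle\prod_{j=1}^n T_\lambda(z_j)}=\frac{\tr\bigl(q^H\prod_{j=1}^n\mathcal E(z_j)\bigr)}{\tr(q^H)},\qquad \tr(q^H)=\prod_{m\geq 1}(1-q^m)^{-1}.
\]
This trace over the infinite wedge is a free-fermion (grand-canonical) average, so by Wick's theorem together with the commutation relations of the operators $\mathcal E(z)$ it evaluates in closed form as an explicit finite combination of products of quotients of Jacobi theta functions $\vartheta(z,\tau)$ (with $q=\e^{2\pi\i\tau}$) and their $z$-derivatives.

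Finally I would extract quasimodularity from this closed form. Expanding the theta quotient as a Taylor series in the $z_j$, the coefficients are built from the logarithmic $z$-derivatives of $\vartheta(z,\tau)$ at $z=0$, which by the classical product and $q$-expansion of $\vartheta$ are polynomials in $E_2,E_4,E_6$ and hence quasimodular. Tracking the modular weight --- the theta quotient carries a fixed weight and extracting each power of $z_j$ shifts it in a controlled way, treating $z_j$ as a weight $-1$ elliptic variable --- shows that the coefficient of $z_1^{k_1-1}\cdots z_n^{k_n-1}$ is homogeneous of weight $k_1+\dots+k_n$. I expect the main obstacle to be the closed-form evaluation of the $n$-point trace: one must treat the regularizing constants $\bigl(2\sinh(z/2)\bigr)^{-1}$ carefully, organize the Wick contractions into a clean theta determinant, and then carry out the weight bookkeeping in the $z$-expansion precisely enough to land on the exact weight, rather than merely concluding quasimodularity of some weight.
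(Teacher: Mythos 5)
Theorem~\ref{thm:BO} is quoted in the paper from~\cite{BO} without proof, and your sketch faithfully reconstructs Bloch--Okounkov's original argument: your $T_\lambda(z)$ is exactly their regularized generating series (correctly, the $\beta_k$-terms from Eq.~\eqref{eq:betagen} cancel the $\bigl(2\sinh(z/2)\bigr)^{-1}$ regularization, so $\sum_{k\geq 1}Q_k(\lambda)z^{k-1}$ is the eigenvalue of your $\mathcal E(z)$ on $v_\lambda$), the $q$-bracket of the product becomes the trace $\tr\bigl(q^H\prod_j\mathcal E(z_j)\bigr)/\tr(q^H)$ over the infinite wedge, which evaluates to a theta-function expression whose Taylor coefficients are quasimodular of the stated weight with $z_j$ counted in weight~$-1$. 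This is essentially the same approach as the cited source, with the only caveat being the one you already flagged: \cite{BO} obtain the closed form of the $n$-point function via commutation recursions rather than a naive thermal Wick expansion, so that step carries the real technical work.
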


Because of this theorem, the validity of Conjecture~\ref{conj} would imply the quasimodularity of the $q$-brackets of the KdV eigenvalues. 
This has been independently verified in our previous work.
We recall the relevant result as it will be useful later on.
To this end, we extend the grading on the algebra of quasimodular forms to $\widetilde{M}[c, \epsilon]$ by assigning weight~$1$ to~$c$ and weight~$-1$ to~$\epsilon$.

\begin{theorem}[\cite{IR}]\label{thm:quasimodular}\label{thm:IR}
For all $k\geq 0$, we have $\big\langle {E_k}(\,\cdot\,;\epsilon,c) \big\rangle_q \in \widetilde{M}[c, \epsilon]_{k+2}\mspace{1mu}.$
\end{theorem}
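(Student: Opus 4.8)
The plan is to reinterpret the $q$-bracket of the eigenvalues as a graded trace of the Hamiltonian operator on $\mathsf B$, and then to evaluate that trace fermionically.

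Since $\wh G_k(\epsilon,c)$ preserves the grading $\mathsf B=\bigoplus_n\mathsf B_n$ and has the $E_k(\lambda;\epsilon,c)$ as its eigenvalues (on the homogeneous eigenbasis $r_\lambda\in\mathsf B_{|\lambda|}$), the sum of eigenvalues in degree $n$ equals the trace of $\wh G_k(\epsilon,c)$ restricted to $\mathsf B_n$. Writing $L_0=\sum_{j\geq1}j\,p_j\,\pa_{p_j}$ for the energy operator, so that $q^{L_0}$ acts as $q^{|\lambda|}$ on $\mathsf B_{|\lambda|}$, basis-independence of the trace yields
\[
\sum_{\lambda\in\partitions}E_k(\lambda;\epsilon,c)\,q^{|\lambda|}\=\tr[\mathsf B]\bigl(\wh G_k(\epsilon,c)\,q^{L_0}\bigr).
\]
Together with $\sum_{\lambda}q^{|\lambda|}=\tr[\mathsf B](q^{L_0})=1/(q;q)_\infty$, this gives the closed formula $\langle E_k(\,\cdot\,;\epsilon,c)\rangle_q=(q;q)_\infty\,\tr[\mathsf B]\bigl(\wh G_k(\epsilon,c)\,q^{L_0}\bigr)$. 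The virtue of this reduction is that it bypasses the (still conjectural) shifted-symmetry of the individual coefficients $E_k^{[m]}(\,\cdot\,;c)$: it suffices to show that the normalized trace is quasimodular of weight $k+2$.

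Next I would pass to the fermionic Fock space via the boson-fermion correspondence (Section~\ref{sec:BF}), under which $L_0$ becomes the fermionic energy and $\wh G_k(\epsilon,c)$ becomes a finite sum of normal-ordered products of fermion modes---quadratic at order $\epsilon^0$ (as in Rossi and Dubrovin) and of higher, in particular quartic, degree at higher orders in $\epsilon$. The graded trace against the Boltzmann weight $q^{L_0}$ is then a finite-temperature free-fermion correlator, which Wick's theorem expands into sums of products of two-point functions. Each two-point function is a geometric series in $q$ whose resummation produces an Eisenstein series; the quadratic part yields the weight $k+2$ Eisenstein series together with their $c$-dependent companions, while the quartic corrections yield products of Eisenstein series. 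Since products of Eisenstein series span the algebra of quasimodular forms, every contribution is quasimodular. Tracking weights, the explicit $\epsilon$- and $c$-dependence of $g_k$ (cf.\ Example~\ref{ex1} and the recursion of Section~\ref{sec:rec}) fixes the powers of $\epsilon$ and $c$ attached to each fermion monomial, while the modular weight of each resummed propagator is determined by the mode numbers it sums over; combining these, one verifies that every term of $\tr[\mathsf B]\bigl(\wh G_k(\epsilon,c)\,q^{L_0}\bigr)$, after multiplication by $(q;q)_\infty$, is homogeneous of weight $k+2$ in $\widetilde M[c,\epsilon]$ with $\mathrm{wt}(c)=1$ and $\mathrm{wt}(\epsilon)=-1$, matching the weights of $E_k^{[0]}$ in Eq.~\eqref{eq:E0} and of $E_k^{[1]}$ in Eq.~\eqref{eq:E1}.

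The main obstacle is the evaluation of the quartic fermion contributions: Wick's theorem produces several distinct contraction patterns, including disconnected ones that must be reconciled with the normal-ordering subtractions, and each connected pattern must be resummed into the correct Eisenstein series. Keeping this bookkeeping consistent---and confirming that the total closes within the quasimodular algebra (where the weight-$2$ Eisenstein series is only quasimodular, not modular) at exactly weight $k+2$---is the delicate part of the argument.
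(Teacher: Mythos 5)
First, a point of order: this paper does not itself prove Theorem~\ref{thm:IR} --- the statement is imported from \cite{IR} and used here as a black box (for instance, to fix the constant $\gamma_k$ at the end of the proof of Theorem~\ref{thm:1}). So your proposal can only be compared with the argument of \cite{IR}. Your opening reduction is correct and is indeed the natural entry point: since the $r_\lambda$ form an eigenbasis of $\mathsf B_n\llbracket\epsilon\rrbracket$ deforming the Schur basis, the identity $\sum_{\lambda}E_k(\lambda;\epsilon,c)\,q^{|\lambda|}=\tr[\mathsf B]\bigl(\wh G_k(\epsilon,c)\,q^{L_0}\bigr)$ holds coefficientwise in $\epsilon$ by basis-independence of the trace, and you are right that this bypasses the conjectural shifted-symmetry of the individual $E_k^{[m]}$.

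The genuine gap is in your fermionic evaluation of the trace. Thermal Wick factorization with geometric-series (Fermi--Dirac) propagators is valid for the Gaussian weight $q^{L_0}$ on the \emph{full} Fock space $\mathsf F$, but the boson-fermion correspondence identifies $\mathsf B$ with the charge-zero sector $\mathsf F_0$ only, and the projector onto $\ker\alpha_0$ is not Gaussian: Wick's theorem does not apply to $\tr[\mathsf F_0]$, and the finite-temperature two-point function on $\mathsf F_0$ is \emph{not} a geometric series in $q$. The standard repair --- inserting a charge fugacity $u^{\alpha_0}$, computing on all of $\mathsf F$, where the propagators become $u$-dependent theta quotients, and then extracting the coefficient of $u^0$ --- is precisely the nontrivial mechanism underlying the Bloch--Okounkov theorem; quasimodularity for $\mathrm{SL}_2(\Z)$ emerges only after this constant-term extraction, so your sketch implicitly re-proves Theorem~\ref{thm:BO} without supplying the hard step. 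The omission is not cosmetic: if one Wick-contracts on all of $\mathsf F$ without the charge-zero projection, the propagators $q^r/(1+q^r)$ with half-integral $r\in\F^+$ resum to Eisenstein-type series for a congruence subgroup, not to elements of $\widetilde{M}_{k+2}$ for the full modular group as the theorem asserts.

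Two remarks on how to close the gap. Either run the correlator computation through the fugacity/constant-term machinery (equivalently, reduce the quadratic and quartic fermionic insertions of Propositions~\ref{prop:G0fermion} and~\ref{prop:G1fermion} to Bloch--Okounkov $n$-point functions, whose quasimodularity is known), or --- more economically --- stay on the bosonic side, where $q^{L_0}$ genuinely factorizes over the independent variables $p_j$: the diagonal part of each normally ordered monomial $\NO{\omega_{a_1}\cdots\omega_{a_n}}$ is computable mode by mode, and the trace reduces to sums $\sum_{a\geq 1}P(a)\,q^a/(1-q^a)$ with $P$ an even polynomial (evenness coming from the parity of the integrals in Eq.~\eqref{eq:I} in the $a_i$, cf.\ Remark~\ref{rem:normalizationdiff} and Section~\ref{sec:diffpoly}), i.e., to honest Eisenstein series. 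In either version, the weight homogeneity --- which you assert via ``mode-number bookkeeping'' but never derive --- must come from the dimensional constraints on Eq.~\eqref{eq:I} (degree at most $2g$ in the $a_i$, paired with $\hbar^g$ and $g+n\leq k+2$), with $\mathrm{wt}(c)=1$ and $\mathrm{wt}(\epsilon)=-1$; as written, this bookkeeping is the second unproven step of your proposal.
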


\begin{remark}
An intriguing aspect of this result is that the eigenvalues $E_k(\cdot;\epsilon,0)$ (taking $c=0$ for simplicity) interpolate between the generators $Q_{k}$ of the algebra of shifted symmetric functions at $\epsilon=0$ and the generators $S_k$ of the algebra of symmetric functions (namely, $S_k(\lambda)=-\frac{B_k}{2k}+\sum_i\lambda_i^{k-1}$) at $\epsilon=\infty$ (cf.~\cite[Appendix~A]{IR}).  
Both these algebras of functions on partitions are \emph{quasimodular algebras}, in the sense of~\cite{vI}, namely, $q$-brackets of homogeneous elements are quasimodular forms of pure weight. No other quasimodular algebras are known for the full modular group~$\mathrm{SL}_2(\Z)$.
Then, Theorem~\ref{thm:quasimodular} asserts that this $\epsilon$-deformation respects quasimodularity.
\end{remark}

Finally, given the commutativity~\eqref{eq:integrability}, Conjecture~\ref{conj} would imply the stronger result that for all $k_1,\ldots,k_r\geq 0$, we have
\be
\label{eq:conjeq}
\Big\langle \prod_{i}{E_{k_i}} (\,\cdot\,;\epsilon,c) \Big\rangle_q \in \widetilde{M}[c, \epsilon]_{\sum_i (k_i+2)}\ \raisebox{-5pt}{.}
\ee

Note that when $\epsilon=0$, Eq.~\eqref{eq:conjeq} coincides with the quasimodularity property first established by Okounkov and Pandharipande~\cite[Section~5]{OkounkovPandharipande} for the Gromov--Witten theory of an elliptic curve (see also~\cite{Rossi}).
Then, the conjectural relation~\eqref{eq:conjeq} would imply that this quasimodularity property persists under suitable insertion of Hodge classes.

\subsection*{Outline of the rest of the paper}

In Section~\ref{sec:backgroundqDR} and~\ref{sec:backgroundBF} we recall some preliminary material on quantum double ramification hierarchies, partitions, and the boson-fermion correspondence, needed for proving our main theorems.
Theorem~\ref{thm:1} and Corollary~\ref{corollary:doubleHodge} are proven in Section~\ref{sec:proof1}.
We prove Theorem~\ref{thm:2} in Section~\ref{sec:proof2}.
Finally, Appendix~\ref{app:table} contains supplemental data to Conjecture~\ref{conj}.

\section*{Acknowledgments}
We thank Konstantin Aleshkin, Alexandr Buryak, Giordano Cotti, Davide Masoero, Paolo Rossi, Di Yang, and Don Zagier for valuable discussions.
A large part of this work was completed during our stay at the Max Planck Institute for Mathematics in Bonn, which we wish to thank for the excellent working conditions.

The first author was supported by the SFB/TRR 191 ``Symplectic Structure in Geometry, Algebra and Dynamics'', funded by the DFG (Projektnummer 281071066 TRR 191). He also thanks the Grupo de F\'isica Matem\'atica of Lisbon for their support during a one-week visit to Instituto Superior T\'ecnico, to which he extends his gratitude for their hospitality.

The second author is supported by the FCT grant 2022.07810.CEECIND, and is grateful to the Grupo de F\'isica Matem\'atica (UIDB/00208/2020, UIDP/00208/2020, DOI: 10.54499/UIDB/00208/2020, 10.54499/UIDP/00208/2020) for support.

\section{Quantum double ramification hierarchies}\label{sec:backgroundqDR}

In this section, we recall several aspects of quantum double ramification hierarchies.

\subsection{Recursion}\label{sec:rec}
We recall the following result from~\cite[Theorem~3.5]{BR} (with small changes due to slightly different normalizations, for which we refer to Remark~\ref{rem:normalizationdiff}).
To this end, it is convenient to define
\be
\label{eq:initialdatumBRrec}
g_{-1}(\epsilon;z)\=\sum_{a\in\Z}\omega_a z^a.
\ee

\begin{theorem}[\cite{BR}]
We have 
\be
\label{eq:BRrec}
-\biggl(k+2+\epsilon\frac{\partial}{\partial\epsilon}\biggr)\, z\pa_z\, g_{k+1}(\epsilon;z) \= \frac 1{\hbar}\bigl [g_{k}(\epsilon;z),G_1(\epsilon)\bigr] 
\qquad (k\geq -1)
\ee
and
\be
\label{eq:G1rec}
G_1(\epsilon) \= \,\frac 16\sum_{a_1,a_2\in\Z}\NO{\omega_{a_1}\omega_{a_2}\omega_{-a_1-a_2}}+\frac\epsilon{24}\sum_{a\in\Z}a^2\NO{\omega_{a}\omega_{-a}}-\frac{\hbar}{24}\omega_0+\frac{\epsilon\hbar}{2880}.
\ee
\end{theorem}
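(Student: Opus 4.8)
The plan is to establish the two displayed identities by quite different means. The formula \eqref{eq:G1rec} for $G_1$ is immediate from Example~\ref{ex1}. Since $G_1=\mathrm{Res}_z\,g_1(\epsilon;z)\,\frac{\d z}z$ extracts the coefficient of $z^0$ in $g_1$, I would simply impose $\sum_\ell a_\ell=0$ in each sum of \eqref{eq:g1example}: the cubic term forces $a_3=-a_1-a_2$; the quadratic term forces $a_2=-a_1$, so that $\tfrac{a_1^2+a_2^2}2=a_1^2$; the quartic term $\sum_a a^4\omega_a$ and the linear term $\sum_a(a^2-1)\omega_a$ collapse to their $a=0$ values, contributing $0$ and $-\tfrac{\hbar}{24}\omega_0$ respectively; and the constant $\tfrac{\epsilon\hbar}{2880}$ survives. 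This reproduces \eqref{eq:G1rec} term by term.

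The substance lies in the recursion \eqref{eq:BRrec}. I would first unfold the right-hand side: writing $g_k$ and $G_1$ as normal-ordered series and applying the commutation relations \eqref{eq:ccr}, the commutator $\frac1\hbar[g_k,G_1]$ retains precisely the terms in which at least one pair of indices is Wick-contracted between the two factors, the fully disconnected part cancelling. Each contraction carries a factor $-a\,\delta_{a,-b}\,\hbar$, so the prefactor $\hbar^{-1}$ is absorbed and the number of contractions is organized by the genus expansion. Geometrically, contracting two indices corresponds to gluing the associated marked points, i.e.\ to a pushforward along a gluing or forgetful morphism, again producing double ramification intersection numbers but now with the $\psi$- and $\lambda$-insertions carried by $G_1$.

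The conceptual core, and the step I expect to be the main obstacle, is to match the glued numbers obtained in this way with the $\psi_1^{k+1}$-intersection numbers defining $g_{k+1}$. The mechanism to exploit is that $G_1$ is, up to its lower-order $\epsilon$- and $\hbar$-corrections, the Hamiltonian carrying a single $\psi_1$-insertion, so that gluing effectively converts the factor $\psi_1^k$ of $g_k$ into $\psi_1^{k+1}=\psi_1\cdot\psi_1^k$. Turning this into an exact identity requires the string and dilaton equations together with the behaviour of the cycle $\mathrm{DR}_g$ under forgetful and gluing maps, as in \cite{BSSZ,JPPZ}; the Hodge factor $1+y\lambda_1+\cdots$ is compatible with these operations because the $\lambda$-classes restrict well along the relevant morphisms. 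The two operators on the left are then degree bookkeeping: $z\pa_z$ records the total index $\sum_\ell a_\ell$, which the commutator preserves, while $k+2+\epsilon\pa_\epsilon$ acts on each monomial of $g_{k+1}$ by the geometric degree $2g+n-1$, using both the dimensional constraint and the identification of the $\epsilon$-degree with the number of $\lambda$-classes. The delicate part is summing the contributions of the several boundary strata with the correct signs and multiplicities so that exactly these rational weights emerge.

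Finally I would anchor the recursion at $k=-1$. With $g_{-1}=\sum_a\omega_a z^a$ linear in the $\omega$'s, the commutator $\frac1\hbar[g_{-1},G_1]$ is computable from \eqref{eq:ccr} alone, and one verifies directly that it equals $-(1+\epsilon\pa_\epsilon)\,z\pa_z\,g_0$ using the explicit $g_0$ of Example~\ref{ex1}. Propagating \eqref{eq:BRrec} upward then determines every $g_k$, which is exactly the use to which it is put in Section~\ref{sec:proof1}.
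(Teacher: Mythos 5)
Your derivation of \eqref{eq:G1rec} is fine: extracting the coefficient of $z^0$ from Eq.~\eqref{eq:g1example} term by term is exactly right, and the check is legitimate since Example~\ref{ex1} is computed independently of the recursion. Note, however, that the paper does not prove this theorem at all: it is imported verbatim from \cite[Theorem~3.5]{BR}, with Remark~\ref{rem:normalizationdiff} handling the change of normalizations and the remark following the theorem explaining that the dimensional constraint in Eq.~\eqref{eq:I} converts Buryak--Rossi's operator $D=2\epsilon\partial_\epsilon+2\hbar\partial_\hbar+\sum_k u_k\partial_{u_k}$ into $k+2+\epsilon\frac{\partial}{\partial\epsilon}$, cf.~\cite[Eq.~(4.7)]{RY}. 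Your bookkeeping paragraph (the operator acting by $2g+n-1$ on each monomial, using the identification of $\epsilon$-degree with the number of $\lambda$-classes) correctly reproduces that translation step, which is the only ``proof content'' the paper itself supplies.

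As a standalone proof of \eqref{eq:BRrec}, though, your proposal has a genuine gap. The decisive step --- matching the glued double ramification numbers produced by the Wick contractions in $\frac1\hbar[g_k,G_1]$ against the $\psi_1^{k+1}$-intersection numbers defining $g_{k+1}$, with the correct signs and multiplicities, via the splitting formula of \cite{BSSZ} for $\psi$-classes restricted to DR cycles --- is precisely what you flag as ``the main obstacle'' and ``the delicate part'' and then do not carry out; but this matching \emph{is} the theorem, and everything else in your sketch is framing. Moreover, the closing move of anchoring at $k=-1$ and ``propagating upward'' cannot repair this: both sides of \eqref{eq:BRrec} are independently defined for every $k$ (the left-hand side via Eq.~\eqref{eq:gk}), so verifying the base case $k=-1$ establishes nothing about $k\geq 0$ --- an inductive propagation would only be available if $g_{k+1}$ were \emph{defined} by the recursion, which is how the identity is \emph{used} in Section~\ref{sec:rec}, not how it is proved. (A small further slip: a contraction of indices corresponds to a gluing morphism only; the forgetful map plays no role there.) In short, the $G_1$ half is correct, while the recursion half is a plausible roadmap of the geometric argument in \cite{BR} whose core computation is missing; the paper itself sidesteps all of this by citation.
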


\begin{remark}
The formulation in~\cite[Theorem~3.5]{BR} is given in terms of differential polynomials (cf.~Section~\ref{sec:diffpoly}) and it involves the operator $D=2\epsilon\partial_\epsilon+2\hbar\partial_{\hbar}+\sum_{k\geq 0}u_k\partial_{u_k}$, acting on polynomials in $\bs u=(u_0,u_1,\dots)$ also depending on parameters $\epsilon,\hbar$.
We have made use of the dimensional constraint in Eq.~\eqref{eq:I} to rewrite the recursion as in Eq.~\eqref{eq:BRrec}, cf.~\cite[Eq.~(4.7)]{RY}.
\end{remark}
Formula~\eqref{eq:BRrec}, together with Eq.~\eqref{eq:initialdatumBRrec}, allows us to recursively determine $g_k(\epsilon;z)$ up to a constant depending on $\epsilon,\hbar$, but not on the $\omega_a\mspace{1mu}$, for all $k\geq 0$, as explained in~\cite[Section~3.5]{BR}.
Here, to compute the constant term in $z$ of $g_k(\epsilon,z)$ (which in principle is killed by the operator $z\partial_z$ on the left-hand side of the recursion) one uses the fact that the coefficients of Eq.~\eqref{eq:gk} given in Eq.~\eqref{eq:I} are polynomials in $a_1,\dots,a_n\mspace{1mu}$ (see Section~\ref{sec:diffpoly}).
Finally, note that the recursion is insensitive to the actual value of the constant depending on $\epsilon,\hbar$ only, and that the value of such constant can finally be recovered by either the string equation~\cite[Lemma~3.7]{BR}
\be
\label{eq:string}
\frac{\partial g_{k+1}(\epsilon;z)}{\partial \omega_0} \= g_k(\epsilon;z), \qquad (k\geq -1)
\ee
or by invoking the homogeneity expressed by Theorem~\ref{thm:quasimodular}.

It will be convenient to introduce $g_k^{[j]}$ and $G_k^{[j]}$ for $j,k\geq 0$ by
\be
\label{eq:decomp}
g_k(\epsilon;z)\=\sum_{j=0}^{k+1}g_k^{[j]}(z)\biggl(\frac{\epsilon}{24}\biggr)^j,\quad
G_k(\epsilon)\=\sum_{j=0}^{k}G_k^{[j]}\biggl(\frac{\epsilon}{24}\biggr)^j,\qquad 
\ee
such that we can express Eq.~\eqref{eq:BRrec} as
\be
\label{eq:BRrechomogeneous}
{-}\hbar\,(k+2+j)\,z\pa_z\, g_{k+1}^{[j]}(z)\=\bigl[g_k^{[j]}(z),G_1^{[0]}\bigr]\+\bigl[g_k^{[j-1]}(z),G_1^{[1]}\bigr]\qquad (k\geq 0,\ j\geq 0).
\ee
(We agree that $g_k^{[-1]}(z)=0$.)
The upper bounds of summation of the sums in Eq.~\eqref{eq:decomp} stem from the dimensional constraint in Eq.~\eqref{eq:I}, as well as from the fact that ${\rm DR}_g(0)$ is Poincar\'e dual to $(-1)^g\lambda_g\mspace{1mu}$, cf.~\cite[Section~0.5.3]{JPPZ}, combined with the well-known identity $\lambda_g^2=0$ for $g\geq 0$.

\subsection{Reduction to \texorpdfstring{$c=0$}{c=0}}\label{sec:reduction}

In what follows we set $c=0$ in the representation $\rho_c$ defined by Eq.~\eqref{eq:repB}.
This is without loss of generality, as we now show.

\begin{lemma}\label{lemma:c0}
We have
\be
\wh G_k(\epsilon,c) \=\sum_{\ell=0}^{k}\frac{c^\ell}{\ell!}\,\wh G_{k-\ell}(\epsilon,0).
\ee
\end{lemma}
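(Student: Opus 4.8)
The key observation is that the parameter $c$ enters the representation $\rho_c$ only through the action of $\omega_0$, which acts as multiplication by $c$ rather than by $0$. The plan is to track how this single difference propagates through the definition of $\wh G_k(\epsilon,c) = \rho_c(G_k(\epsilon))$. The natural tool is the string equation \eqref{eq:string}, which states $\partial g_{k+1}/\partial\omega_0 = g_k$ and hence, after taking the residue, $\partial G_{k+1}(\epsilon)/\partial\omega_0 = G_k(\epsilon)$ for $k\geq -1$ (with $G_{-1}(\epsilon)=\omega_0$, so that the chain of derivatives terminates correctly). Since differentiation with respect to $\omega_0$ in $\mathcal B$ corresponds, under $\rho_c$, to differentiation with respect to $c$ in $\mathrm{End}\,\mathsf B$, I would first establish the operator identity
\be
\pdv{}{c}\,\wh G_k(\epsilon,c)\=\wh G_{k-1}(\epsilon,c)\qquad(k\geq 1).
\ee
The main technical point to verify here is that $\rho_c$ interacts correctly with the $\omega_0$-derivative: because $\omega_0$ commutes with every $\omega_a$ (by \eqref{eq:ccr}, as $[\omega_0,\omega_b]=-0\cdot\delta_{0,-b}\hbar=0$) and acts as the scalar $c$, one can move the $\partial_{\omega_0}$ past the normal ordering freely, so that $\rho_c(\partial_{\omega_0}G_{k}(\epsilon))=\partial_c\,\rho_c(G_k(\epsilon))$.

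Given this recursion in $c$, the lemma follows by a straightforward integration/Taylor-expansion argument. Viewing $\wh G_k(\epsilon,c)$ as a polynomial in $c$ with operator coefficients, the relation $\partial_c^\ell \wh G_k(\epsilon,c)=\wh G_{k-\ell}(\epsilon,c)$ (iterating the above for $\ell\leq k$, and noting $\wh G_j(\epsilon,c)$ has no $c$-dependence once $j=0$ since $G_0$ involves $\omega_0$ only linearly) gives by Taylor's theorem about $c=0$
\be
\wh G_k(\epsilon,c)\=\sum_{\ell=0}^{k}\frac{c^\ell}{\ell!}\,\Bigl(\pdv{^\ell}{c^\ell}\wh G_k(\epsilon,c)\Bigr)\Big|_{c=0}\=\sum_{\ell=0}^{k}\frac{c^\ell}{\ell!}\,\wh G_{k-\ell}(\epsilon,0).
\ee
The finiteness of the sum is automatic because $\wh G_k(\epsilon,c)$ is polynomial in $c$ of degree at most $k$ (the $\omega_0$-degree of $G_k(\epsilon)$ is bounded by $k$, consistent with the $j=0$ term $E_k^{[0]}$ in \eqref{eq:E0} having top $c$-power $c^{k+2}/(k+2)!$ only after applying the further $z$-residue structure — here one uses that $G_k$, not $g_k$, is the relevant object).

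The step I expect to be the main obstacle is the careful justification that $\rho_c\circ\partial_{\omega_0}=\partial_c\circ\rho_c$ as maps on the completed algebra $\mathcal B$. One must check that the partial derivative $\partial_{\omega_0}$ is well-defined on $\mathcal B$ (it is, since elements are polynomial in $\omega_0$) and that it commutes with the normal-ordering prescription \eqref{eq:no} and with the formal residue $\mathrm{Res}_z$ used to pass from $g_k$ to $G_k$. The residue commutes trivially as it acts only on the $z$-grading; the normal ordering requires the observation above that $\omega_0$ is central. Once these compatibilities are in place, the string equation \eqref{eq:string} transfers directly to the claimed $c$-recursion and the result follows.
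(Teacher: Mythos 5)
Your strategy is essentially the paper's: the paper proves the lemma in one line by observing that $\rho_c(X)=\rho_0\bigl(\e^{c\frac{\partial}{\partial\omega_0}}X\bigr)$ for all $X\in\mathcal B$ and then taking $X=G_k(\epsilon)$ together with the string equation~\eqref{eq:string}; this is exactly the exponentiated form of your infinitesimal recursion $\partial_c\,\wh G_k(\epsilon,c)=\wh G_{k-1}(\epsilon,c)$, and your compatibility check $\rho_c\circ\partial_{\omega_0}=\partial_c\circ\rho_c$ (using that $\omega_0$ is central) is sound and is implicit in the paper's identity.

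The genuine gap is in your truncation of the Taylor expansion at $\ell=k$. The claim that $\wh G_0(\epsilon,c)$ has no $c$-dependence ``since $G_0$ involves $\omega_0$ only linearly'' is false: by Example~\ref{ex1}, $G_0(\epsilon)=\frac12\sum_{a\in\Z}\NO{\omega_a\omega_{-a}}-\frac\hbar{24}$ contains the term $\frac12\omega_0^2$, so $\wh G_0(\epsilon,c)=\frac{c^2}2+\wh G_0(\epsilon,0)$. This even contradicts your own setup: with $G_{-1}=\omega_0$ (the residue of $g_{-1}$), your recursion at $k=0$ gives $\partial_c\wh G_0(\epsilon,c)=\rho_c(\omega_0)=c\neq 0$. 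Likewise, the $\omega_0$-degree of $G_k(\epsilon)$ is $k+2$, not at most $k$: iterating the string equation, $\partial^{k+1}_{\omega_0}g_k=g_{-1}=\sum_a\omega_a z^a$ and $\partial^{k+2}_{\omega_0}g_k=1$, and $\mathrm{Res}_z$ does not lower the $\omega_0$-degree because $\omega_0$ carries $z^0$, so $\partial^{k+2}_{\omega_0}G_k=1$ as well; your parenthetical appeal to ``$G_k$, not $g_k$'' therefore does not rescue the degree bound (the top term $\frac{c^{k+2}}{(k+2)!}Q_0$ in Eq.~\eqref{eq:E0} already exhibits degree $k+2$). Carrying your argument through with the correct boundary terms, $\partial_c^{k+1}\wh G_k(\epsilon,c)=c$ and $\partial_c^{k+2}\wh G_k(\epsilon,c)=1$, so Taylor expansion at $c=0$ yields
\be
\wh G_k(\epsilon,c)\=\sum_{\ell=0}^{k}\frac{c^\ell}{\ell!}\,\wh G_{k-\ell}(\epsilon,0)\+\frac{c^{k+2}}{(k+2)!}\,,
\ee
the $\ell=k+1$ term vanishing since $\rho_0(\omega_0)=0$. (Sanity check at $k=0$, $\epsilon=0$: by Eq.~\eqref{eq:E0} the eigenvalue of $\wh G_0(0,c)$ on $s_\lambda$ is $\frac{c^2}2+Q_2(\lambda)$, while the stated right-hand side of the lemma gives only $Q_2(\lambda)$.) In other words, your method --- like the paper's one-line proof, which is equally silent about the two boundary terms of the string-equation chain --- establishes the stated identity only up to the additive scalar $\frac{c^{k+2}}{(k+2)!}$. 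This constant is $\epsilon$-independent and hence harmless for the lemma's actual use (the reduction of Theorem~\ref{thm:1} to Eq.~\eqref{eq:E1simple} concerns the coefficient of $\epsilon$), but as written your justification of the truncation is a step that fails and must be replaced by the explicit treatment of the $\ell=k+1$ and $\ell=k+2$ terms above.
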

\begin{proof}
By Eq.~\eqref{eq:repB}, for all $X\in\mathcal B$ we have $\rho_c(X) \= \rho_0\bigl(\e^{c\frac{\partial}{\partial\omega_0}}X\bigr)$.
The proof follows by taking $X=G_k(\epsilon)$ and recalling Eq.~\eqref{eq:string}.
\end{proof}

In view of this lemma, Theorem~\ref{thm:1} is equivalent to the identity
\be
\label{eq:E1simple}
E_k^{[1]}(\lambda;0) \= \frac 1{24}\left( 2\,Q_2(\lambda)\,Q_{k+1}(\lambda)\+k(k+3)\,Q_{k+3}(\lambda)\right)\,.
\ee

\subsection{Differential polynomials formalism}\label{sec:diffpoly}
Finally, we recall that there is an equivalent description of the generating functions $g_k(\epsilon;z)$ in terms of differential polynomials.
To explain it, we first recall that the integral of any tautological class against the double ramification cycle $\mathrm{DR}_g(-\sum_l a_l,a_1,\dots,a_n)$ is an even polynomial in the variables $a_i$ of degree at most $2g$ with rational coefficients, cf.~\cite[Appendix~B]{BR}.
In particular, $I_{g,d;a_1,\dots,a_n}(y)$ is an even polynomial in the variables $a_i$.
Therefore, introducing
\be
u_{k}(z)\= \sum_{a\in\Z}a^k\,\omega_a\,z^a\qquad (k\geq 0),
\ee
(agreeing that $0^0=1$) we can define polynomials $g_k^{poly}(\epsilon,\hbar;\bs u)$ ($k\geq 0$) in the variables $\bs u=(u_0,u_1,\dots)$ with rational coefficients by requiring
\be
\label{eq:diffpolyop}
\NO{g_k^{poly}\bigl(\epsilon,\hbar;\bs u(z)\bigr)} \= g_k(\epsilon;z)\qquad (k\geq 0),
\ee
where $\bs u(z)=(u_0(z),u_1(z),\dots)$ and the normal order defined by Eq.~\eqref{eq:no} is extended by linearity.
Note that $u_k(z)=(z\partial_z)^k u_0(z)$ and that $g_k^{poly}(\epsilon,\hbar;\bs u)$ is even with respect to the grading defining by assigning weight $k$ to $u_k$ for $k\geq 0$ (and weight $0$ to $\hbar$ and $\epsilon$).

\begin{example}
We have (cf.~Example~\ref{ex1})
\begin{align}
g_0^{poly}(\epsilon,\hbar;\bs u)\={}&\frac 12 u_0^2+\frac{\epsilon}{24}u_2-\frac\hbar{24},\quad
\\
g_1^{poly}(\epsilon,\hbar;\bs u)\={}&\frac 16 u_0^3+\frac{\epsilon}{24}u_2u_0+\frac{\epsilon^2}{1152}u_4+\frac{\hbar}{24}(u_2-u_0)+\frac{\epsilon\hbar}{2880}.
\end{align}
{With the substitutions $\epsilon\mapsto-\varepsilon^2$, $\hbar\mapsto\i\hbar$, and $u_s\mapsto({-}\i)^su_s$ ($s\geq 0$) (cf.~Remark~\ref{rem:normalizationdiff}) one obtains the Hamiltonian densities in~\cite[Section~4.1.1]{BR}.}
\end{example}

\section{Partitions and the boson-fermion correspondence}\label{sec:backgroundBF}

In this section, we recall some standard notions in the theory of partitions as well as the so-called infinite-wedge formalism and the boson-fermion correspondence.

\subsection{Partitions}
Denote by $\F=\Z+\tfrac 12$ the set of half-integers, and for a set $X\subseteq\F$ write $X^\pm=\{a\in X \mid \pm a>0\}$.
As already mentioned, we denote by~$\partitions$ the set of all partitions.
For $\lambda\in\partitions$ we denote by $|\lambda|:=\sum_{i}\lambda_i$ the size of $\lambda$ and by $\lambda'$ the transposed partition. Each partition is represented by its Young diagram~$Y_\lambda\mspace{1mu}$, given by
\be
Y_\lambda \= \lbrace (x,y)\in \Z^2\mid\, 1\leq y\leq\lambda_x\rbrace \,.
\ee
We define the \emph{arm-length} $a(\xi)$, \emph{leg-length} $b(\xi)$ and \emph{hook-length} $h(\xi)$ of $\xi=(x,y) \in Y_\lambda$ by
\be
a(\xi) \= \lambda_{x}-y,\quad 
b(\xi)\= \lambda_{y}'-x,\quad
h(\xi)\=a(\xi)+b(\xi)+1.
\ee
Moreover, given partitions $\lambda,\sigma$ with
$Y_\sigma \subset Y_\lambda\mspace{1mu}$, we define a skew Young diagram $\lambda / \sigma$ by removing
the cells of~$Y_\sigma$ from the cells of $Y_\lambda\mspace{1mu}$. We call $\lambda/\sigma$ a \emph{border strip}, if it
is connected (through edges of boxes, not only through vertices) and if it does not
contain a $2\times 2$ block.
The 
\emph{height} $\mathrm{ht}(\gamma)$ of a border strip $\gamma$ is the number of rows it touches minus $1$.
Moreover, the \emph{size} $|\gamma|$ of a border strip $\gamma$ is the number of boxes it consists of.
Note there is a bijection between cells $\xi\in Y_\lambda$ and border strips $\gamma$ contained in $\lambda$, such that $\mathrm{ht}(\gamma) = b(\xi)$ and $|\gamma|=h(\xi)$.

The \emph{Frobenius coordinates} of~$\lambda$ are $(a_1,\dots,a_d\mid b_1,\dots,b_d)$, where $a_i=a(i,i)$, $b_i=b(i,i)$ and $d=\max\lbrace i\geq 1\mid (i,i)\in Y_\lambda\rbrace$.
Clearly, the Frobenius coordinates of~$\lambda$ uniquely identify the partition~$\lambda$.
We also define the \emph{modified Frobenius coordinates} of $\lambda$ as
\be
\label{eq:modFrob}
c_i=a_i+\frac 12\in\mathbb F^+,\quad c_i^*=-b_i-\frac 12\in\mathbb F^-\qquad(i=1,\dots,d).
\ee
We will often make use of the set~$C_\lambda$ of modified Frobenius coordinates, i.e.,
\be
C_\lambda=\lbrace c_i\mid\, i=1,\ldots,d\rbrace
\cup\lbrace c_i^*\mid\, i=1,\ldots,d\rbrace\,.
\ee
Observe that $\sum_{c\in C_\lambda} \mathrm{sgn}(c) \, c = |\lambda|.$ More generally, the shifted symmetric functions, defined by Eq.~\eqref{eq:Qk},  equal
\begin{equation} \label{eq:QkC}
Q_k(\lambda) = \beta_k + \frac{1}{(k-1)!}\sum_{c\in C_\lambda} \mathrm{sgn}(c)\,c^{k-1}.
\end{equation}
This follows from assigning the sequence
\begin{align}\label{eq:Slambda} S_\lambda = \bigl(\lambda_j-j+\tfrac{1}{2} )_{j=1}^{\infty} 
\end{align}
to each $\lambda\in \partitions$ and observing that $S_\lambda$ and $C_\lambda$ are related by
\begin{align}\label{eq:SandC}
\F^+ \cap S_\lambda = C_\lambda^+, \qquad
\F^- \setminus S_\lambda = C_\lambda^-.
\end{align}
Denote by $\mathcal{S}$
the set of all decreasing sequences $S=(s_i)_{i=1}^\infty$ of half-integers for which $s_{i+1}=s_i-1$ for sufficiently large $i$, and by $\mathcal{S}_0\subset \mathcal{S}$ the subset of those sequences~$S$ for which the finite sets $\F^+ \cap S$
and $\F^- \setminus S$ have the same cardinality. Note that the map $\lambda \to S_\lambda$ gives a bijection between $\partitions$ and $\mathcal{S}_0$.

\subsection{Infinite wedge formalism}
Let us recall the formalism of the fermionic Fock space~$\mathsf F$; for more details, we refer to~\cite{AZ,MJD,OkounkovPandharipande,RiosZertuche}.
Let $\mathsf F$ be the $\Q$-vector space consisting of finite linear combinations of formal expressions
\be
\label{eq:basisfermionic}
v_S\=\underline {s_1}\wedge\underline {s_2}\wedge\underline {s_3}\wedge\cdots
\ee
associated to $S=(s_i)_{i=1}^\infty\in \mathcal{S}$. 
We endow $\mathsf F$ with the scalar product $\langle\ ,\ \rangle$ for which the basis $\{v_S \mid S\in \mathcal{S}\}$ is orthonormal.

For $k\in\F$ we consider the operator $\psi_k:\mathsf F\to\mathsf F$ defined by
\be
\psi_k \, v_S\= \underline k\wedge\underline {s_1}\wedge\underline {s_2}\wedge\underline {s_3}\wedge\cdots\,,
\ee
where it is understood that we can put this expression in the canonical form~\eqref{eq:basisfermionic} by using the skew-symmetry of $\wedge$.
Namely, the result is zero whenever $k$ is already in $S$, otherwise, we anti-commute $\underline k$ to the right of the $\underline {s_i}$ such that $s_i>k$.

For $k\in\F$, let $\psi^*_k$ be the adjoint of these operators; their action on the basis vectors~$v_S$ is
\be
\psi_k^*\, v_S\=\begin{cases}
(-1)^{i-1}v_{S\setminus\lbrace s_i\rbrace},& \mbox{if }k=s_i\in S,\\
0,&\mbox{if }k\not\in S.
\end{cases}
\ee

The operators $\psi_k,\psi_k^*$ satisfy the \emph{canonical anticommutation relations}, namely: 
\be
\label{eq:anticomm}
\psi_k\psi_\ell\+\psi_\ell\psi_k\=0,\quad
\psi_k^*\psi_\ell^*\+\psi_\ell^*\psi_k^*\=0,\quad
\psi_k\psi_\ell^*\+\psi_\ell^*\psi_k\=\delta_{k\ell}\quad (k,\ell \in\F).
\ee

We shall be in particular interested in the \emph{charge zero subspace} $\mathsf F_0$, spanned by the $v_S$ with $S\in \mathcal{S}_0$.
We identify $v_\lambda$ with $v_{S_{\lambda}}$ under the bijection between $\partitions$ and $\mathcal{S}_0$. In particular, 
$v_\emptyset$ is associated with the set $\{-\tfrac 12,-\tfrac32,-\tfrac 52,\cdots\}$, and every vector $v_\lambda$ can be written as
\be
\label{eq:vlambda}
v_\lambda\= \sigma_\lambda\,\psi_{c_1}\cdots \psi_{c_d}\psi_{c_d^*}^*\cdots\psi_{c_1^*}^*\,v_\emptyset,\qquad \sigma_\lambda\=(-1)^{\sum_{i=1}^db_i},
\ee
where $c_i$ and $c_i^*$ are the modified Frobenius coordinates of $\lambda$, see Eq.~\eqref{eq:modFrob}, and $b_i$ are (part of) the Frobenius coordinates of $\lambda$.
We also have
\be
\label{eq:ann}
\psi_{-a}\, v_\emptyset \= 0,\quad \psi_{a}^*\,v_\emptyset \= 0\qquad (a\in\F^+).
\ee
Accordingly, for $a\in\F^+$, the operators $\psi_{-a},\psi_a^*$ are termed \emph{annihilation operators} and $\psi_a,\psi_{-a}^*$ are termed \emph{creation operators}.

Next, we recall the \emph{fermionic normal order} $\NO{\text{---}}$, which is obtained by moving all annihilation operators to the right of creation operators (as if they commuted), inserting a minus sign each time we swap two operators.
More explicitly, it is defined recursively by
\begin{align}
\nonumber
\NO{\psi_a\phi_1\cdots\phi_{k}}\=
\begin{cases}
\psi_a\NO{\phi_1\cdots\phi_{k}},&\text{if }a>0, \\
(-1)^k\NO{\phi_1\cdots\phi_{k}}\psi_a,&\text{if }a<0,
\end{cases}
\\
\label{eq:NOF}
\NO{\psi_a^*\phi_1\cdots\phi_{k}}\=
\begin{cases}
(-1)^k\NO{\phi_1\cdots\phi_{k}}\psi_a^*,&\text{if }a>0, \\
\psi_a^*\NO{\phi_1\cdots\phi_{k}},&\text{if }a<0,
\end{cases}
\end{align}
where $\phi_1,\ldots,\phi_{k}\in\lbrace\psi_{i} \mid i \in \F\rbrace\cup \lbrace\psi_{i} ^*\mid i \in \F\rbrace$.
Note in particular that
\be
\label{eq:invarianceNO}
\NO{\phi_{\sigma(1)}\cdots \phi_{\sigma(k)}}\=\mathrm{sgn}(\sigma)\,\NO{\phi_1\cdots\phi_k}\,,
\ee
where $\phi_i$ are as above and $\sigma$ is a permutation of $\lbrace 1,\dots, k\rbrace$.
Finally, the formula (cf.~\cite[Section~2.4]{AZ})
\be
\label{eq:wick}
\phi_1\NO{\phi_2\dots\phi_k}
\,\=\,\NO{\phi_1\phi_2\dots\phi_k}\+\sum_{i=2}^k(-1)^i\langle v_\emptyset,\phi_1\phi_i v_\emptyset\rangle\NO{\phi_1\cdots\phi_{i-1}\phi_{i+1}\cdots\phi_k}
\ee
(where $\phi_i$ are as above) allows one to express normally ordered monomials in $\psi_i,\psi_i^*$ as a linear combination of monomials in $\psi_i,\psi_i^*$, and vice versa, and will be useful below.

\subsection{Boson-fermion correspondence}\label{sec:BF}
Our interest in the space $\mathsf F_0$ is motivated by the \emph{boson-fermion correspondence}, which we recall now.
Let us  introduce the following operators on $\mathsf F$:
\be 
\label{eq:alphaop}
\alpha_n\=\sum_{a\in\F}\NO{\psi_a\psi_{a-n}^*}\qquad (n\in\Z).
\ee
Note that the subspace $\mathsf F_0$ coincides with the kernel of $\alpha_0\mspace{1mu}$.

\begin{theorem}[Boson-fermion correspondence]
\label{thm:bosfer}
Let $\Phi:\mathsf F_0\to\mathsf B$ be the isomorphism of vector spaces defined by $\Phi\,v_\lambda\,=\,s_\lambda$.
Then, for all $n\geq 1$ we have the following identities of operators on $\mathsf F_0$:
\be
\label{eq:bosferop}
\Phi^{-1} \,p_n\,\Phi \= \alpha_n\mspace{1mu},\quad
\Phi^{-1} \,n\pdv{}{p_n}\,\Phi \= \alpha_{-n}\qquad(n\geq 1).
\ee
Conversely, we have the following identity of generating functions of operators on~$\mathsf B$:
\be
\label{eq:bosferop2}
\Phi\,\NO{\psi(z)\psi^*(w)}\,\Phi^{-1}\=\frac 1{z-w}\biggl(\NO{\e^{\varphi(z)-\varphi(w)}}-1\biggr),
\ee
where 
\be\psi(z)=\displaystyle\sum_{a\in\F}\psi_a z^{a-\frac 12},\ \psi^*(z)=\displaystyle\sum_{a\in\F}\psi_a^* z^{-a-\frac 12},\ \text{ and }\ \varphi(z)=\displaystyle\sum_{n\geq 1}\biggl(\frac{p_n}nz^n-\frac{\partial}{\partial p_n}z^{-n}\biggr).
\ee
\end{theorem}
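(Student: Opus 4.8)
The plan is to derive both families of identities, \eqref{eq:bosferop} and \eqref{eq:bosferop2}, from the standard vertex-operator analysis. The only genuinely non-formal inputs are the canonical anticommutation relations \eqref{eq:anticomm} and Wick's formula \eqref{eq:wick}; everything else is a formal manipulation of mode expansions combined with a uniqueness argument for irreducible Heisenberg modules. The logical strategy is to first construct an \emph{abstract} intertwiner $\widehat\Phi$ satisfying the two identities, and only at the end to match it with the prescribed map $\Phi$ via $\widehat\Phi\,v_\lambda = s_\lambda$.

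First I would show that the modes $\alpha_n$ of \eqref{eq:alphaop} satisfy the oscillator relations $[\alpha_m,\alpha_n]=m\,\delta_{m+n,0}$ on $\mathsf F$, with the central term normalized to match the bosonic side. Expanding the commutator of two sums of fermion bilinears with \eqref{eq:anticomm}, the operator-valued contributions telescope and cancel, and the only survivor is a scalar produced by reordering the two infinite sums; this regularized sum is exactly the central term. Restricting to $\mathsf F_0=\ker\alpha_0$, relation \eqref{eq:ann} gives $\alpha_{-n}v_\emptyset=0$ for $n>0$, while the $\alpha_n$ with $n>0$ raise the energy (the grading by $|\lambda|$). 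A graded-dimension count, partitions of $N$ in degree $N$ on both sides, then identifies $\mathsf F_0$, as a module over the Heisenberg algebra generated by the $\alpha_n$, with $\mathsf B=\Q[\bs p]$ on which $p_n$ acts by multiplication and $n\,\partial/\partial p_n$ by differentiation. Irreducibility and cyclicity of the vacuum yield a unique intertwiner $\widehat\Phi$ with $\widehat\Phi\,v_\emptyset=1$ realizing \eqref{eq:bosferop}.

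Next I would handle the intertwining of the fermionic fields. A direct computation from \eqref{eq:anticomm} gives $[\alpha_n,\psi_k]=\psi_{k+n}$ and $[\alpha_n,\psi_k^*]=-\psi_{k-n}^*$, hence $[\alpha_n,\psi(z)]=z^{-n}\psi(z)$ and $[\alpha_n,\psi^*(z)]=-z^{-n}\psi^*(z)$, so that $[\alpha_n,\NO{\psi(z)\psi^*(w)}]=(z^{-n}-w^{-n})\,\NO{\psi(z)\psi^*(w)}$. The series $\varphi$ is built precisely so that, under $p_n\leftrightarrow\alpha_n$ and $n\,\partial/\partial p_n\leftrightarrow\alpha_{-n}$, the right-hand side $\tfrac1{z-w}\bigl(\NO{\e^{\varphi(z)-\varphi(w)}}-1\bigr)$ has the same commutators with every $\alpha_n$. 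Therefore $\widehat\Phi\,\NO{\psi(z)\psi^*(w)}\,\widehat\Phi^{-1}$ and that right-hand side differ by an operator commuting with the full irreducible Heisenberg action, hence by a scalar function of $z,w$. Evaluating on $v_\emptyset\leftrightarrow1$ fixes this scalar: on the fermionic side via \eqref{eq:ann}, on the bosonic side via the explicit exponential, and matching reproduces both the prefactor $\tfrac1{z-w}$ and the subtraction of $1$, the latter removing the vacuum contraction $\langle v_\emptyset,\psi(z)\psi^*(w)v_\emptyset\rangle$. This proves \eqref{eq:bosferop2} for $\widehat\Phi$. Finally, expanding \eqref{eq:bosferop2} in $z,w$ and applying it to $v_\emptyset$ expresses $\widehat\Phi$ of the basis vectors \eqref{eq:vlambda} through images of products of $\psi$'s; collecting these into a determinant reproduces the Jacobi--Trudi formula \eqref{eq:schur}, so that $\widehat\Phi\,v_\lambda=s_\lambda=\Phi\,v_\lambda$, closing the loop.

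The hard part will be the two infinite-sum regularizations: the central term in the first step and, above all, the scalar normalization in the vertex-operator step, namely the precise factor $\tfrac1{z-w}$ together with the region in which it is expanded as a formal series in $z,w$, and the $-1$ subtraction. Both require careful bookkeeping of the normal ordering of infinitely many fermion bilinears and of the vacuum two-point function, where the convention-dependent signs are easy to misplace. The remaining subtlety is the combinatorial identification in the last step of the abstract Heisenberg-module isomorphism with the Schur map $\Phi$; an alternative to the Jacobi--Trudi route would be to verify $\widehat\Phi=\Phi$ directly by matching the action of $\alpha_n$ on $v_\lambda$ with multiplication of $s_\lambda$ by $p_n$ through the Murnaghan--Nakayama rule, but this trades the analytic difficulty for a signed ribbon-counting bookkeeping of comparable length.
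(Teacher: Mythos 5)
The paper offers no proof of Theorem~\ref{thm:bosfer} at all---it is stated with a pointer to~\cite{AZ,Kac,MJD}---so the benchmark is the standard argument in those references, and your outline does follow it: Heisenberg relations plus a graded character count identify $\mathsf F_0$ with the irreducible Fock module $\mathsf B$ (giving Eq.~\eqref{eq:bosferop}), a uniqueness argument for fields with prescribed Heisenberg commutators plus a vacuum normalization gives Eq.~\eqref{eq:bosferop2}, and a combinatorial step pins $\widehat\Phi\,v_\lambda=s_\lambda$. However, two of your displayed formulas are wrong as written, and one of them breaks your stated inference. The minor one: with the conventions of Eq.~\eqref{eq:alphaop}, the $\alpha_n$ with $n>0$ are \emph{creation} operators (they correspond to multiplication by $p_n$), and the central term is $[\alpha_m,\alpha_n]=-m\,\delta_{m+n,0}$, mirroring Eq.~\eqref{eq:ccr} (check: $[\alpha_1,\alpha_{-1}]v_\emptyset=-\alpha_{-1}v_{(1)}=-v_\emptyset$); your $+m$ is the opposite convention. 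The substantive one: $[\alpha_n,\NO{\psi(z)\psi^*(w)}]$ is \emph{not} $(z^{-n}-w^{-n})\NO{\psi(z)\psi^*(w)}$. Since $\NO{\psi(z)\psi^*(w)}=\psi(z)\psi^*(w)-\sum_{k\geq 0}w^kz^{-k-1}$ (the subtracted contraction being the $|w|<|z|$ expansion of $\tfrac1{z-w}$), the true relation carries the scalar anomaly $(z^{-n}-w^{-n})\sum_{k\geq 0}w^kz^{-k-1}$. The identical anomaly is produced on the bosonic side by the $-\tfrac1{z-w}$ term, so the difference $D$ of the two sides of Eq.~\eqref{eq:bosferop2} satisfies the \emph{homogeneous} relation $[\alpha_n,D]=(z^{-n}-w^{-n})\,D$; in particular $D$ does not commute with the Heisenberg action, and your conclusion ``hence they differ by a scalar'' is a non sequitur as stated.

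The standard repair is short: set $D'=\NO{\e^{-\varphi(z)+\varphi(w)}}\,D$, so the multiplier cancels and every coefficient of $D'$ commutes with all $\alpha_{\pm n}$; by irreducibility (your first step) each coefficient is a scalar, hence $D=c(z,w)\,\NO{\e^{\varphi(z)-\varphi(w)}}$ for a scalar series $c$. Both sides of Eq.~\eqref{eq:bosferop2} have vanishing vacuum expectation---the fermionic side because $\langle v_\emptyset,\NO{\psi_a\psi_b^*}v_\emptyset\rangle=0$, the bosonic side because $\langle 1,\NO{\e^{\varphi(z)-\varphi(w)}}1\rangle=1$ cancels against the $-1$---so $c=0$. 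This makes precise your observation that the $-1$ removes the vacuum contraction, and fixes the expansion region of $\tfrac1{z-w}$. Finally, your Jacobi--Trudi closure is thinner than you suggest: a single application of Eq.~\eqref{eq:bosferop2} to $v_\emptyset$ only produces hook partitions, so reaching a general $v_\lambda$ through Eq.~\eqref{eq:vlambda} needs the $2d$-point analogue (compare the four-point identity~\eqref{eq:quarticfermionboson} that the paper derives inside the proof of Proposition~\ref{prop:G1fermion}) or iterated application of the field. Your alternative route---verifying directly that the map $v_\lambda\mapsto s_\lambda$ intertwines, i.e., matching $\alpha_{-n}v_\lambda=\sum_{\gamma}(-1)^{\mathrm{ht}(\gamma)}v_{\lambda\setminus\gamma}$ against the Murnaghan--Nakayama rule for $n\,\partial s_\lambda/\partial p_n$, and then using that an intertwiner of the irreducible module fixing the vacuum is unique---is the cleaner one, and it is exactly the computation this paper later relies on (proof of Lemma~\ref{lemma:scalarproduct}, citing~\cite{RiosZertuche}). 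With these repairs your outline becomes a complete and correct proof along the lines of the cited sources.
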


\begin{remark}
\label{rem:isometry}
The map~$\Phi$ is an isometry with respect to the scalar products defined above (for which the $v_\lambda\mspace{1mu}$, respectively the Schur polynomials $s_\lambda\mspace{1mu}$, are orthonormal).
\end{remark}

For the proof, we refer to~\cite{AZ,Kac,MJD}.
{Note that in Eq.~\eqref{eq:bosferop2}, on the left-hand side we have the fermionic normal order whereas on the right-hand side we have the bosonic normal order.}

\section{First order correction to the eigenvalues}\label{sec:proof1}

\subsection{Quadratic and quartic fermionic operators}

Let us introduce the following operators on $\mathsf F_0\mspace{1mu}$, where $a,b,c,d\in\F$:
\be
\Xi_{ab}\=\NO{\psi_a\psi_b^*}\,,\qquad
\Xi_{abcd}\=\NO{\psi_a\psi_b^*\psi_c\psi_d^*}\,.
\ee
Moreover, consider the following variations of the Kronecker delta function:
\be
\label{eq:cocycle}
\eta_{abcd}\=\delta_{ad}^-\delta_{bc}^+-\delta_{ad}^+\delta_{bc}^-\,,\qquad
\delta_{ab}^\pm\=
\begin{cases}
1&\text{if }a=b\text{ and }\pm a>0,\\
0&\text{otherwise}.
\end{cases}
\ee
Note that
\be
\langle v_\emptyset,\psi_a\psi_b^*v_\emptyset\rangle=\delta_{ab}^-\,,\qquad
\langle v_\emptyset,\psi_b^*\psi_a v_\emptyset\rangle = \delta_{ab}^+\,.
\ee

\begin{lemma}
\label{lemma:comm}
For all $a,b,c,d,u,v\in\F$, we have
\begin{align}
\label{eq:lemmacomm2uv}
[\Xi_{ab},\Xi_{uv}]&\=\delta_{bu}\Xi_{av}-\delta_{av}\Xi_{ub}+\eta_{abuv}\,,
\\
\nonumber
[\Xi_{abcd},\Xi_{uv}]&\=
\delta_{du}\Xi_{abcv}-\delta_{cv}\Xi_{abud}+\delta_{bu}\Xi_{avcd}-\delta_{av}\Xi_{ubcd}
\\
\label{eq:lemmacomm4uv}
&\qquad
+\eta_{cduv}\Xi_{ab}
+\eta_{abuv}\Xi_{cd}
-\eta_{cbuv}\Xi_{ad}
-\eta_{aduv}\Xi_{cb}\,.
\end{align}
In particular, when $u=v$, the above formulas simplify to
\begin{align}
\label{eq:lemmacomm2uu}
[\Xi_{ab},\Xi_{uu}]&\=(\delta_{bu}-\delta_{au})\,\Xi_{ab},
\\
\label{eq:lemmacomm4uu}
[\Xi_{abcd},\Xi_{uu}]&\=(\delta_{du}-\delta_{cu}+\delta_{bu}-\delta_{au})\,\Xi_{abcd}\,.
\end{align}
\end{lemma}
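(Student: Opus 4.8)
The plan is to derive both identities from the canonical anticommutation relations~\eqref{eq:anticomm} together with the combination rule~\eqref{eq:wick}, reducing the quartic case to the quadratic one. Throughout write $J_{uv}=\psi_u\psi_v^*$ for the un-normal-ordered product, so that~\eqref{eq:wick} gives $\Xi_{uv}=J_{uv}-\delta_{uv}^-$; since the correction is a scalar, $[\,\cdot\,,\Xi_{uv}]=[\,\cdot\,,J_{uv}]$. A direct application of~\eqref{eq:anticomm}, in which the cubic terms cancel by the antisymmetry of the $\psi$'s and of the $\psi^*$'s, yields the elementary relations $[J_{uv},\psi_c]=\delta_{vc}\,\psi_u$ and $[J_{uv},\psi_d^*]=-\delta_{ud}\,\psi_v^*$, so that $\mathrm{ad}(J_{uv})$ is a derivation acting on single fermions by relabelling an index.

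For the quadratic identity~\eqref{eq:lemmacomm2uv} I would apply this derivation to $J_{ab}$ to get $[J_{ab},J_{uv}]=\delta_{bu}J_{av}-\delta_{av}J_{ub}$, and then re-express $J_{av}=\Xi_{av}+\delta_{av}^-$, $J_{ub}=\Xi_{ub}+\delta_{ub}^-$. The residual scalar $\delta_{bu}\delta_{av}^--\delta_{av}\delta_{ub}^-$ collapses to $\delta_{bu}^+\delta_{av}^--\delta_{av}^+\delta_{bu}^-=\eta_{abuv}$ after splitting each unsigned delta as $\delta=\delta^++\delta^-$ (valid since half-integer indices are never zero) and cancelling the common $\delta^-\delta^-$ term; this exhibits $\eta$ as the normal-ordering anomaly. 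The key step for the quartic is the combination rule obtained from~\eqref{eq:wick} by contracting the two factors of $\Xi_{ab}\Xi_{cd}$ across each other,
\[
\Xi_{ab}\Xi_{cd}\=\Xi_{abcd}+\delta_{bc}^+\Xi_{ad}-\delta_{ad}^-\Xi_{cb}+\delta_{ad}^-\delta_{bc}^+,
\]
the only nonzero cross-contractions being $\langle v_\emptyset,\psi_b^*\psi_c v_\emptyset\rangle=\delta_{bc}^+$ and $\langle v_\emptyset,\psi_a\psi_d^* v_\emptyset\rangle=\delta_{ad}^-$.

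I would then prove~\eqref{eq:lemmacomm4uv} by solving the display above for $\Xi_{abcd}$, taking the commutator with $\Xi_{uv}$, and applying the Leibniz rule to $[\Xi_{ab}\Xi_{cd},\Xi_{uv}]$ together with the already-established~\eqref{eq:lemmacomm2uv}. Re-expanding every resulting product $\Xi_{ab}\Xi_{cv}$, $\Xi_{av}\Xi_{cd}$, and so on, by the same combination rule produces the four index-substitution terms $\delta_{du}\Xi_{abcv}-\delta_{cv}\Xi_{abud}+\delta_{bu}\Xi_{avcd}-\delta_{av}\Xi_{ubcd}$ directly, while the Leibniz step contributes $\eta_{cduv}\Xi_{ab}+\eta_{abuv}\Xi_{cd}$. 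The main obstacle is the ensuing bookkeeping of the remaining quadratic and scalar corrections: one must check that the $\Xi_{av},\Xi_{ub},\Xi_{cv},\Xi_{ud}$ contributions cancel in pairs, that the $\Xi_{ad}$ and $\Xi_{cb}$ contributions assemble (after the same $\delta=\delta^++\delta^-$ splitting used in the quadratic case) into exactly $-\eta_{cbuv}\Xi_{ad}-\eta_{aduv}\Xi_{cb}$, and that all scalar terms cancel; verifying these signs is the only genuinely delicate part. Finally, the specializations~\eqref{eq:lemmacomm2uu} and~\eqref{eq:lemmacomm4uu} are immediate consequences of the general formulas on setting $u=v$: each anomaly $\eta_{\cdots uu}$ vanishes because $\delta_{xu}^-\delta_{yu}^+$ would force $u<0$ and $u>0$ at once, whereas each surviving term $\delta_{du}\Xi_{abcu}$ collapses to $\delta_{du}\Xi_{abcd}$ since the substituted index then equals the original, leaving the stated charge multiple.
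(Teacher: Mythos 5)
Your proposal is correct and follows essentially the same route as the paper: both hinge on the Wick-type contraction identity $\Xi_{ab}\Xi_{cd}=\Xi_{abcd}-\delta_{ad}^-\Xi_{cb}+\delta_{bc}^+\Xi_{ad}+\delta_{ad}^-\delta_{bc}^+$ (the paper's Eq.~\eqref{eq:4to2}), solved for $\Xi_{abcd}$ and combined with the Leibniz rule and the quadratic commutator, and I have checked that your claimed cancellations (the $\Xi_{av},\Xi_{ub},\Xi_{cv},\Xi_{ud}$ pairs, the assembly of $-\eta_{cbuv}\Xi_{ad}-\eta_{aduv}\Xi_{cb}$ via the $\delta=\delta^++\delta^-$ splitting, and the vanishing of all scalars) all go through. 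Your only deviation is cosmetic: you obtain the quadratic identity~\eqref{eq:lemmacomm2uv} from the derivation property of $\mathrm{ad}(J_{uv})$ on single fermions, whereas the paper antisymmetrizes Eq.~\eqref{eq:4to2} using $\Xi_{abuv}=\Xi_{uvab}$ — both are the same CAR computation in different clothing.
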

\begin{proof}
By Eq.~\eqref{eq:wick} we get
\be
\psi_b^*\NO{\psi_u\psi_v^*}\=\NO{\psi_b^*\psi_u\psi_v^*}+\langle v_\emptyset,\psi_ b^*\psi_u v_\emptyset\rangle \psi_v^*\=\NO{\psi_b^*\psi_u\psi_v^*}+\delta_{bu}^+\psi_v^*
\ee
and, again by an application of the same formula, we get
\begin{align}
\nonumber
\psi_a\NO{\psi_b^*\psi_u\psi_v^*}\={}&\NO{\psi_a\psi_b^*\psi_u\psi_v^*}+\langle v_\emptyset ,\psi_a\psi_b^*v_\emptyset\rangle\NO{\psi_u\psi_v ^*}+
\langle v_\emptyset ,\psi_a\psi_v^*v_\emptyset\rangle\NO{\psi_b^*\psi_u}
\\
\={}&
\NO{\psi_a\psi_b^*\psi_u\psi_v^*}+\delta_{ab}^-\NO{\psi_u\psi_v ^*}-\delta_{av}^-\NO{\psi_u\psi_b^*}
\end{align}
Noting that $\Xi_{ab}=\psi_a\psi_b^*-\delta^-_{ab}$ and combining these two identities we conclude that
\be
\label{eq:4to2}
\Xi_{ab}\Xi_{uv}\=\Xi_{abuv}\meno\delta_{av}^-\Xi_{ub}\+\delta_{bu}^+\Xi_{av}\+\delta_{av}^-\delta_{bu}^+.
\ee
(This is essentially a special case of Wick's theorem.)
Subtract from Eq.~\eqref{eq:4to2} the same relation after the substitutions $a\leftrightarrow u$ and $b\leftrightarrow v$ to obtain Eq.~\eqref{eq:lemmacomm2uv}.
Here, we use~$\Xi_{abuv}=\Xi_{uvab}\mspace{1mu}$, cf. Eq.~\eqref{eq:invarianceNO}.
Next, using Eq.~\eqref{eq:4to2}, we have
\begin{align}
\nonumber
[\Xi_{abcd},\Xi_{uv}]&=[\Xi_{ab}\Xi_{cd},\Xi_{uv}]+\delta_{ad}^-[\Xi_{cb},\Xi_{uv}]-\delta_{bc}^-[\Xi_{ad},\Xi_{uv}]
\\
&=\Xi_{ab}[\Xi_{cd},\Xi_{uv}]+[\Xi_{ab},\Xi_{uv}]\Xi_{cd}+\delta_{ad}^-[\Xi_{cb},\Xi_{uv}]-\delta_{bc}^-[\Xi_{ad},\Xi_{uv}]\,.
\end{align}
After some algebraic simplifications and using~Eqs.~\eqref{eq:lemmacomm2uv} and~\eqref{eq:4to2}, we obtain Eq.~\eqref{eq:lemmacomm4uv}.
Finally, Eqs.~\eqref{eq:lemmacomm2uu} and~\eqref{eq:lemmacomm4uu} follow from~Eqs.~\eqref{eq:lemmacomm2uv} and~\eqref{eq:lemmacomm4uv}, respectively, as ${\eta_{abuu}=0}$.
\end{proof}

\subsection{Fermionic expressions for KdV Hamiltonian operators}

Let us denote $\wh g^{[j]}_k(z)=\rho_0\bigl(g_k^{[j]}(z)\bigr)\in(\mathrm{End}\,\mathsf B)\llbracket z^{\pm 1}\rrbracket$, where $\rho_0$ is defined by Eq.~\eqref{eq:repB} and $g_k^{[j]}(z)$ is defined by Eq.~\eqref{eq:decomp}.
Recall that $\Phi$ denotes the isomorphism in the boson-fermion correspondence, cf.~Theorem~\ref{thm:bosfer}.
The following result was essentially proved by Rossi \cite[Theorem~5.3]{Rossi}.
We reprove it here for the sake of completeness using a different strategy based on the recursion relation recalled in Section~\ref{sec:backgroundqDR}.
We will later employ the same strategy to study the first order term $\Phi^{-1}\,\wh g^{[1]}_k(z)\,\Phi$ as well.

\begin{proposition}
\label{prop:G0fermion}
For all $k\geq -1$, we have
\be
\label{eq:G0fermion}
\Phi^{-1}\,\wh g^{[0]}_k(z)\,\Phi\=\beta_{k+2}\+\frac 1{(k+1)!}\sum_{a,b\in\F}\left(\frac{a+b}2\right)^{\! k+1}\Xi_{ab}\,z^{{a-b}},
\ee
where the numbers $\beta_k$ are defined by Eq.~\eqref{eq:betagen}.
\end{proposition}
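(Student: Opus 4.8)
The plan is to prove Eq.~\eqref{eq:G0fermion} by induction on $k$, using the homogeneous Buryak--Rossi recursion Eq.~\eqref{eq:BRrechomogeneous} (at $j=0$) as the engine and the commutator formula Eq.~\eqref{eq:lemmacomm2uu} of Lemma~\ref{lemma:comm} to carry out the inductive step. Throughout I abbreviate $X_k:=\Phi^{-1}\,\wh g^{[0]}_k(z)\,\Phi$. Applying $\Phi^{-1}(\,\cdot\,)\Phi$ to the $\rho_0$-image of Eq.~\eqref{eq:BRrechomogeneous} at $j=0$ (where $\hbar$ acts as the identity) turns the recursion into
\be
-(k+2)\,z\pa_z\, X_{k+1} \= \bigl[X_k,\ \Phi^{-1}\,\wh G_1^{[0]}\,\Phi\bigr]\qquad(k\geq -1).
\ee
For the base case $k=-1$, the initial datum Eq.~\eqref{eq:initialdatumBRrec} gives $\wh g^{[0]}_{-1}(z)=\rho_0\bigl(\sum_a\omega_a z^a\bigr)$, and the boson-fermion dictionary Eq.~\eqref{eq:bosferop} of Theorem~\ref{thm:bosfer} yields $X_{-1}=\sum_{n\in\Z}\alpha_n z^n=\sum_{a,b\in\F}\Xi_{ab}\,z^{a-b}$ on $\mathsf F_0$. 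This is exactly the right-hand side of Eq.~\eqref{eq:G0fermion} at $k=-1$, since $\beta_1=0$ and $\alpha_0=\sum_a\Xi_{aa}=0$ on $\mathsf F_0$.

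The essential preliminary input is the fermionic form of the Hamiltonian itself, namely $\Phi^{-1}\,\wh G_1^{[0]}\,\Phi=\tfrac12\sum_{u\in\F}u^2\,\Xi_{uu}$. Indeed, by Eq.~\eqref{eq:G1rec} the operator $G_1^{[0]}$ is, up to the term $-\tfrac{\hbar}{24}\omega_0$ which $\rho_0$ annihilates, the normally ordered cubic $\tfrac16\sum_{a_1,a_2}\NO{\omega_{a_1}\omega_{a_2}\omega_{-a_1-a_2}}$; under Eq.~\eqref{eq:bosferop} this becomes the cubic boson expression $\tfrac16\sum_{a+b+c=0}\NO{\alpha_a\alpha_b\alpha_c}$, which must be rewritten as the quadratic fermionic operator above, e.g.\ by extracting it from the generating function Eq.~\eqref{eq:bosferop2} and applying the Wick rule Eq.~\eqref{eq:wick}. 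I expect this cubic-to-quadratic conversion to be the \emph{main obstacle}: unlike the remainder of the argument it cannot be reduced to the recursion, and it is precisely the point where the higher (cubic) nature of the Hamiltonian density must be tamed. Since the recursion cannot even be started without it, this identity must be established independently of the induction.

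Granting this input, the inductive step is short. Assuming Eq.~\eqref{eq:G0fermion} for $X_k$, the scalar $\beta_{k+2}$ is central and drops out of the commutator, while Eq.~\eqref{eq:lemmacomm2uu} gives $[\Xi_{ab},\Xi_{uu}]=(\delta_{bu}-\delta_{au})\Xi_{ab}$, so the sum over $u$ collapses to the factor $\tfrac12(b^2-a^2)=-(a-b)\tfrac{a+b}{2}$ and
\be
\Bigl[X_k,\ \tfrac12\sum_{u}u^2\Xi_{uu}\Bigr] \= -\frac{1}{(k+1)!}\sum_{a,b\in\F}(a-b)\Bigl(\tfrac{a+b}2\Bigr)^{k+2}\Xi_{ab}\,z^{a-b}.
\ee
Since $z\pa_z z^{a-b}=(a-b)z^{a-b}$, the left-hand side $-(k+2)\,z\pa_z X_{k+1}$ produces exactly the same expression once one uses $(k+2)/(k+2)!=1/(k+1)!$. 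Comparing coefficients reproduces the claimed off-diagonal ($a\neq b$) terms of $X_{k+1}$.

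Finally, because $z\pa_z$ annihilates the $z^0$ part, the recursion fixes $X_{k+1}$ only up to its diagonal ($a=b$) terms and the additive scalar $\beta_{k+3}$, which must be pinned down separately. The diagonal coefficient is forced by the polynomiality of the $\Xi_{ab}$-coefficients: as in Section~\ref{sec:diffpoly} and the polynomiality of the intersection numbers Eq.~\eqref{eq:I}, the coefficient of $\Xi_{ab}z^{a-b}$ is the value at $(a,b)$ of a single polynomial, so the diagonal value is the evaluation at $a=b$ of $\tfrac{1}{(k+2)!}\bigl(\tfrac{a+b}2\bigr)^{k+2}$ already determined off the diagonal. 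The remaining scalar $\beta_{k+3}$ is then fixed by evaluating on the vacuum: since $\Xi_{aa}v_\emptyset=0$ for all $a$, the $z^0$ part $\Phi^{-1}\wh G^{[0]}_{k+1}\Phi$ acts on $v_\emptyset=\Phi^{-1}1$ by the scalar $\beta_{k+3}$, which must equal the constant term of the Hamiltonian (recoverable via the string equation Eq.~\eqref{eq:string}, or alternatively via Dubrovin's eigenvalue $Q_{k+3}$ in Eqs.~\eqref{eq:E0}--\eqref{eq:QkC} together with the $\Xi_{aa}$-eigenvalue computation furnished by Eq.~\eqref{eq:SandC}). This completes the induction.
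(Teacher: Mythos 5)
Your architecture is the same as the paper's: the paper also takes the right-hand side of Eq.~\eqref{eq:G0fermion} as a candidate, verifies the $j=0$ recursion~\eqref{eq:tobeproveng0} for $k\geq-1$ by exactly your commutator computation (Eqs.~\eqref{eq:lemmacomm2uu} and~\eqref{eq:LEMMA1}), and fixes the additive scalar from Eq.~\eqref{eq:spectrum} at $\lambda=\emptyset$ via $r_\emptyset=1$ and Eq.~\eqref{eq:E0}. The input you single out as the main obstacle, $\Phi^{-1}\wh G_1^{[0]}\Phi=\tfrac12\sum_u u^2\,\Xi_{uu}$, is indeed established independently in the paper (Lemma~\ref{lemma}) and precisely by the route you sketch: apply $z^3\partial_z\partial_w$ to Eq.~\eqref{eq:bosferop2}, set $w=z$, and extract the $z^0$ coefficient, using $\alpha_0=0$ on $\mathsf F_0$. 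Your base case and inductive step are correct as written.

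The genuine gap is in your treatment of the diagonal terms. You assert that ``the coefficient of $\Xi_{ab}z^{a-b}$ is the value at $(a,b)$ of a single polynomial'' and evaluate it at $a=b$; but this presupposes that the zero mode $\Phi^{-1}\wh G^{[0]}_{k+1}\Phi$ is a purely quadratic, diagonal fermionic expression with polynomially varying coefficients --- which is essentially the statement being proved. The polynomiality of the intersection numbers~\eqref{eq:I} lives at the \emph{bosonic} level: it says $g^{[0]}_{k+1}$ is of the differential-polynomial form~\eqref{eq:diffpolyop}, and a bosonic differential polynomial generically produces fermionic terms of all even degrees. A priori the zero mode, which the recursion (seeing only nonzero modes of $z$) cannot constrain, could contain quartic contributions $\Xi_{abcd}$ with $a+c=b+d$ --- and such terms genuinely appear one order higher in $\epsilon$, cf.\ Eqs.~\eqref{eq:LEMMA2} and~\eqref{eq:G1fermion}, so their absence at order $\epsilon^0$ is not automatic. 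The paper closes exactly this hole with its statement (2): it writes the candidate in closed form
\be
j_k(z) \= \beta_{k+2}\+\frac z{(k+1)!}\biggl(\frac{z\partial_z-w\partial_w}{2}\biggr)^{\! k+1}\NO{\psi(z)\psi^*(w)}\,\big|_{w=z}
\ee
and uses Eq.~\eqref{eq:bosferop2} to check that $\Phi\, j_k(z)\,\Phi^{-1}$ is itself of the form~\eqref{eq:diffpolyop}; only then does the uniqueness of solutions of the recursion \emph{within the differential-polynomial class} (where the polynomiality of~\eqref{eq:I} legitimately enters) identify $j_k(z)$ with $\Phi^{-1}\wh g^{[0]}_k(z)\Phi$ up to the constant, which is fixed as in your last step. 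Alternatively, your own parenthetical appeal to Dubrovin's theorem would repair the argument if applied to the whole zero mode rather than only to the scalar: Eq.~\eqref{eq:Dubrovin} says $\wh G^{[0]}_{k+1}$ is diagonal on the Schur basis with eigenvalues $Q_{k+3}(\lambda)$, which by Eq.~\eqref{eq:QkC} and the eigenvalue computation $\Xi_{aa}v_\lambda=\mathrm{sgn}(a)\,\delta_{a\in C_\lambda}v_\lambda$ coincide with the eigenvalues of the candidate's zero mode, forcing equality of the two diagonal operators. As written, however, your interpolation step is circular.
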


Before giving the proof of this proposition, we define
\be
\label{eq:compGhat}
\wh G_k^{[j]}\=\rho_0(G_k^{[j]}),
\ee
where $\rho_0$ is defined by Eq.~\eqref{eq:repB} and $G_k^{[j]}$ is defined by Eq.~\eqref{eq:decomp}.
Then, we have the following lemma.

\begin{lemma}\label{lemma}
We have
\begin{align}
\label{eq:LEMMA1}
\Phi^{-1}\,\wh G_1^{[0]}\,\Phi\={}&\sum_{a\in\F}\frac{a^2}2\,\Xi_{aa}\,,
\\
\label{eq:LEMMA2}
\quad\Phi^{-1} \, \wh G_1^{[1]}\, \Phi\={}&\sum_{\substack{a,b,c,d\in\F,\\ a+c=b+d}}(a-b)^2\,\Xi_{abcd}\+\sum_{a\in\F}\frac 23a\biggl(a^2-\frac 14\biggr)\,\Xi_{aa}\+ \frac 1{120}\,.
\end{align}
\end{lemma}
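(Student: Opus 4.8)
The plan is to read the two operators off the explicit formula \eqref{eq:G1rec} for $G_1(\epsilon)$ and transport them to fermions through the boson-fermion correspondence (Theorem~\ref{thm:bosfer}). Matching \eqref{eq:G1rec} against the decomposition \eqref{eq:decomp} gives
\[
G_1^{[0]}=\tfrac16\sum_{a_1,a_2\in\Z}\NO{\omega_{a_1}\omega_{a_2}\omega_{-a_1-a_2}}-\tfrac{\hbar}{24}\,\omega_0\,,\qquad
G_1^{[1]}=\sum_{a\in\Z}a^2\,\NO{\omega_a\omega_{-a}}+\tfrac{\hbar}{120}\,.
\]
Applying $\rho_0$ (so that $\hbar\mapsto1$ and, since $c=0$, the $\omega_0$-term drops) and conjugating by $\Phi$, the dictionary $\Phi^{-1}\rho_0(\omega_a)\Phi=\alpha_a$ coming from \eqref{eq:bosferop} (with $\alpha_0=0$ on $\mathsf F_0$) turns these into
\[
\Phi^{-1}\wh G_1^{[0]}\Phi=\tfrac16\!\!\sum_{a+b+c=0}\!\!\NO{\alpha_a\alpha_b\alpha_c}\,,\qquad
\Phi^{-1}\wh G_1^{[1]}\Phi=2\sum_{n\geq1}n^2\,\alpha_n\alpha_{-n}+\tfrac1{120}\,,
\]
where for the second identity I have used $\NO{\omega_a\omega_{-a}}=\omega_{|a|}\omega_{-|a|}$ to merge the $\pm a$ contributions.

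For \eqref{eq:LEMMA2} I would expand each current as $\alpha_n=\sum_{s\in\F}\Xi_{s,s-n}$ (cf.~\eqref{eq:alphaop}) and reduce $\alpha_n\alpha_{-n}=\sum_{s,t}\Xi_{s,s-n}\Xi_{t,t+n}$ by the Wick-type identity \eqref{eq:4to2}. In the quartic term, the assignment $(a,b,c,d)=(s,s-n,t,t+n)$ automatically realizes the constraint $a+c=b+d$ with $a-b=n>0$; since quadruples with $a-b<0$ are recovered from the $(c,d,a,b)$-relabelling allowed by $\Xi_{abcd}=\Xi_{cdab}$ (cf.~\eqref{eq:invarianceNO}) and those with $a-b=0$ carry zero weight, the prefactor $2\sum_{n\geq1}n^2$ reassembles exactly $\sum_{a+c=b+d}(a-b)^2\Xi_{abcd}$. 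The two contraction terms of \eqref{eq:4to2} collapse each product to the diagonal bilinears $\sum_{s>n}\Xi_{ss}-\sum_{t<-n}\Xi_{tt}$, and the remaining sum over $n$, governed by $\sum_{n=1}^{j}n^2=\tfrac16 j(j+1)(2j+1)$, assembles the coefficient $\tfrac23 a(a^2-\tfrac14)$ of $\Xi_{aa}$ for every $a\in\F$. The scalar contraction term of \eqref{eq:4to2} vanishes (its two sign requirements $s<0$ and $s>n>0$ are incompatible), so the only constant is the explicit $\tfrac1{120}$, which gives \eqref{eq:LEMMA2}.

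The identity \eqref{eq:LEMMA1} is of a genuinely different nature, because $\Phi^{-1}\wh G_1^{[0]}\Phi$ is cubic in the Heisenberg modes while the claimed answer is a fermion bilinear; the content is the bosonization identity $\tfrac16\sum_{a+b+c=0}\NO{\alpha_a\alpha_b\alpha_c}=\sum_{a\in\F}\tfrac{a^2}2\Xi_{aa}$ on $\mathsf F_0$. I would prove it from the vertex-operator form \eqref{eq:bosferop2}: expanding $\Phi\NO{\psi(z)\psi^*(w)}\Phi^{-1}$ about the diagonal $w=z$ produces the normal-ordered current powers on one side and the diagonal bilinears $\sum_a a^{k-1}\Xi_{aa}$ on the other; matching the cubic coefficient and then applying $\mathrm{Res}_z\,\tfrac{\d z}{z}$ to project onto zero modes yields the identity, the absence of an additive constant reflecting $\beta_3=0$. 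As an independent verification, both sides act diagonally on the basis $\{v_\lambda\}$ with the common eigenvalue $Q_3(\lambda)$, by \eqref{eq:QkC}.

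The main obstacle is precisely this cubic bosonization behind \eqref{eq:LEMMA1}: expanding $\NO{\alpha_a\alpha_b\alpha_c}$ by iterating \eqref{eq:4to2} generates sextic and quartic fermionic terms that must cancel, and keeping track of these cancellations---together with the normal-ordering constants concealed in the $1/(z-w)$ pole of \eqref{eq:bosferop2}---is what forces the point-splitting argument. By comparison \eqref{eq:LEMMA2}, once reduced to $2\sum_n n^2\alpha_n\alpha_{-n}$, is a transparent and finite application of \eqref{eq:4to2}, its only delicate point being the Kronecker-delta bookkeeping that yields the coefficient $\tfrac23 a(a^2-\tfrac14)$.
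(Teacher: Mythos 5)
Your proposal is correct and follows essentially the same route as the paper: you read off $G_1^{[0]}$ and $G_1^{[1]}$ from Eq.~\eqref{eq:G1rec} via the decomposition~\eqref{eq:decomp}, prove Eq.~\eqref{eq:LEMMA2} by reducing to $2\sum_m m^2\alpha_m\alpha_{-m}$ and applying the Wick-type identity~\eqref{eq:4to2} (with the same quartic relabelling via $\Xi_{abcd}=\Xi_{cdab}$, the same $\sum_{n=1}^{a-1/2}n^2$ bookkeeping giving $\tfrac23 a(a^2-\tfrac14)$, and the same vanishing of the scalar contraction), and prove Eq.~\eqref{eq:LEMMA1} by point-splitting the vertex-operator identity~\eqref{eq:bosferop2} along the diagonal and projecting onto the $z^0$ mode using $\alpha_0=0$ on $\mathsf F_0$, exactly as the paper does by applying $z^3\partial_z\partial_w$ and setting $w=z$. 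Your closing diagonalization check via $Q_3$ is a nice (and in fact self-sufficient, given Dubrovin's Eq.~\eqref{eq:Dubrovin}) alternative for~\eqref{eq:LEMMA1}, but it is not needed.
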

\begin{proof}
By applying $z^3\partial_z\partial_w$ and setting $w=z$ in Eq.~\eqref{eq:bosferop2} we obtain the following identity of generating functions of operators on~$\mathsf B$:
\be
\label{eq:verylast}
-\frac 13\NO{\bigl(z\partial_z\varphi(z)\bigr)^3}+\frac 16z^3\partial_z^3\varphi(z) \= \Phi\,\NO{z^3\pa_z\psi(z)\pa_z\psi^*(z)}\,\Phi^{-1}\,.
\ee
The coefficient of $z^0$ in the left-hand side of Eq.~\eqref{eq:verylast} is $-2\,\wh G_1^{[0]}$, cf. Eq.~\eqref{eq:g1example}.
On the other hand, the right-hand side can be expanded using
\be
\NO{z^3\pa_z\psi(z)\pa_z\psi^*(z)}\=\sum_{a,b\in\F}\biggl(a-\frac 12\biggr)\biggl(-b-\frac 12\biggr)\,\NO{\psi_a\psi_b^*}\,z^{a-b}.
\ee
Hence, the coefficient of $z^0$ on the right-hand side  of Eq.~\eqref{eq:verylast} is
\be
-\Phi\sum_{a\in\F}\biggl(a^2-\frac 14\biggr)\,\NO{\psi_a\psi_a^*}\Phi^{-1}\=-\Phi\sum_{a\in\F}a^2\,\Xi_{aa}\Phi^{-1},
\ee
where we use that $\alpha_0=\sum_{a\in\F}\Xi_{aa}$ vanishes on $\mathsf F_0\mspace{1mu}$. Therefore, Eq.~\eqref{eq:LEMMA1} is proved.

Next, to prove Eq.~\eqref{eq:LEMMA2}, we note that by Eq.~\eqref{eq:G1rec} we have
\be
G_1^{[1]}\=\sum_{a\in\Z}a^2\NO{\omega_{a}\omega_{-a}}\+\frac{\hbar}{120}\,.
\ee
Hence,
\be
\label{eq:G1Explicit}
\wh G_1^{[1]}\=2\sum_{m\geq 1}m^3p_m\frac{\partial}{\partial p_m}\+\frac 1{120}\,,
\ee
and so, using Eq.~\eqref{eq:bosferop},
\be
\label{eq:firsteq}
\Phi^{-1}\,\wh G_1^{[1]}\,\Phi\meno\frac 1{120}\=
2\sum_{m\geq 1}m^2\,\alpha_m\,\alpha_{-m}
\=2\sum_{m\geq 1}m^2\sum_{a,b\in\F}\Xi_{a,a-m}\,\Xi_{b,b+m}\,.
\ee
By an application of Eq.~\eqref{eq:4to2}, we rewrite this expression as
\begin{align}
\nonumber
&2\sum_{m\geq 1}m^2\sum_{a,b\in\F}\left(\Xi_{a,a-m,b,b+m}-\delta_{a,b+m}^-\Xi_{b,a-m}+\delta^+_{a-m,b}\Xi_{a,b+m}+\delta_{a,b+m}^-\delta_{a-m,b}^+\right)
\\
&\quad = 2\sum_{m\geq 1}m^2\left(\sum_{a,b\in\F}\Xi_{a,a-m,b,b+m}\right)+2\sum_{a\in\F}\Xi_{aa}\left(\delta_{a>0}\sum_{m=1}^{a-\frac 12}m^2-\delta_{a<0}\sum_{m=1}^{-a-\frac 12}m^2\right)\raisebox{-10pt}{.}
\end{align}
We finally use the identity $\Xi_{abcd}=\Xi_{cdab}$ in the first term and 
\be 
\delta_{a>0}\sum_{m=1}^{a-\frac 12}m^2-\delta_{a<0}\sum_{m=1}^{-a-\frac 12}m^2\=\tfrac 13a(a^2-\tfrac 14) 
\qquad (a\in\F)
\ee
in the second term to rewrite this expression as in Eq.~\eqref{eq:LEMMA2}.
\end{proof}

\begin{proof}[Proof of Proposition~\ref{prop:G0fermion}]
As recalled in Section~\ref{sec:rec}, the recursion given in Eq.~\eqref{eq:BRrec} uniquely determines the $g_k(\epsilon;z)$ up to an additive constant (depending on $\epsilon,\hbar$ only).
Indeed, even though the coefficient of $z^0$ is killed by the operator $z\partial_z$ appearing in the recursion, the polynomiality of the coefficients $I_{g,d;a_1,\dots,a_n}(y)$ in $a_1,\dots,a_n$ (see Eq.~\eqref{eq:gk}) allows us to recover this coefficient up to a constant depending on $\epsilon,\hbar$ only. Moreover, the recursion itself is insensitive to the actual value of the constant depending on $\epsilon,\hbar$ only.
Equivalently, the recursion~\eqref{eq:BRrec} uniquely determines the $g_k(\epsilon;z)$ up an additive constant (depending on $\epsilon,\hbar$ only) provided that $g_k(\epsilon;z)$ are of the form of Eq.~\eqref{eq:diffpolyop} for some differential polynomials.
Therefore, denoting by $j_k(z)$ the right-hand side in Eq.~\eqref{eq:G0fermion}, the proof consists of showing three statements about $j_k(z)$: 
\begin{itemize}\itemsep3pt
\item[(1)] that the recursion
\be
\label{eq:tobeproveng0}
{-}(k+2)\,z\partial_z\, j_{k+1}(z)\=\bigl[j_k(z),\Phi^{-1}\wh G_1^{[0]}\Phi\bigr]
\ee
holds for all $k\geq -1$,
\item[(2)] that $\Phi j_k(z) \Phi^{-1}$ (for all $k\geq -1$) are of the form of Eq.~\eqref{eq:diffpolyop} for appropriate differential polynomials,
\item[(3)] that the constant $\beta_{k+2}$ in $j_k(z)$ coincides with the constant in $\wh g_k^{[0]}(z)$.
\end{itemize}

We start with the first of these statements.
The left-hand side of Eq.~\eqref{eq:tobeproveng0} equals
\be
\frac{(k+2)}{(k+2)!}\sum_{a,b\in\F}(b-a)\biggl(\frac{a+b}2\biggr)^{k+2}\Xi_{ab}\,z^{{a-b}},
\ee
while the right-hand side is, using Eqs.~\eqref{eq:lemmacomm2uu} and~\eqref{eq:LEMMA1},
\begin{align}
\nonumber
&\frac 1{(k+1)!}\sum_{u,a,b\in\F}\frac {u^2}2\biggl(\frac{a+b}2\biggr)^{k+1}[\Xi_{ab},\Xi_{uu}]\,z^{{a-b}}
\\
\nonumber
&\qquad=\frac 1{(k+1)!}\sum_{u,a,b\in\F}(\delta_{bu}-\delta_{au})\frac {u^2}2\biggl(\frac{a+b}2\biggr)^{k+1}\Xi_{ab}\,z^{{a-b}}
\\
&\qquad=\frac 1{(k+1)!}\sum_{a,b\in\F}\frac{(b^2-a^2)}{2}\biggl(\frac{a+b}2\biggr)^{k+1}\Xi_{ab}\,z^{{a-b}}.
\end{align}
This proves Eq.~\eqref{eq:tobeproveng0}.

For the second statement, we use the notations of Eq.~\eqref{eq:bosferop2} to write
\be
j_k(z) \= \beta_{k+2}+\frac z{(k+1)!}\biggl(\frac{z\partial_z-w\partial_w}{2}\biggr)^{k+1}\NO{\psi(z)\psi^*(w)}\, \bigr|_{w=z}\,.
\ee
Hence, using Eq.~\eqref{eq:bosferop2}, we have
\be
\Phi j_k(z) \Phi^{-1} = \beta_{k+2}+\frac z{(k+1)!}\biggl(\frac{z\partial_z-w\partial_w}{2}\biggr)^{k+1}\frac{\NO{\e^{\varphi(z)-\varphi(w)}}-1}{z-w}\, \biggr|_{w=z}\,,
\ee
which, since the right-hand side is a (normally ordered) polynomial expression in $(z\partial_z)^l\varphi(z)$ for $l\geq 1$, is of the form of Eq.~\eqref{eq:diffpolyop}.

Finally, the constant in $\wh g_k^{[0]}(z)$ is equal to the constant in $\wh G_k^{[0]}(\epsilon)$.
By looking at~\eqref{eq:spectrum} with $\lambda=\emptyset$, using $r_\emptyset=1$ and Eq.~\eqref{eq:E0}, we see that this constant is $Q_{k+2}(\emptyset)=\beta_{k+2}\mspace{1mu}$.
\end{proof}

Next, we establish a formula for $\Phi^{-1}\,\wh g^{[1]}_k(z)\,\Phi$ in terms of quartic fermionic operators. 
This formula involves certain polynomials, which we introduce first.

\begin{lemma}\label{lem:ABpols}
There exist unique polynomials $A_k=A_k(a,b,c,d)$ and $B_k=B_k(a,b)$ satisfying the  initial values
\be
\label{eq:rrecic}
A_{-1}(a,b,c,d)\=0\,,\qquad B_{-1}(a,b)\=0
\ee
and recursions {\upshape ($k\geq 0$)}
\begin{align}
\label{eq:polrec1}
(k+2)\,(a-b+c-d)\,A_{k}(a,b,c,d)
&\=
\frac{a^2-b^2+c^2-d^2}2 A_{k-1}(a,b,c,d) \meno P_{k}(a,b,c,d)\,,
\\
\label{eq:polrec2}
(k+2)\,(a-b)\,B_{k}(a,b)
&\=
\frac{a^2-b^2}2 B_{k-1}(a,b)\meno R_{k}(a,b)
\meno T_{k}(a,b)\,,
\end{align}
where
\begin{align}
\nonumber
P_k(a,b,c,d)&\=\frac 1{k!\,2^k}\biggl((a-b)^2\bigl( (-a+b+c+d)^{k} - (a-b+c+d)^{k} \bigr) 
\\
\label{eq:Ppoly}
&\qquad\qquad\quad
+(c-d)^2\bigl((a+b-c+d)^{k} - (a+b+c-d)^{k}\bigr) \biggr)\,,
\\
\label{eq:Spoly}
R_k(a,b)&\=[\zeta^k]\,\e^{\zeta(a-b)/2}\bigl((b-\partial_\zeta)^2-(a-b)^2\bigr)\frac{1-\e^{-(a-b)\zeta}}{\sinh(\zeta/2)}\,,
\\
\label{eq:Tpoly}
T_k(a,b)&\=\frac 2{3\,k!}\left(b(b^2-\tfrac 14)-a(a^2-\tfrac 14)\right)\Bigl(\frac {a+b}2\Bigr)^k\,.
\end{align}
Moreover, for all $k\geq 1$, the polynomial~$A_k$ has degree $k+1$, and, for all $k\geq 0$, polynomial~$B_k$ has degree $k+2$, and we have
\be
\label{eq:reductionAgeneral}
A_k(a,b,c,a+c-b) = \frac 1{(k+1)!}(a-b)\frac{a^{k+1} - b^{k+1} +c^{k+1} - (a+c-b)^{k+1}}{b-c}
\ee
as well as
\be
\label{eq:reductionB}
B_k(a,a) \= \frac{k(k+3)}{(k+2)!}a^{k+2}\meno\frac 1{12\,k!}a^{k}\+ 2\beta_{k+1}a\meno 2(k+1)\beta_{k+2} \mspace{1mu},
\ee
where the~$\beta_k$ are defined by Eq.~\eqref{eq:betagen}.
\end{lemma}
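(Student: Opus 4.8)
The plan is to treat $A_k$ and $B_k$ separately, since their source terms behave very differently on the diagonals where the linear factors in \eqref{eq:polrec1}--\eqref{eq:polrec2} vanish. In both cases uniqueness is immediate: $\mathbb{Q}[a,b,c,d]$ is an integral domain and $a-b+c-d$, $a-b$ are not zero-divisors, so each recursion has at most one polynomial solution. Existence is exactly the assertion that the right-hand sides are divisible by the respective linear form. Throughout I would use the abbreviations $u=a-b$, $v=c-d$, $p=a+b$, $q=c+d$, $s=u+v=a-b+c-d$ and $Q=up+vq=a^2-b^2+c^2-d^2$.

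\smallskip
\noindent\emph{The family $B_k$.} Here divisibility is automatic. Setting $a=b$ annihilates $\tfrac{a^2-b^2}{2}B_{k-1}$, and one checks directly from \eqref{eq:Spoly}--\eqref{eq:Tpoly} that $R_k(a,a)=0$ and $T_k(a,a)=0$ — the former because the factor $1-\mathrm{e}^{-(a-b)\zeta}$ vanishes identically in $\zeta$ when $a=b$, the latter because its prefactor $b(b^2-\tfrac14)-a(a^2-\tfrac14)$ does. Thus the right-hand side of \eqref{eq:polrec2} vanishes at $a=b$, and $B_k$ exists. For the reduction formula \eqref{eq:reductionB} I would extract a \emph{self-contained} scalar recursion for the diagonal values $B_k(a,a)$ by applying l'H\^opital in $b$: since the transverse contribution $\tfrac{a^2-b^2}{2}\partial_b B_{k-1}$ carries a coefficient vanishing at $b=a$, no higher jet of $B_{k-1}$ survives and one obtains $(k+2)B_k(a,a)=a\,B_{k-1}(a,a)+\partial_bR_k|_{b=a}+\partial_bT_k|_{b=a}$. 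Computing the two derivative terms from \eqref{eq:Spoly}--\eqref{eq:Tpoly} and inserting the claimed closed form then reduces \eqref{eq:reductionB} to an identity among the coefficients $\beta_k$ that follows from \eqref{eq:betagen}; the degree statement follows by tracking the leading monomial $a^{k+2}$.

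\smallskip
\noindent\emph{The family $A_k$.} Now divisibility is \emph{not} automatic: on the hyperplane $s=0$ (that is $d=a+c-b$) neither $\tfrac{Q}{2}A_{k-1}$ nor $P_k$ vanishes, they merely cancel, and this cancellation is precisely the reduction formula \eqref{eq:reductionAgeneral} for $A_{k-1}$. Indeed $\tfrac{Q}{2}|_{s=0}=(a-b)(b-c)$, and substituting $d=a+c-b$ into \eqref{eq:Ppoly} gives $P_k|_{s=0}=\tfrac{(a-b)^2}{k!}\bigl(a^k-b^k+c^k-(a+c-b)^k\bigr)$, which equals $\tfrac{Q}{2}|_{s=0}$ times the right-hand side of \eqref{eq:reductionAgeneral} at index $k-1$. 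So existence and the reduction formula are entangled, and the naive route fails to close: establishing existence first and reading off $A_k|_{s=0}$ by l'H\^opital makes $A_k|_{s=0}$ depend on the transverse derivative $\partial_dA_{k-1}|_{s=0}$, whose own recursion calls for $\partial_d^2A_{k-2}|_{s=0}$, and so on up the jets. Breaking this regress is the main obstacle.

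\smallskip
I would resolve it with a generating-function argument. Working over $\mathbb{Q}(a,b,c,d)$, define $A_k$ by the bare recursion \eqref{eq:polrec1} with $A_{-1}=0$ and set $\mathcal A(t)=\sum_{k\ge0}A_kt^{k+2}$. Summing \eqref{eq:polrec1} against $t^{k+1}$ converts it into the linear ODE $s\,\mathcal A'(t)-\tfrac{Q}{2}\mathcal A(t)=\Psi(t)$ with $\mathcal A=O(t^2)$, where the exponential generating function of the $P_k$ from \eqref{eq:Ppoly} yields the closed form $\Psi(t)=2t\,u^2\mathrm{e}^{qt/2}\sinh(ut/2)+2t\,v^2\mathrm{e}^{pt/2}\sinh(vt/2)$. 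Solving by the integrating factor $\mathrm{e}^{-Qt/(2s)}$ and simplifying the four shifted exponents $\tfrac{q\pm u}{2}-\tfrac{Q}{2s}$, $\tfrac{p\pm v}{2}-\tfrac{Q}{2s}$ — each of which collapses to a single ratio with denominator $s$ — I expect the spurious $\mathrm{e}^{Qt/(2s)}$ contributions to cancel in pairs, leaving an explicit combination of the four genuine exponentials $\mathrm{e}^{(a-b+c+d)t/2}$, $\mathrm{e}^{(-a+b+c+d)t/2}$, $\mathrm{e}^{(a+b+c-d)t/2}$, $\mathrm{e}^{(a+b-c+d)t/2}$ with coefficients of the shape $\tfrac{ut}{c-b}-\tfrac{s}{(c-b)^2}$. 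The remaining apparent poles at $b=c$ and $a=d$ then cancel between paired terms; this simultaneously shows every $A_k\in\mathbb{Q}[a,b,c,d]$ (existence) and gives $\deg A_k=k+1$ from the $t$-expansion. The reduction formula \eqref{eq:reductionAgeneral} finally drops out by restricting the ODE to $s=0$: the term $s\,\mathcal A'$ disappears, leaving $\tfrac{Q}{2}A_k|_{s=0}=P_{k+1}|_{s=0}$ for every $k$, which rearranges into \eqref{eq:reductionAgeneral}. I expect the pole cancellation and the explicit assembly of $\Psi$ to be the most delicate bookkeeping, with the rest a direct computation.
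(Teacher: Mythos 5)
Your proposal is correct and takes essentially the same route as the paper: the $B_k$ case via divisibility of $R_k,T_k$ by $a-b$ plus the diagonal recursion $(k+2)B_k(a,a)=a\,B_{k-1}(a,a)+\partial_b R_k|_{b=a}+\partial_b T_k|_{b=a}$, and the $A_k$ case by converting \eqref{eq:polrec1} into the first-order linear ODE for the generating function, whose explicit solution matches the paper's closed form $\mathbf A_0$ (your integrating-factor computation does work: the spurious $\e^{Qt/(2s)}$ pieces cancel in pairs, since the four contributions $\tfrac{s}{(c-b)^2}-\tfrac{s}{(d-a)^2}+\tfrac{s}{(a-d)^2}-\tfrac{s}{(b-c)^2}$ sum to zero, and the exponents $\tfrac{q\pm u}{2}-\tfrac{Q}{2s}$ collapse to $\tfrac{u(c-b)}{s}$, $\tfrac{u(d-a)}{s}$, etc., exactly as you anticipate). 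The only minor deviations are in how polynomiality and \eqref{eq:reductionAgeneral} are certified: the paper combines the observation that the recursion forces the denominator of $A_k$ to be a power of $a-b+c-d$ with the finite limit of $\mathbf A_0$ as $d\to a-b+c$, whereas you check directly that the numerators vanish to second order at $b=c$ and $a=d$ (which indeed holds within each paired group), and the paper reads \eqref{eq:reductionAgeneral} off that same limit while you obtain it from the arguably cleaner restriction $\tfrac{Q}{2}A_k|_{s=0}=P_{k+1}|_{s=0}$ of the recursion, valid once polynomiality is known.
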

In Eq.~\eqref{eq:Spoly}, the notation $[\zeta^k]f(\zeta)$ stands for the coefficient of $\zeta^k$ in the power series $f(\zeta)\in\mathbb Q\llbracket\zeta\rrbracket$.
Note that~Eq.~\eqref{eq:reductionAgeneral} implies
\be
\label{eq:reductionA}
A_k(a,a,c,c)\=0,\qquad A_k(a,b,b,a)\=\frac{(a-b)(a^k-b^k)}{k!}.
\ee

\begin{proof}
It is not hard to see that $R_k$ and $T_k$ are divisible by $a-b$. Hence, the right-hand side of Eq.~\eqref{eq:polrec2} is divisible by $a-b$.
So, there exists a unique polynomial solution to the recursion in Eq.~\eqref{eq:polrec2} with the initial value given in Eq.~\eqref{eq:rrecic}.

In order to show the existence of polynomial solutions for the recursion of the $A_k\mspace{1mu}$, we note that the recursion is equivalent to the following ordinary differential equation
\be
\label{eq:recAODE}
(a-b+c-d)\,(y\partial_y+2)\,\mathbf A(y)\=\frac{a^2-b^2+c^2-d^2}2\,y\,\mathbf A(y)\meno\mathbf P(y)
\ee
for the formal power series
\begin{align}
\mathbf A(y)\={}&\sum_{k\geq 0}A_k(a,b,c,d)\,y^k
\\
\mathbf P(y) \={}& \sum_{k\geq 0}P_k(a,b,c,d)\,y^k\\
\={}& \nonumber
(a-b)^2\bigl(\e^{\frac y2(-a+b+c+d)}-\e^{\frac y2(a-b+c+d)}\bigr)+(c-d)^2\bigl(\e^{\frac y2(a+b-c+d)}-\e^{\frac y2(a+b+c-d)}\bigr)\,.
\end{align}
The general solution to~Eq.~\eqref{eq:recAODE} is
\be
\mathbf A_0(y)\+C\frac{1}{y^2}\exp\biggl(y\frac{a^2-b^2+c^2-d^2}{2 (a-b+c-d)}\biggr)
\ee
for an arbitrary constant $C$ and a special solution $\mathbf A_0(y)$ given explicitly by
\begin{align}
\nonumber
\mathbf A_0(y)&\= \frac{\e^{\frac y2 (-a+b+c+d)} \bigl( (a-b) (a-d)y+t\bigr) \,+\, \e^{\frac y2 (a+b+c-d)} \bigl( (a-d) (c-d)y-t\bigr)}{(a-d)^2\,y^2}
\\
\label{eq:A0}
&\qquad-\frac {\e^{\frac y2 (a-b+c+d)} \bigl( (a-b) (b-c)y+t\bigr)\,+\,\e^{\frac y2 (a+b-c+d)}  \bigl( (b-c) (c-d)y-t\bigr)}{(b-c)^2\,y^2}\,,
\end{align}
where $t=a-b+c-d$.
The only solution which is regular at $y=0$ is obtained for $C=0$.
Hence, $\mathbf A(y)$ coincides with the Taylor series of $\mathbf A_0(y)$ at $y=0$.
By looking at the recursion, one deduces that $A_k$ is a rational function of $a,b,c,d$ for which the denominator equals a power of $a-b+c-d$. 
Since $\mathbf A_0(y)$ has a well-defined limit 
\be
\lim_{d\to a-b+c}\mathbf A_0(y) \= (a-b)\frac{\left(\e^{y a}-\e^{y b}+\e^{c y}-\e^{y (a-b+c)}\right)}{y (b-c)}\,,
\ee
we infer that $A_k(a,b,c,d)$ is actually a polynomial in $a,b,c,d$.
Moreover, the last equation also proves Eq.~\eqref{eq:reductionAgeneral}.

The reduction~in Eq.~\eqref{eq:reductionB} follows by making the recursion for $B_k(a,a)$ explicit:
\be
\label{eq:polrec2Baa}
(k+3)B_{k+1}(a,a)\=a\, B_k(a,a)+\wt R_{k+1}(a)+\wt T_{k+1}(a)\,,
\ee
where
\begin{align}
\nonumber
\wt R_k(a)&\=\lim_{b\to a}\frac{R_k(a,b)}{b-a}\\
\nonumber &\=[\zeta^k]\biggl((a-\partial_\zeta)^2\frac{\zeta}{\sinh(\zeta/2)}\biggr)\\
&\=-2\beta_ka^2\+ 4(k+1)\beta_{k+1}a\meno 2(k+1)(k+2)\beta_{k+2}\,,
\\
\wt T_k(a)&\=\lim_{b\to a}\frac{T_k(a,b)}{b-a}\=\frac 2{k!}a^k\bigl(a^2-\tfrac 1{12}\bigr).
\end{align}
It is then readily checked that the right-hand side of Eq.~(\ref{eq:reductionB}) satisfies the same recursion in Eq.~\eqref{eq:polrec2Baa} and the same initial data, thus proving Eq.~(\ref{eq:reductionB}).

Finally, since the degrees of $P_k\mspace{1mu}$, $R_k$ and $T_k$ do not exceed $k+2$, $k+3$ and $k+3$ respectively, it is clear that the degree of $A_k$ cannot exceed $k+1$ and the degree of $B_k$ cannot exceed $k+2$.
That the degrees equal these bounds (except for $A_0$ which vanishes, as it can be checked by a direct computation) follows from the explicit reductions in Eqs.~\eqref{eq:reductionAgeneral} and~\eqref{eq:reductionB}.
\end{proof}

\begin{proposition}
\label{prop:G1fermion}
For all $k\geq -1$, we have
\be
\label{eq:G1fermion}
\Phi^{-1}\,\wh g^{[1]}_k(z)\,\Phi\=\!\sum_{a,b,c,d\in\F}\!A_k(a,b,c,d)\,\Xi_{abcd}\,z^{{a-b+c-d}}\+\sum_{a,b\in\F}B_k(a,b)\,\Xi_{ab}\,z^{{a-b}}\+\gamma_k\,,
\ee
where $\gamma_k\in\Q$ and $A_k$ and $B_k$ are the polynomials given in Lemma~\ref{lem:ABpols}.
\end{proposition}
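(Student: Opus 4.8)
The plan is to repeat, at order $\epsilon$, the three-step strategy of Proposition~\ref{prop:G0fermion}. Denote by $h_k(z)$ the right-hand side of Eq.~\eqref{eq:G1fermion}; the goal is to identify it with $\Phi^{-1}\wh g_k^{[1]}(z)\Phi$. Taking $j=1$ in the homogeneous recursion~\eqref{eq:BRrechomogeneous}, applying $\rho_0$ (so that $\hbar$ acts as the identity), and conjugating by $\Phi$ gives
\[
-(k+3)\,z\partial_z\,\bigl(\Phi^{-1}\wh g_{k+1}^{[1]}(z)\Phi\bigr)\=\bigl[\Phi^{-1}\wh g_k^{[1]}(z)\Phi,\ \Phi^{-1}\wh G_1^{[0]}\Phi\bigr]\+\bigl[\Phi^{-1}\wh g_k^{[0]}(z)\Phi,\ \Phi^{-1}\wh G_1^{[1]}\Phi\bigr],
\]
in which $\Phi^{-1}\wh g_k^{[0]}(z)\Phi$ is supplied by Proposition~\ref{prop:G0fermion} and the two Hamiltonians by Eqs.~\eqref{eq:LEMMA1} and~\eqref{eq:LEMMA2}. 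Exactly as before, this recursion together with the differential-polynomial constraint pins down the solution up to an additive scalar. It therefore suffices to verify that $h_k(z)$ (i) satisfies this recursion, (ii) conjugates under $\Phi$ to a normally ordered differential polynomial, and (iii) reduces correctly in the base case $k=-1$, where $g_{-1}^{[1]}=0$ forces $h_{-1}=\gamma_{-1}=0$, consistent with $A_{-1}=B_{-1}=0$ from Eq.~\eqref{eq:rrecic}. The undetermined scalar is then some $\gamma_k\in\Q$, which is all the statement asserts.

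For (i) I would extract separately the coefficients of $\Xi_{abcd}\,z^{a-b+c-d}$ and of $\Xi_{ab}\,z^{a-b}$, so that the whole operator identity decouples into the two scalar recursions~\eqref{eq:polrec1} and~\eqref{eq:polrec2} (shifted by one). The left-hand side contributes $-(k+3)(a-b+c-d)A_{k+1}$ and $-(k+3)(a-b)B_{k+1}$, since $z\partial_z$ acts by the $z$-exponent. The first commutator is handled by the diagonal rules~\eqref{eq:lemmacomm2uu} and~\eqref{eq:lemmacomm4uu} with $\Phi^{-1}\wh G_1^{[0]}\Phi=\sum_u\tfrac{u^2}{2}\Xi_{uu}$: it multiplies the quartic coefficient by $\tfrac12(b^2-a^2+d^2-c^2)$ and the quadratic coefficient by $\tfrac12(b^2-a^2)$, producing precisely the $A_k$- and $B_k$-terms on the right of~\eqref{eq:polrec1} and~\eqref{eq:polrec2}. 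In the second commutator the scalar part of $\Phi^{-1}\wh G_1^{[1]}\Phi$ drops out, and commuting the quadratic operator $\Phi^{-1}\wh g_k^{[0]}\Phi$ with the diagonal-quadratic piece $\sum_u\tfrac23 u(u^2-\tfrac14)\Xi_{uu}$ via~\eqref{eq:lemmacomm2uu} reproduces verbatim the polynomial $T_{k+1}$ of Eq.~\eqref{eq:Tpoly}.

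The remaining, and most substantial, contribution comes from commuting $\Phi^{-1}\wh g_k^{[0]}\Phi$ with the quartic piece $\sum_{a+c=b+d}(a-b)^2\Xi_{abcd}$ of $\Phi^{-1}\wh G_1^{[1]}\Phi$ through the full formula~\eqref{eq:lemmacomm4uv} (with the sign from antisymmetry of the commutator). Its four $\delta$-terms each replace one index of the quartic operator; resolving the constraint $a+c=b+d$ and carrying the weight $(\tfrac{p+q}2)^{k+1}$ from $\Phi^{-1}\wh g_k^{[0]}\Phi$ together with the factor $(a-b)^2$, these assemble into exactly the polynomial $P_{k+1}$ of Eq.~\eqref{eq:Ppoly}: for instance one of the four terms yields $-\tfrac1{(k+1)!}(a-b)^2\bigl(\tfrac{a-b+c+d}2\bigr)^{k+1}$, matching one summand of $P_{k+1}$, and the remaining three follow by the same bookkeeping together with the symmetry $\Xi_{abcd}=\Xi_{cdab}$. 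This establishes the quartic recursion~\eqref{eq:polrec1}. The genuinely delicate point, and where I expect the main difficulty, is the four $\eta$-terms of~\eqref{eq:lemmacomm4uv}: these yield quadratic operators and must reproduce $R_{k+1}$. In each of them the two matched indices only fix the \emph{difference} of the surviving pair, leaving a free half-integer to be summed; the $\delta^\pm$ sign conditions inside $\eta$ truncate this to a finite alternating sum over half-integers. Packaging the four $\eta$-contributions into a generating series in $\zeta$, I expect the kernel $\tfrac{1-\e^{-(a-b)\zeta}}{\sinh(\zeta/2)}$ to arise from the finite geometric sum over the truncated range, the operator $(b-\partial_\zeta)^2-(a-b)^2$ from recombining the four terms $\Xi_{a'b'},\Xi_{c'd'},\Xi_{a'd'},\Xi_{c'b'}$, and the extraction $[\zeta^{k+1}]$ from the weight; this is precisely the definition~\eqref{eq:Spoly} of $R_{k+1}$. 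Controlling the signs so that the four $\eta$-terms combine into this single closed expression is the hardest part of the computation.

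Once (i) is established, step (ii) follows as in Proposition~\ref{prop:G0fermion}: writing $h_k(z)$ through $\NO{\psi(z)\psi^*(w)\psi(z')\psi^*(w')}$ and reducing it by the boson-fermion correspondence (Wick-expanding into the two-point vertex operators of Eq.~\eqref{eq:bosferop2}) exhibits $\Phi h_k(z)\Phi^{-1}$ as a normally ordered polynomial in the $(z\partial_z)^\ell\varphi(z)$, hence a differential polynomial. Together with the matched base case this places $h_k$ in the unique solution class of the recursion and completes the proof, the surviving $z$-independent scalar being the constant $\gamma_k$.
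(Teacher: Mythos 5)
Your proposal follows essentially the same route as the paper's proof: the recursion~\eqref{eq:BRrechomogeneous} at $j=1$ combined with the differential-polynomial constraint to fix the solution up to a scalar, the commutator computations via Lemma~\ref{lemma:comm} decoupling into the polynomial recursions~\eqref{eq:polrec1} and~\eqref{eq:polrec2} with the $\delta$-terms producing $P_{k+1}$, the $\eta$-terms resummed in a $\zeta$-generating series to give $R_{k+1}$, and the diagonal piece giving $T_{k+1}$, and finally the quartic analogue of Eq.~\eqref{eq:bosferop2} (the paper's Eq.~\eqref{eq:quarticfermionboson}) for the differential-polynomial form. The points you flag as delicate are exactly the ones the paper works through, and your outline resolves them the same way, so the approach is correct and matches the published argument.
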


\begin{remark}
\label{rem:tk}
The numbers $\gamma_k$ are not computed in this proposition.
It will be shown below, in the proof of Theorem~\ref{thm:1}, that $\gamma_k=2\beta_2\beta_{k+1}+k(k+3)\beta_{k+3}\mspace{1mu}$, where $\beta_k$ is defined by Eq.~\eqref{eq:betagen}.
\end{remark}

\begin{proof}
The strategy is the same as in the proof of Proposition~\ref{prop:G0fermion}, with the difference that we do not need compute explicitly the constants $\gamma_k\mspace{1mu}$.
Therefore, as explained in that proof, we only need to show that, denoting by $h_k(z)$ the right-hand side in Eq.~\eqref{eq:G1fermion}, we have
\be
\label{eq:rrec}
-(k+3)\,z\partial_z\, h_{k+1}(z)\=\bigl[h_k(z),\Phi^{-1}\wh G_1^{[0]}\Phi\bigr]\+\bigl[\Phi^{-1}\wh g_k^{[0]}(z)\Phi,\Phi^{-1}\wh G_1^{[1]}\Phi\bigr]
\ee
and that~$\Phi h_k(z)\Phi^{-1}$ is of the form of Eq.~\eqref{eq:diffpolyop} for an appropriate differential polynomial.
We start with the first statement.
The left-hand side of Eq.~\eqref{eq:rrec} is
\begin{align}
\nonumber
&(k+3)\,\biggl(\sum_{a,b,c,d\in\F}(d-c+b-a)\,A_{k+1}(a,b,c,d)\,\Xi_{abcd}\,z^{{a-b+c-d}}
\\&\qquad\qquad\qquad\qquad\qquad\qquad\qquad\+\sum_{a,b\in\F}(b-a)\,B_{k+1}(a,b)\,\Xi_{ab}\,z^{{a-b}}\biggr).
\end{align}
The first term on the right-hand side of Eq.~\eqref{eq:rrec} is, using Lemmas~\ref{lemma:comm} and~\ref{lemma},
\begin{align}
\nonumber
&\bigl[h_k(z),\Phi^{-1}\wh G_1^{[0]}\Phi\bigr]
\\
\nonumber
&\=\!
\sum_{\ell,a,b,c,d\in\F}\!\frac{\ell^2}2A_k(a,b,c,d)\,[\Xi_{abcd},\Xi_{\ell\ell}]\,z^{{a-b+c-d}}
\+\sum_{\ell,a,b\in\F}\frac{\ell^2}2B_k(a,b)\,[\Xi_{ab},\Xi_{\ell\ell}]\,z^{{a-b}}
\\
&\=\!\sum_{a,b,c,d\in\F}\!\frac{d^2-c^2+b^2-a^2}2A_k(a,b,c,d)\,\Xi_{abcd}\,z^{{a-b+c-d}}\+\sum_{a,b\in\F}\frac{b^2-a^2}2B_k(a,b)\,\Xi_{ab}\,z^{{a-b}}.
\end{align}
The second term on the right-hand side of Eq.~\eqref{eq:rrec} is, once again using Lemma~\ref{lemma},
\begin{align}
\nonumber
\bigl[\Phi^{-1}\wh g_k^{[0]}(z)\Phi,\Phi^{-1}\wh G_1^{[1]}\Phi\bigr]
&\=
\frac 1{(k+1)!}\sum_{\substack{a,b,c,d,u,v\in\F,\\ a+c=b+d}}(a-b)^2\biggl(\frac{u+v}{2}\biggr)^{k+1}[\Xi_{uv},\Xi_{abcd}]\,z^{{u-v}}
\\
\label{eq:twolines}
&\quad
\+\frac 1{(k+1)!}\sum_{a,u,v\in\F}\frac 23a\biggl(a^2-\frac 14\biggr)\biggl(\frac{u+v}{2}\biggr)^{\! k+1}[\Xi_{uv},\Xi_{aa}]\,z^{{u-v}}.
\end{align}
It follows from Lemma~\ref{lemma:comm} that this expression contains terms which are either quartic or quadratic in the $\psi_i,\psi_i^*$, which we now consider separately.

The quartic terms come from the first line only and they are
\begin{align}
\nonumber
&\frac{-1}{(k+1)!}\sum_{\substack{a,b,c,d,u,v\in\F,\\ a+c=b+d}}(a-b)^2\biggl(\frac{u+v}{2}\biggr)^{\! k+1}(\delta_{du}\Xi_{abcv}-\delta_{cv}\Xi_{abud}+\delta_{bu}\Xi_{avcd}-\delta_{av}\Xi_{ubcd})\,z^{{u-v}}
\\
\nonumber
&\qquad\=
\meno\frac 1{(k+1)!}\sum_{a,b,c,v\in\F}(a-b)^2\biggl(\frac{a-b+c+v}{2}\biggr)^{\!  k+1}\,\Xi_{abcv}\,z^{{a-b+c-v}}
\\
\nonumber
&\qquad\qquad
\+\frac 1{(k+1)!}\sum_{a,b,d,u\in\F}(a-b)^2\biggl(\frac{u+b+d-a}{2}\biggr)^{\!  k+1}\,\Xi_{abud}\,z^{{a-b+u-d}}
\\
\nonumber
&\qquad\qquad
\meno\frac 1{(k+1)!}\sum_{a,c,d,v\in\F}(c-d)^2\biggl(\frac{a+c-d+v}{2}\biggr)^{\!  k+1}\,\Xi_{avcd}\,z^{{a-v+c-d}}
\\
\nonumber
&\qquad\qquad
\+\frac 1{(k+1)!}\sum_{b,c,d,u\in\F}(c-d)^2\biggl(\frac{u+b+d-c}{2}\biggr)^{\!  k+1}\,\Xi_{ubcd}\,z^{{u-b+c-d}}
\\
&\qquad\=\sum_{a,b,c,d\in\F}P_{k+1}(a,b,c,d)\,\Xi_{abcd}\,z^{{a-b+c-d}}\,,
\end{align}
with $P_k(a,b,c,d)$ as in Eq.~\eqref{eq:Ppoly}.
Therefore, using Eq.~\eqref{eq:polrec1}, we get that the coefficients of Eq.~\eqref{eq:rrec} match; here we note that the initial conditions~\eqref{eq:rrecic} agree with Eq.~\eqref{eq:initialdatumBRrec}.

The quadratic terms instead come from both lines on the right-hand side of Eq.~\eqref{eq:twolines}.
Those coming from the first line are
\begin{align}
\frac {-1}{(k+1)!}\sum_{\substack{a,b,c,d,u,v\in\F,\\ a+c=b+d}}(a-b)^2\biggl(\frac{u+v}{2}\biggr)^{\!  k+1}(\eta_{cduv}\Xi_{ab}
+\eta_{abuv}\Xi_{cd}
-\eta_{cbuv}\Xi_{ad}
-\eta_{aduv}\Xi_{cb})\,z^{{u-v}}\,.
\end{align}
By renaming summation indexes in the second and fourth terms by the transformations $a\leftrightarrow c$ and $b\leftrightarrow d$, we see that this expression equals
\begin{align}
\nonumber
\meno&\frac 2{(k+1)!}\sum_{\substack{a,b,c,d,u,v\in\F,\\ a+c=b+d}}\!\!\!(a-b)^2\biggl(\frac{u+v}{2}\biggr)^{\!  k+1}+\bigl(\eta_{cduv}\Xi_{ab}-\eta_{cbuv}\Xi_{ad}\bigr)\,z^{{u-v}}
\\
\nonumber
& =\meno
\frac 2{(k+1)!}\sum_{\substack{a,b,c,d,u,v\in\F,\\ a+c=b+d}}\!\!\!\bigl((a-b)^2-(a-d)^2\bigr)\biggl(\frac{u+v}{2}\biggr)^{\! k+1}\eta_{cduv}\Xi_{ab}\,z^{{u-v}}
\\
\nonumber
&=\meno
\frac 2{(k+1)!}\left(\sum_{\substack{a,b,c\in\F, \\ c<0,\ c+a-b>0}}\!\!\!\bigl((a-b)^2-(b-c)^2\bigr)\biggl(\frac{a+2c-b}{2}\biggr)^{\!  k+1}\Xi_{ab}\,z^{{a-b}}
\right.
\\
&
\nonumber
\qquad\qquad\qquad\qquad
\left.
-\!\!\!\sum_{\substack{a,b,c\in\F, \\ c>0,\ c+a-b<0}}\!\!\!\bigl((a-b)^2-(b-c)^2\bigr)\biggl(\frac{a+2c-b}{2}\biggr)^{\!  k+1}\Xi_{ab}\,z^{{a-b}}\right)
\\
&\=\sum_{a,b\in\F}\widehat R_{k+1}(a,b)\,\Xi_{ab}\,z^{{a-b}},
\end{align}
where $\widehat R_k(a,b)$ is defined as
\be
\label{eq:thelastexpression}
\widehat R_k(a,b)\=\frac 2{k!}\Biggl(\sum_{\substack{c\in\F^-, \\ c+a-b>0}}\!\!\!\bigl((b-c)^2-(a-b)^2\bigr)\bigl(\tfrac{a+2c-b}{2}\bigr)^{k}
-\!\!\!\sum_{\substack{c\in\F^+, \\c+a-b<0}}\!\!\!\bigl((b-c)^2-(a-b)^2\bigr)\bigl(\tfrac{a+2c-b}{2}\bigr)^{k}\Biggr).
\ee
Let us show that $\widehat R_k(a,b)=R_k(a,b)$, where $R_k(a,b)$ is as in Eq.~\eqref{eq:Spoly}:
\begin{align}
\nonumber
\widehat R_k(a,b) &\= 2[\zeta^k]\,\e^{\zeta(a-b)/2}\bigl((b-\partial_\zeta)^2-(a-b)^2\bigr)\biggl[\sum_{\substack{c\in\F^-, \\ c+a-b>0}}\e^{c\,\zeta}-\sum_{\substack{c\in\F^+, \\ c+a-b<0}}\e^{c\,\zeta}\biggr] 
\\
&\=[\zeta^k]\e^{\zeta(a-b)/2}\bigl((b-\partial_\zeta)^2-(a-b)^2\bigr)\frac{1-\e^{-(a-b)\zeta}}{\sinh(\zeta/2)}\=R_k(a,b)\,.
\end{align}
The quadratic terms coming from the second line of Eq.~\eqref{eq:twolines} are computed by Eq.~\eqref{eq:lemmacomm2uu} as
\begin{multline}
\frac 1{(k+1)!}\sum_{u,v\in\F}\frac 23\biggl(v\biggl(v^2-\frac 14\biggr)-u\biggl(u^2-\frac 14\biggr)\!\biggr)\,\biggl(\frac{u+v}{2}\biggr)^{\!k+1}\Xi_{uv}\,z^{{u-v}} \\
\=
\sum_{a,b\in\F}T_{k+1}(a,b)\,\Xi_{ab}\,z^{{a-b}}\,.
\end{multline}
Therefore, using Eq.~\eqref{eq:polrec2}, also the terms of the form $\Xi_{ab}z^{{a-b}}$ in Eq.~\eqref{eq:rrec} coincide; here we note that the initial conditions in Eq.~\eqref{eq:rrecic} agree with Eq.~\eqref{eq:initialdatumBRrec}.

It only remains to argue that $\Phi h_k(z)\Phi^{-1}$ is of the form of Eq.~\eqref{eq:diffpolyop} for appropriate differential polynomials.
To this end, we use the notations in~Theorem~\ref{thm:bosfer} to write
\begin{align}
\nonumber
h_k(z) &= z^2 \, A_k\bigl(D_{z_1},-D_{w_1},D_{z_2},-D_{w_2}\bigr)\,\NO{\psi(z_1)\psi^*(w_1)\psi(z_2)\psi^*(w_2)}\Bigr|_{\substack{z_i=z,\\ w_i=z}}
\\
&\quad\+z\,B_k(D_{z},-D_{w})\,\NO{\psi(z)\psi^*(w)}\bigr|_{w=z}\+\gamma_k
\end{align}
where $D_z=z\partial_z+\frac 12$.
To complete the proof, we claim that the following identity holds:
\begin{align}
\nonumber
&\Phi\,\NO{\psi(z_1)\psi^*(w_1)\psi(z_2)\psi^*(w_2)}\,\Phi^{-1}
\\
\nonumber
&\quad
\=\frac{(z_1-z_2)(w_1-w_2)}{(z_1-w_2)(w_1-z_2)(z_1-w_1)(z_2-w_2)}\NO{\e^{\varphi(z_1)-\varphi(w_1)+\varphi(z_2)-\varphi(w_2)}}
\\ \label{eq:quarticfermionboson} &\quad\qquad
\meno\frac{\NO{\e^{\varphi(z_1)-\varphi(w_1)}}
+\NO{\e^{\varphi(z_2)-\varphi(w_2)}}
-1}{(z_1-w_1)(z_2-w_2)}
\meno
\frac{\NO{\e^{\varphi(z_1)-\varphi(w_2)}}
+\NO{\e^{\varphi(z_2)-\varphi(w_1)}}
-1}{(z_1-w_2)(w_1-z_2)}\,.
\end{align}
Note that the right-hand side in Eq.~\eqref{eq:quarticfermionboson} is regular along all diagonals $z_1=z_2$, $w_1=w_2$, $z_i=w_j$ ($i,j=1,2$).
Therefore, assuming this formula and recalling Eq.~\eqref{eq:bosferop2}, we can infer that $\Phi h_k(z)\Phi^{-1}$ are indeed of the form of Eq.~\eqref{eq:diffpolyop} for appropriate differential polynomials exactly as we did in the proof of Proposition~\ref{prop:G0fermion}.

It only remains to prove Eq.~\eqref{eq:quarticfermionboson}.
To this end, we first note the identity
\begin{align}
\nonumber
&\NO{\e^{\varphi(z_1)-\varphi(w_1)}}\,\NO{\e^{\varphi(z_2)-\varphi(w_2)}} 
\\
\nonumber
&\= 
\e^{\sum_{n\geq 1}(z_1^n-w_1^n)\frac{p_n}n}\e^{-\sum_{n\geq 1}(z_1^{-n}-w_1^{-n})\partial_{p_n}}
\e^{\sum_{n\geq 1}(z_2^n-w_2^n)\frac{p_n}n}\e^{-\sum_{n\geq 1}(z_2^{-n}-w_2^{-n})\partial_{p_n}}
\\
\nonumber
&\=\e^{\sum_{n\geq 1}(z_1^n-w_1^n+z_2^n-w^n_2)\frac{p_n}n}\e^{-\sum_{n\geq 1}(z_1^{-n}-w_1^{-n}+z_2^{-n}-w_2^{-n})\partial_{p_n}}
\e^{-\sum_{n\geq 1}\frac 1n(z_1^{-n}-w_1^{-n})(z_2^n-w_2^n)}
\\
&\=\frac{(w_1-w_2) (z_1-z_2)}{(w_1-z_2) (z_1-w_2)}\,\NO{\e^{\varphi(z_1)-\varphi(w_1)+\varphi(z_2)-\varphi(w_2)}}
\end{align}
and then we similarly use Eq.~\eqref{eq:4to2} to get the identity of generating series
\begin{align}
\nonumber
\NO{\psi(z_1)\psi^*(w_1)}\,\NO{\psi(z_2)\psi^*(w_2)}&\=\NO{\psi(z_1)\psi^*(w_1)\psi(z_2)\psi^*(w_2)}-\frac 1{z_1-w_2}\NO{\psi(z_2)\psi^*(w_1)}
\\
&\qquad +\frac 1{w_1-z_2}\NO{\psi(z_2)\psi^*(w_1)}+\frac 1{(z_1-w_2)(w_1-z_2)}.
\end{align}
Using the last two identities and Eq.~\eqref{eq:bosferop2}, one obtains the claimed Eq.~\eqref{eq:quarticfermionboson}.
\end{proof}

\begin{corollary}
\label{corollary:G1fermion}
We have
\be
\label{eq:G1opfermion}
\Phi^{-1}\,\wh G^{[1]}_k\,\Phi\=\sum_{a,b,c\in\F}A_k(a,b,c,a+c-b)\,\Xi_{a,b,c,a+c-b}\+\sum_{a\in\F}B_k(a,a)\,\Xi_{aa}\+\gamma_k\,,
\ee
where the polynomials $A_k$ and $B_k$ are defined in Lemma~\ref{lem:ABpols}, cf.~Eqs.~\eqref{eq:reductionAgeneral} and~\eqref{eq:reductionB}.
\end{corollary}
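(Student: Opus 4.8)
The plan is to derive the corollary directly from Proposition~\ref{prop:G1fermion} by taking the formal residue in $z$. Recall that $G_k(\epsilon)=\mathrm{Res}_z\, g_k(\epsilon;z)\frac{\d z}z$ is, by definition, the coefficient of $z^0$ in $g_k(\epsilon;z)$; comparing the $\epsilon$-expansions in Eq.~\eqref{eq:decomp} order by order, the operator $G_k^{[1]}$ is therefore the coefficient of $z^0$ in $g_k^{[1]}(z)$. First I would note that the representation $\rho_0$ and conjugation by the boson--fermion isomorphism $\Phi$ act coefficientwise in the formal variable $z$ (neither operation involves $z$), and so both commute with the extraction of the $z^0$-coefficient. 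Consequently $\Phi^{-1}\,\wh G_k^{[1]}\,\Phi$ is the coefficient of $z^0$ in $\Phi^{-1}\,\wh g_k^{[1]}(z)\,\Phi$.

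It then remains to read off this coefficient from the right-hand side of Eq.~\eqref{eq:G1fermion}. In the quartic sum, the factor $z^{a-b+c-d}$ contributes to $z^0$ precisely when $a-b+c-d=0$, that is $d=a+c-b$; eliminating $d$ through the relation $d=a+c-b$ produces the first sum in Eq.~\eqref{eq:G1opfermion}, with coefficient $A_k(a,b,c,a+c-b)$. In the quadratic sum, $z^{a-b}$ contributes to $z^0$ only when $b=a$, giving $\sum_{a\in\F}B_k(a,a)\,\Xi_{aa}$, and the constant $\gamma_k$ is already its own $z^0$-part. This yields Eq.~\eqref{eq:G1opfermion} verbatim.

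To make the coefficients explicit one invokes the reductions established in Lemma~\ref{lem:ABpols}: the evaluation of $A_k(a,b,c,a+c-b)$ is given by Eq.~\eqref{eq:reductionAgeneral} and that of $B_k(a,a)$ by Eq.~\eqref{eq:reductionB}. Since the statement is essentially a specialization of Proposition~\ref{prop:G1fermion}, I do not expect any substantive obstacle; the only point that deserves a line of justification is the interchange of the residue with $\rho_0$ and $\Phi$, which is immediate as these maps are independent of $z$. All of the genuine work---establishing the generating-series identity and identifying the polynomials $A_k,B_k$---has already been carried out in Proposition~\ref{prop:G1fermion} and Lemma~\ref{lem:ABpols}.
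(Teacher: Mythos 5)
Your proposal is correct and coincides with the paper's (implicit) argument: the paper states this corollary without proof precisely because it is the $z^0$-coefficient (formal residue) specialization of Proposition~\ref{prop:G1fermion}, with $\rho_0$ and conjugation by $\Phi$ acting coefficientwise in $z$, exactly as you argue. Extracting $z^0$ forces $d=a+c-b$ in the quartic sum and $b=a$ in the quadratic sum, yielding Eq.~\eqref{eq:G1opfermion} with the coefficients evaluated via Eqs.~\eqref{eq:reductionAgeneral} and~\eqref{eq:reductionB}.
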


\subsection{Proof of Theorem~\ref{thm:1}}

We are now ready for the proof of Theorem~\ref{thm:1}.
We first have the following simple lemma.

\begin{lemma}
\label{lemma:simple}
We have
\be
E^{[1]}_k(\lambda;0)\=\frac 1{24}\langle s_\lambda,\wh G^{[1]}_ks_\lambda\rangle \=\frac 1{24}
\langle v_\lambda,\Phi^{-1}\wh G^{[1]}_k\Phi v_\lambda\rangle\,.
\ee
\end{lemma}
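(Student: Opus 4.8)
The plan is to carry out standard first-order perturbation theory for the commuting operators $\wh G_k(\epsilon,0)$, exploiting the gauge-fixing condition~\eqref{eq:gauge}. Writing $\wh G_k(\epsilon,0)=\sum_{j\geq 0}\wh G^{[j]}_k\,(\epsilon/24)^j$ as in Eq.~\eqref{eq:decomp}, I would substitute the expansions $r_\lambda(\p;\epsilon)=s_\lambda+\epsilon\,r^{[1]}_\lambda+O(\epsilon^2)$ and $E_k(\lambda;\epsilon,0)=E^{[0]}_k(\lambda;0)+\epsilon\,E^{[1]}_k(\lambda;0)+O(\epsilon^2)$ from Eq.~\eqref{eq:spectrum1} into the eigenvalue equation~\eqref{eq:spectrum} at $c=0$, and then compare powers of~$\epsilon$. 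The order-$\epsilon^0$ coefficient recovers Dubrovin's relation $\wh G^{[0]}_k s_\lambda=E^{[0]}_k(\lambda;0)\,s_\lambda$ of Eq.~\eqref{eq:Dubrovin}, while the order-$\epsilon^1$ coefficient reads
\be
\wh G^{[0]}_k\,r^{[1]}_\lambda\+\frac 1{24}\,\wh G^{[1]}_k\,s_\lambda\=E^{[0]}_k(\lambda;0)\,r^{[1]}_\lambda\+E^{[1]}_k(\lambda;0)\,s_\lambda\,.
\ee

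Next I would pair this identity with~$s_\lambda$ in the standard scalar product on~$\mathsf B$. By Eq.~\eqref{eq:Dubrovin} the operator $\wh G^{[0]}_k$ is diagonal in the orthonormal Schur basis, hence self-adjoint, so that $\langle s_\lambda,\wh G^{[0]}_k r^{[1]}_\lambda\rangle=E^{[0]}_k(\lambda;0)\langle s_\lambda,r^{[1]}_\lambda\rangle$. The gauge condition~\eqref{eq:gauge} forces $\langle s_\lambda,r^{[1]}_\lambda\rangle=0$, which annihilates this term together with the matching term on the right-hand side; using $\langle s_\lambda,s_\lambda\rangle=1$ one is then left precisely with $E^{[1]}_k(\lambda;0)=\frac 1{24}\langle s_\lambda,\wh G^{[1]}_k s_\lambda\rangle$, which is the first claimed equality.

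For the second equality I would transport the matrix element to the fermionic Fock space through the boson-fermion correspondence. By Remark~\ref{rem:isometry} the map~$\Phi$ is an isometry with $\Phi\,v_\lambda=s_\lambda$, so setting $\wh G^{[1]}_k s_\lambda=\Phi\bigl(\Phi^{-1}\wh G^{[1]}_k\Phi\,v_\lambda\bigr)$ and using $\langle\Phi x,\Phi y\rangle=\langle x,y\rangle$ gives $\langle s_\lambda,\wh G^{[1]}_k s_\lambda\rangle=\langle v_\lambda,\Phi^{-1}\wh G^{[1]}_k\Phi\,v_\lambda\rangle$, completing the proof. I do not anticipate any genuine obstacle in this lemma: the only point requiring a word of justification is the self-adjointness of $\wh G^{[0]}_k$, and that is immediate from its diagonal action on $\{s_\lambda\}$. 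The substantive computation lies downstream, namely in evaluating $\langle v_\lambda,\Phi^{-1}\wh G^{[1]}_k\Phi\,v_\lambda\rangle$ via the quartic fermionic formula of Corollary~\ref{corollary:G1fermion}, rather than in the lemma itself.
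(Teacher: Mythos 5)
Your proposal is correct and follows essentially the same route as the paper: the paper likewise extracts the order-$\epsilon$ coefficient of Eq.~\eqref{eq:spectrum} at $c=0$, kills the $\langle s_\lambda,\wh G_k^{[0]}r_\lambda^{[1]}\rangle$ term using the symmetry of $\wh G_k^{[0]}$ (diagonal on the orthonormal Schur basis) together with the gauge condition~\eqref{eq:gauge}, and obtains the second equality from the isometry property of $\Phi$ in the boson-fermion correspondence. Your write-up merely makes the perturbative bookkeeping more explicit; no gap.
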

\begin{proof}
The second equality follows from Theorem~\ref{thm:bosfer}.
The first one follows by considering the linear terms in~$\epsilon$ in Eq.~\eqref{eq:spectrum} (setting $c=0$); here one has to note that 
\be
\langle s_\lambda,\wh G_k^{[0]}r_\lambda^{[1]}\rangle = \langle \wh G_k^{[0]}s_\lambda,r_\lambda^{[1]}\rangle = 0
\ee
where we make use of the fact that $\wh G_k^{[0]}$ is diagonal on the Schur basis (hence symmetric with respect to the scalar product $\langle\ ,\ \rangle$) and of Eq.~\eqref{eq:gauge}.
\end{proof}

According to Lemma~\ref{lemma:simple} and Corollary~\ref{corollary:G1fermion}, we have 
\be
\label{eq:start}
24 \mspace{1mu} E^{[1]}_k(\lambda;0)\=\!\!\sum_{a,b,c\in\F}\!\!A_k(a,b,c,a+c-b)\langle v_\lambda,\Xi_{a,b,c,a+c-b}v_\lambda\rangle\+\sum_{a\in\F}B_k(a,a)\langle v_\lambda,\Xi_{aa}v_\lambda\rangle\+\gamma_k\,.
\ee
All the ingredients needed to simplify this expression are given in the following proposition.
Recall the constant terms~$\beta_k$ of the shifted symmetric functions~$Q_k$ defined by Eq.~\eqref{eq:betagen}.

\begin{proposition}\label{prop:final}
Let $a,b,c,d\in \F$ and $k\geq 0$.
The following are true:
\begin{enumerate}[{\upshape(i)}]
\item 
$\label{it:quadraticexpectation}
\displaystyle\langle v_\lambda,\Xi_{aa}v_\lambda\rangle \=\sum_{c\in C_\lambda}{\rm sgn}(c)\,\delta_{a,c}\mspace{1mu}.
$

\item\label{it:nonzero}
The scalar product $\langle v_\lambda,\Xi_{abcd}v_\lambda\rangle$ is nonzero only if ($a=b$ and $c=d$), or ($a=d$ and $b=c$).

\item \label{it:quarticexpectation}
$\displaystyle
\langle v_\lambda,\Xi_{abba}v_\lambda\rangle \= -\sum_{c,c'\in C_\lambda}{\rm sgn}(c\,c')\,\delta_{a,c}\,\delta_{b,c'}\qquad (a\,\not=\,b)\mspace{1mu}.
$\vspace{-3pt}
\end{enumerate}
\end{proposition}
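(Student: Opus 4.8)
The plan is to treat the three statements in increasing order of difficulty, reducing the quartic expectation values first to products of quadratic operators, and ultimately to pure occupation numbers of the set~$S_\lambda$.

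For part~\eqref{it:quadraticexpectation} I would compute the action of $\Xi_{aa}$ on the basis vector $v_\lambda=v_{S_\lambda}$ directly. One checks from the definitions that $\psi_a\psi_a^*\,v_S=v_S$ if $a\in S$ and is $0$ otherwise, while $\psi_a^*\psi_a\,v_S=v_S$ if $a\notin S$ and is $0$ otherwise (the position-dependent signs cancel in each case). Since the fermionic normal order gives $\Xi_{aa}=\psi_a\psi_a^*$ for $a>0$ and $\Xi_{aa}=-\psi_a^*\psi_a$ for $a<0$, this yields $\langle v_\lambda,\Xi_{aa}v_\lambda\rangle=[a\in S_\lambda]$ for $a>0$ and $-[a\notin S_\lambda]$ for $a<0$. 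The bijection~\eqref{eq:SandC} then identifies this with $\sum_{c\in C_\lambda}\mathrm{sgn}(c)\,\delta_{a,c}$, because a positive $a$ lies in $S_\lambda$ precisely when $a\in C_\lambda^+$, and a negative $a$ lies outside $S_\lambda$ precisely when $a\in C_\lambda^-$.

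For parts~\eqref{it:nonzero} and~\eqref{it:quarticexpectation} the key device is to reduce the quartic operator to quadratic ones via~\eqref{eq:4to2}, which in the form $\Xi_{abcd}=\Xi_{ab}\Xi_{cd}+\delta_{ad}^-\Xi_{cb}-\delta_{bc}^+\Xi_{ad}-\delta_{ad}^-\delta_{bc}^+$ is exactly what I need. Taking the diagonal matrix element against $v_\lambda$ and inserting a resolution of the identity into the leading term $\langle v_\lambda,\Xi_{ab}\Xi_{cd}v_\lambda\rangle$, I would track how the intermediate set differs from $S_\lambda$: the factor $\Xi_{cd}$ deletes $d$ and adjoins $c$, and $\Xi_{ab}$ must then restore $S_\lambda$, which forces either $c=d$ together with $a=b$, or $c=b$ together with $a=d$. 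The three correction terms carry $\delta_{ad}^-$ and $\delta_{bc}^+$, and by part~\eqref{it:quadraticexpectation} the surviving quadratic expectations $\langle v_\lambda,\Xi_{cb}v_\lambda\rangle$, $\langle v_\lambda,\Xi_{ad}v_\lambda\rangle$ are themselves diagonal; all of these survive only under $a=d,\,b=c$. Collecting cases gives the support statement~\eqref{it:nonzero}.

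For part~\eqref{it:quarticexpectation}, with $a\neq b$ I would evaluate the $(a=d,\,b=c)$ contribution. The leading term $\langle v_\lambda,\Xi_{ab}\Xi_{ba}v_\lambda\rangle$ is cleanest to compute through the anticommutation identity $\psi_a\psi_b^*\psi_b\psi_a^*=\psi_a\psi_a^*-\psi_b\psi_b^*\psi_a\psi_a^*$ (valid since $a\neq b$), giving $[a\in S_\lambda]\,[b\notin S_\lambda]$ and avoiding any explicit sign bookkeeping; the remaining corrections in the above decomposition specialise (via $c=b$, $d=a$) to $[a<0]\,\langle v_\lambda,\Xi_{bb}v_\lambda\rangle-[b>0]\,\langle v_\lambda,\Xi_{aa}v_\lambda\rangle-[a<0][b>0]$, which I rewrite using part~\eqref{it:quadraticexpectation}. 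Setting $n_a=[a\in S_\lambda]$ and collecting everything as a polynomial in $n_a,n_b$ and the sign indicators, I would then compare with the claimed $-\mathrm{sgn}(a)\mathrm{sgn}(b)\,[a\in C_\lambda]\,[b\in C_\lambda]$ after using~\eqref{eq:SandC} in the form $[a\in C_\lambda]=n_a$ for $a>0$ and $[a\in C_\lambda]=1-n_a$ for $a<0$. I expect the main obstacle to lie precisely in this final bookkeeping: reconciling the two descriptions of the fermionic state (occupation of $S_\lambda$ versus the signed modified Frobenius coordinates in $C_\lambda$) and checking that the contributions of the $\delta^\pm$ corrections cancel so as to leave the clean symmetric product $\mathrm{sgn}(a)\mathrm{sgn}(b)\,[a\in C_\lambda]\,[b\in C_\lambda]$.
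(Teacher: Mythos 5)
Your proposal is correct, and for parts (\ref{it:quadraticexpectation}) and (\ref{it:nonzero}) it is essentially the paper's argument: part (\ref{it:quadraticexpectation}) is the same occupation-number computation (the paper phrases it via the expression~\eqref{eq:vlambda} of $v_\lambda$ in Frobenius coordinates and commutes $\psi_a\psi_a^*$ past the creation operators, while you read the occupation of $S_\lambda$ directly and convert via~\eqref{eq:SandC} --- a cosmetic difference), and part (\ref{it:nonzero}) uses exactly the Wick-type decomposition~\eqref{eq:4to2} together with diagonality of the quadratic expectations; your ``resolution of the identity'' in the leading term is the paper's observation that $(\Xi_{ab})^*=\Xi_{ba}$ reduces it to $\langle \Xi_{ba}v_\lambda,\Xi_{cd}v_\lambda\rangle$.

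For part (\ref{it:quarticexpectation}) you take a genuinely different route. The paper notes that for $a\neq b$ the normal ordering~\eqref{eq:NOF} gives the operator identity $\Xi_{abba}=-\Xi_{aa}\Xi_{bb}$, whence the claim is immediate from the eigenvalue formula~\eqref{eq:Eaa} with no case analysis at all. You instead re-apply~\eqref{eq:4to2} at $c=b$, $d=a$, compute the leading term $\langle v_\lambda,\Xi_{ab}\Xi_{ba}v_\lambda\rangle=n_a(1-n_b)$ (with $n_a=[a\in S_\lambda]$) via the anticommutation relations, and are left with the corrections $[a<0]\,\langle v_\lambda,\Xi_{bb}v_\lambda\rangle-[b>0]\,\langle v_\lambda,\Xi_{aa}v_\lambda\rangle-[a<0][b>0]$. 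The ``final bookkeeping'' you flag as the main obstacle does close, and more cleanly than you anticipate: writing $e_a=\langle v_\lambda,\Xi_{aa}v_\lambda\rangle$, part (\ref{it:quadraticexpectation}) gives $e_a=n_a$ for $a>0$ and $e_a=n_a-1$ for $a<0$, which is precisely $\mathrm{sgn}(a)\,[a\in C_\lambda]$ by~\eqref{eq:SandC}; a four-case check (signs of $a$ and $b$) shows your expression collapses in every case to $-e_a e_b$, which is the claimed $-\sum_{c,c'\in C_\lambda}\mathrm{sgn}(c\,c')\,\delta_{a,c}\,\delta_{b,c'}$. So your route trades the paper's one-line normal-ordering identity for an elementary but four-case computation; the paper's shortcut is shorter and sign-free, while yours avoids having to verify $\Xi_{abba}=-\Xi_{aa}\Xi_{bb}$ from~\eqref{eq:NOF} and stays entirely within the Wick calculus already set up for part (\ref{it:nonzero}).
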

\begin{proof}
(\ref{it:quadraticexpectation})
Let us first show that
\be
\label{eq:Eaa}
\Xi_{aa}v_\lambda\=\Biggl(\,\sum_{c\in C_\lambda}{\rm sgn}(c)\,\delta_{ac}\Biggr)v_\lambda\,.
\ee
As a special case of Eq.~\eqref{eq:NOF}, note that
\be
\Xi_{aa}\=\NO{\psi_a\psi_a^*}\=
\begin{cases}
\psi_a\psi_a^*,& \mbox{if }a>0,\\
-\psi_a^*\psi_a,& \mbox{if }a<0.
\end{cases}
\ee
That is, when $a\in \F^+$ we have $\Xi_{aa}v_\lambda=\sigma_\lambda\psi_a\psi_a^*\psi_{c_1}\cdots\psi_{c_d}\psi_{c_d^*}^*\cdots\psi_{c_1^*}^*v_\emptyset$ (cf. Eq.~\eqref{eq:vlambda}). Using the commutation relations~\eqref{eq:anticomm} and that $\psi_a^*v_\emptyset=0$, we infer that this expression is nonzero only if $a=c_i$ for some $i$, in which case $\Xi_{aa}v_\lambda=v_\lambda\mspace{1mu}$.
Similarly, when $a\in \F^-$ one obtains $\Xi_{aa}v_\lambda=-\sigma_\lambda\psi_a^*\psi_a\psi_{c_1}\cdots\psi_{c_d}\psi_{c_d^*}^*\cdots\psi_{c_1^*}^*v_\emptyset$. Since $\psi_av_\emptyset=0$, the latter expression is nonzero only if $a=c_i^*$ for some $i$, in which case $\Xi_{aa}v_\lambda=-v_\lambda$.
Eq.~\eqref{eq:Eaa} follows and~(\ref{it:quadraticexpectation}) is an immediate consequence.

(\ref{it:nonzero}) According to Eq.~\eqref{eq:4to2}, we have
\be
\label{eq:4to2exp}
\langle v_\lambda,\Xi_{abcd}v_\lambda\rangle \=
\langle v_\lambda,\Xi_{ab}\Xi_{cd}v_\lambda\rangle 
+\delta_{ad}^-\langle v_\lambda,\Xi_{cb}v_\lambda\rangle 
-\delta_{bc}^+\langle v_\lambda,\Xi_{ad}v_\lambda\rangle
-\delta_{ad}^-\delta_{bc}^+\,.
\ee
Since $\langle v_\lambda,\Xi_{uv}v_\lambda\rangle $ can be nonzero only if $u=v$, the last three terms on the right-hand side can be nonzero if and only if $a=d$ and $b=c$. Since $(\Xi_{ab})^*=\Xi_{ba}$, the first term on the right-hand side is $\langle \Xi_{ba}v_\lambda,\Xi_{cd} v_\lambda\rangle$, which can be nonzero only if $a=d$ and $b=c$ or if $a=b$ and $c=d$.

(\ref{it:quarticexpectation})
Since $a\not=b$, it can be checked from Eq.~\eqref{eq:NOF} that $\Xi_{abba}=-\Xi_{aa}\Xi_{bb}$ and~(\ref{it:quarticexpectation}) follows from Eq.~\eqref{eq:Eaa}.
\end{proof}

We are ready to prove our first main result.
\begin{proof}[Proof of Theorem~\ref{thm:1}]
We start from Eq.~\eqref{eq:start} and use Proposition~\ref{prop:final}, as well as Eqs.~\eqref{eq:reductionB} and~\eqref{eq:reductionA}, to obtain the following expression for $24\, E^{[1]}_k(\lambda;0)$:
\begin{multline}
\sum_{c,c'\in C_\lambda}\frac{c^kc'+c\,{c'}^k-c^{k+1}-{c'}^{k+1}}{k!}{\rm sgn}(c\,c') \+
\\
\+\sum_{c\in C_\lambda}\biggl(\frac{k(k+3)}{(k+2)!}c^{k+2}\meno \frac 1{12\,k!}c^{k}\+ 2\beta_{k+1}c\meno 2(k+1)\beta_{k+2}\biggr){\rm sgn}(c)\+\gamma_k\,.
\end{multline}
Observe, using Eq.~\eqref{eq:QkC} and $\sum_{c\in C_\lambda}{\rm sgn}(c)\=0$, that the first line is $2(Q_2-\beta_2)(Q_{k+1}-\beta_{k+1})$ and that the second line is
\be
k(k+3)(Q_{k+3}-\beta_{k+3})\+ 2\beta_2(Q_{k+1}-\beta_{k+1})\+2\beta_{k+1}(Q_2-\beta_2)+\gamma_k\mspace{1mu}.
\ee
Hence, 
\be
24\, E^{[1]}_k(\lambda;0)\=2Q_2Q_{k+1}\+k(k+3)Q_{k+3}\+\gamma_k\meno 2\beta_2\beta_{k+1}\meno k(k+3)\beta_{k+3}\mspace{1mu}.
\ee
Using Theorems~\ref{thm:BO} and~\ref{thm:IR}, we deduce that the constant term $\gamma_k-2\beta_2\beta_{k+1}-k(k+3)\beta_{k+3}$ must vanish, which concludes the proof (and yields the explicit value for~$\gamma_k$ anticipated in Remark~\ref{rem:tk}).
This proves Eq.~\eqref{eq:E1simple}, and hence Theorem~\ref{thm:1}.
\end{proof}

\subsection{Proof of Corollary~\ref{corollary:doubleHodge}}\label{sec:proofcorollary}
\begin{proof}
Recalling Eqs.~\eqref{eq:gk} and \eqref{eq:I}, the coefficient of $\epsilon$ in the constant term of $G_k(\epsilon)$ is
\be
I_{g,k;\emptyset}\=\int_{{\rm DR}_g(0)}\psi_1^k\lambda_1\=\,(-1)^g\int_{\overline{\mathcal M}_{g,1}}\lambda_1\lambda_g\psi^k_1\mspace{1mu},
\ee
where we use the fact that ${\rm DR}_g(0)$ is Poincar\'e dual to $(-1)^g\lambda_g$, cf.~\cite[Section~0.5.3]{JPPZ}.
This intersection number is nonzero only if $3g-2=1+g+k$, i.e., only if $k=2g-3$.
On the other hand, using Eq.~\eqref{eq:spectrum1} with $\lambda=\emptyset$, the fact that $r_\emptyset=1$, and Theorem~\ref{thm:1}, the same term is computed as the value of Eq.~\eqref{eq:E1simple} at $\lambda=\emptyset$, for which we finally use the identity $Q_k(\emptyset)=\beta_k\mspace{1mu}$ (cf. Eq.~\eqref{eq:Qk}).
\end{proof}

\section{First order correction to the eigenvector}
\label{introrefsecthm2}
\subsection{A metric on partitions}

Given $\lambda,\mu\in \partitions$, we define the \emph{Hamming distance} between $\lambda$ and $\mu$ as
\begin{equation}\label{eq:Hamming}
d(\lambda,\mu)\=\frac{1}{2}\bigl|C_\lambda\, \Delta\, C_\mu\bigr|
\end{equation}
where $X\,\Delta\,Y = (X\cup Y)\setminus (X\cap Y)$ is the symmetric difference of sets $X,Y$.

\begin{lemma}\ \vspace{-3pt}
\label{lemma:Hamming}
    \begin{enumerate}[{\upshape(i)}]\itemsep1pt
    \item\label{metric:i} The Hamming distance~$d$ is an integer-valued metric on $\partitions$. 
    \item\label{metric:ii} We have $d(\lambda,\mu)=\frac{1}{2}\bigl|S_\lambda\, \Delta\, S_\mu\bigr|$, where $S_\lambda$ is given by Eq.~\eqref{eq:Slambda}.
    \item\label{point} If $|\lambda|=|\mu|$, then $d(\lambda,\mu)\neq 1$. 
    \item\label{metric:iii} For all $k\geq 0$, if $d(\lambda,\mu)=2k$, then $|S_\lambda\setminus S_\mu|\=2k\=|S_\mu\setminus S_\lambda|$.
    \end{enumerate}
\end{lemma}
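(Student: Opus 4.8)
The plan is to prove part~(ii) first, as it furnishes the dictionary between the two descriptions of the distance, and then to reduce everything else to elementary counting on Maya diagrams. For~(ii) I would split both symmetric differences according to the sign of the half-integers and use the identities~\eqref{eq:SandC}. On $\F^+$ one has $S_\lambda \cap \F^+ = C_\lambda^+$, so $(S_\lambda \Delta S_\mu)\cap\F^+ = C_\lambda^+ \Delta C_\mu^+$ directly. On $\F^-$ one instead has $\F^- \setminus S_\lambda = C_\lambda^-$; since complementation within a fixed set preserves symmetric differences, $(\F^-\setminus S_\lambda)\Delta(\F^-\setminus S_\mu) = C_\lambda^- \Delta C_\mu^-$, which equals $(S_\lambda\Delta S_\mu)\cap\F^-$. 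Adding the two contributions and using $C_\lambda = C_\lambda^+ \sqcup C_\lambda^-$ yields $|S_\lambda\Delta S_\mu| = |C_\lambda\Delta C_\mu|$, which is~(ii).

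The arithmetic heart of the lemma is the claim that $|S_\lambda \setminus S_\mu| = |S_\mu \setminus S_\lambda|$ for any two partitions; this single identity yields both the integrality in~(i) and part~(iv). To prove it I would compute $|S_\lambda\setminus S_\mu| - |S_\mu\setminus S_\lambda|$ by splitting over $\F^+$ and $\F^-$. On $\F^+$ the relevant sets are finite, so the difference collapses to $|S_\lambda\cap\F^+| - |S_\mu\cap\F^+| = |C_\lambda^+| - |C_\mu^+|$; on $\F^-$ one passes to the finite complements $C_\lambda^- = \F^-\setminus S_\lambda$ and $C_\mu^-$, whence the difference becomes $|C_\mu^-| - |C_\lambda^-|$. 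Summing and using that the modified Frobenius coordinates come in equinumerous positive and negative families, $|C_\lambda^+| = |C_\lambda^-| = d$ and $|C_\mu^+| = |C_\mu^-|$ (cf.~\eqref{eq:modFrob}), all four terms cancel. Hence $|S_\lambda\Delta S_\mu|$ is even, so by~(ii) $|C_\lambda\Delta C_\mu|$ is even and $d$ is integer-valued; and if $d(\lambda,\mu)=2k$ then $|S_\lambda\Delta S_\mu| = 4k$ forces $|S_\lambda\setminus S_\mu| = |S_\mu\setminus S_\lambda| = 2k$, which is~(iv).

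The remaining metric axioms in~(i) are the standard properties of the Hamming (symmetric-difference) metric: nonnegativity and symmetry are immediate, $d(\lambda,\mu)=0$ forces $C_\lambda=C_\mu$ and hence $\lambda=\mu$ since the Frobenius coordinates determine a partition, and the triangle inequality follows from $A\,\Delta\,C = (A\,\Delta\,B)\,\Delta\,(B\,\Delta\,C)$ together with $|X\,\Delta\,Y| \le |X| + |Y|$. For~(iii) I would argue via the single-bead-move picture: if $d(\lambda,\mu)=1$ then, by the identity just proved, $|S_\lambda\setminus S_\mu| = |S_\mu\setminus S_\lambda| = 1$, so $S_\mu = (S_\lambda\setminus\{x\})\cup\{y\}$ with $x\ne y$. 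Writing the size formula $|\lambda| = \sum_{c\in C_\lambda}\mathrm{sgn}(c)\,c$ as an energy $E(S) = \sum_{z\in S\cap\F^+} z - \sum_{z\in\F^-\setminus S} z$ with $E(S_\lambda)=|\lambda|$, one checks that removing a bead at $x$ changes $E$ by $-x$ and adding a bead at $y$ changes it by $+y$, regardless of the signs of $x$ and $y$; hence $|\mu| - |\lambda| = y - x \ne 0$. Thus $d(\lambda,\mu)=1$ is incompatible with $|\lambda|=|\mu|$, which is~(iii) in contrapositive form.

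I expect the main obstacle to be the sign bookkeeping in the counting identity $|S_\lambda\setminus S_\mu| = |S_\mu\setminus S_\lambda|$ and, analogously, in the energy computation for~(iii): in both places $\F^+$ behaves like a finite support while $\F^-$ must be handled through cofinite complements, and one must verify that the per-bead contributions are genuinely sign-independent so that the finite and cofinite bookkeeping match. Once this charge-balance computation is set up cleanly, the rest of the lemma follows formally.
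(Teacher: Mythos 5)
Your proposal is correct and follows essentially the same route as the paper: part~(ii) via complement-invariance of the symmetric difference split over $\F^\pm$, and parts (i), (iii), (iv) via the charge balance $|C_\lambda^+|=|C_\lambda^-|$ (equivalently $\sum_{c\in C_\lambda}\mathrm{sgn}(c)=0$) combined with the size formula $|\lambda|=\sum_{c\in C_\lambda}\mathrm{sgn}(c)\,c$. Your only deviations are organizational and valid: you prove the unconditional identity $|S_\lambda\setminus S_\mu|=|S_\mu\setminus S_\lambda|$ once and deduce both the integrality in~(i) and part~(iv) from it, whereas the paper proves only parity for~(i) and re-derives the equality inside the proof of~(iv) via the $b^\pm_{\lambda\mu}$ bookkeeping; likewise your energy computation for~(iii), giving $|\mu|-|\lambda|=y-x\neq 0$, is the paper's argument (which forces the split $|C_\lambda\setminus C_\mu|=1=|C_\mu\setminus C_\lambda|$ and then a contradiction from the sign and modulus constraints) rewritten in $S$-coordinates.
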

\begin{proof}

It is well-known that the symmetric difference defines a metric.
Hence, for (\ref{metric:i}) it remains to show that $d$ only takes integer values. Since $C_\lambda \,\Delta\, C_\mu = (C_\lambda \setminus C_\mu)\,\sqcup\, (C_\mu \setminus C_\lambda)$ and $\sum_{c\in C_\lambda} \mathrm{sgn}(c) = 0$ we obtain that
\be\label{eq:Q1}
\sum_{c\in C_\lambda\setminus C_\mu}\mathrm{sgn}(c)  = \sum_{c\in C_\mu\setminus C_\lambda} \mathrm{sgn}(c).
\ee
In particular, $|C_\lambda \setminus C_\mu|$ and $|C_\mu \setminus C_\lambda|$ have the same parity, proving that $d$ is integer-valued.
 
For~(\ref{metric:ii}), recall from Eq.~\eqref{eq:SandC} that $C_\lambda^+=S_\lambda^+$ and $C_\lambda^-=\mathbb F^-\setminus S_\lambda^-$. Hence,
\be
C_\lambda\,\Delta\, C_\mu = (C_\lambda^+\,\Delta\, C_\mu^+)\cup (C_\lambda^-\,\Delta\, C_\mu^-)=(S_\lambda^+\,\Delta\, S_\mu^+)\cup (S_\lambda^-\,\Delta\, S_\mu^-)=S_\lambda\,\Delta\, S_\mu,
\ee
where we use that the symmetric difference is invariant under taking complements, i.e.,
\be
(Z\setminus X)\,\Delta\, (Z\setminus Y)=X\Delta Y \qquad (\text{for all sets $X,Y\subseteq Z$}).
\ee

Next, for (\ref{point}) recall that $|\lambda|=\sum_{c\in C_\lambda}|c|$ for all $\lambda\in\partitions$. Hence,  if $|\lambda|=|\mu|$ we have
\be
\sum_{c\in C_\lambda\setminus C_\mu} |c|  = \sum_{c\in C_\mu\setminus C_\lambda} |c|.
\ee
If $d(\lambda,\mu)=1$, i.e., $|C_\mu\, \Delta\, C_\lambda|=2$, this implies that $|C_\lambda\setminus C_\mu|=1=|C_\mu\setminus C_\lambda|$, and hence because of Eq.~\eqref{eq:Q1} that $C_\lambda\setminus C_\mu=C_\mu\setminus C_\lambda\mspace{1mu}$, contradiction.

Finally, for any pair of partitions $\lambda,\mu$ let us denote $b^\pm_{\lambda\mu}=|S_\lambda^\pm\setminus S_\mu^\pm|$. If $d(\lambda,\mu)=2k$ we must have
\be
\label{eq:lemma1}
b_{\lambda\mu}^++b_{\mu\lambda}^++b_{\lambda\mu}^-+b_{\mu\lambda}^-=4k.
\ee
We have $b_{\lambda\mu}^+=|C_\lambda^+\setminus C_\mu^+|$ and $b_{\lambda\mu}^-=|C_\mu^-\setminus C_\lambda^-|$ (note the order of $\lambda$ and $\mu$). Hence, condition~\eqref{eq:Q1} implies
\be
b_{\lambda\mu}^+-b_{\mu\lambda}^-=
b_{\mu\lambda}^+-b_{\lambda\mu}^-\mspace{1mu}.
\ee
Therefore, combining the last identity with Eq.~\eqref{eq:lemma1}, we get $b_{\lambda\mu}^++b_{\lambda\mu}^-=2k=b_{\mu\lambda}^++b_{\mu\lambda}^-\mspace{1mu}$, and~(\ref{metric:iii}) is proved.
\end{proof}

Let us introduce the following {\it neighborhood} of $\lambda$:
\be
\label{eq:defU}
\mathcal{U}(\lambda)\=\bigl\lbrace\mu\in \partitions\ \big|\  |\mu|=|\lambda| \text{ and } d(\lambda,\mu)=2\bigr\rbrace.
\ee
According to the last point in the previous lemma, for $\mu \in \mathcal{U}(\lambda)$, there exist $1\leq a<b$ and $1\leq a'<b'$ such that
\be
\label{eq:notationU}
S_\lambda\setminus S_\mu = \lbrace \lambda_a-a+\tfrac 12,\,\lambda_b-b+\tfrac 12\rbrace\qquad
S_\mu\setminus S_\lambda = \lbrace \mu_{a'}-{a'}+\tfrac 12,\,\mu_{b'}-{b'}+\tfrac 12\rbrace.
\ee
Note that, since $|\lambda|=|\mu|$, for all $K$ large enough
\be
\sum_{i=1}^{K}(\lambda_i-i)=\sum_{i=1}^{K}(\mu_i-i)
\ee
(in fact, $K\geq \ell(\lambda),\ell(\mu)$ suffices) and so
\be
\label{eq:remarkforsimplification}
\lambda_a-a+\lambda_b-b=\mu_{a'}-{a'}+\mu_{b'}-{b'}.
\ee

Before proceeding, it is convenient to illustrate our arguments with one example.
\begin{example}
Let $\lambda=(7,2,1)$ and $\mu=(4,2,2,2)$, which satisfy $|\lambda|=|\mu|=10$ and $d(\lambda,\mu)=2$.
Consider the sequences $S_\lambda$ and $S_\mu$ depicted as Maya diagram~{\upshape \cite{MJD,RiosZertuche}:}
\[
\begin{tikzpicture}
\begin{scope}[scale=.8]

\draw (7.7,0) -- (-7.7,0);
\draw (7.7,0)[dashed] -- (8.5,0);
\draw (-7.7,0)[dashed] -- (-8.5,0);

\draw[fill=white](-1/2,0) circle (3 pt);
\draw[fill=black](-3/2,0) circle (3 pt);
\draw[fill=white](-5/2,0) circle (3 pt);
\draw[fill=black](-7/2,0) circle (3 pt);
\draw[fill=black](-9/2,0) circle (3 pt);
\draw[fill=black](-11/2,0) circle (3 pt);
\draw[fill=black](-13/2,0) circle (3 pt);
\draw[fill=black](-15/2,0) circle (3 pt);
\draw[fill=black](1/2,0) circle (3 pt);
\draw[fill=white](3/2,0) circle (3 pt);
\draw[fill=white](5/2,0) circle (3 pt);
\draw[fill=white](7/2,0) circle (3 pt);
\draw[fill=white](9/2,0) circle (3 pt);
\draw[fill=white](11/2,0) circle (3 pt);
\draw[fill=black](13/2,0) circle (3 pt);
\draw[fill=white](15/2,0) circle (3 pt);

\node at (-1/2,-1/2) {$-\tfrac 12$};
\node at (-3/2,-1/2) {$-\tfrac 32$};
\node at (-5/2,-1/2) {$-\tfrac 52$};
\node at (-7/2,-1/2) {$-\tfrac 72$};
\node at (-9/2,-1/2) {$-\tfrac 92$};
\node at (-11/2,-1/2) {$-\tfrac {11}2$};
\node at (-13/2,-1/2) {$-\tfrac {13}2$};
\node at (-15/2,-1/2) {$-\tfrac {15}2$};
\node at (1/2,-1/2) {$\tfrac 12$};
\node at (3/2,-1/2) {$\tfrac 32$};
\node at (5/2,-1/2) {$\tfrac 52$};
\node at (7/2,-1/2) {$\tfrac 72$};
\node at (9/2,-1/2) {$\tfrac 92$};
\node at (11/2,-1/2) {$\tfrac {11}2$};
\node at (13/2,-1/2) {$\tfrac {13}2$};
\node at (15/2,-1/2) {$\tfrac {15}2$};

\node at (-9,0) {$S_\lambda$};
\begin{scope}[shift={(0,-2)}]

\draw (7.7,0) -- (-7.7,0);
\draw (7.7,0)[dashed] -- (8.5,0);
\draw (-7.7,0)[dashed] -- (-8.5,0);

\draw[fill=black](-1/2,0) circle (3 pt);
\draw[fill=black](-3/2,0) circle (3 pt);
\draw[fill=white](-5/2,0) circle (3 pt);
\draw[fill=white](-7/2,0) circle (3 pt);
\draw[fill=black](-9/2,0) circle (3 pt);
\draw[fill=black](-11/2,0) circle (3 pt);
\draw[fill=black](-13/2,0) circle (3 pt);
\draw[fill=black](-15/2,0) circle (3 pt);
\draw[fill=black](1/2,0) circle (3 pt);
\draw[fill=white](3/2,0) circle (3 pt);
\draw[fill=white](5/2,0) circle (3 pt);
\draw[fill=black](7/2,0) circle (3 pt);
\draw[fill=white](9/2,0) circle (3 pt);
\draw[fill=white](11/2,0) circle (3 pt);
\draw[fill=white](13/2,0) circle (3 pt);
\draw[fill=white](15/2,0) circle (3 pt);

\node at (-1/2,-1/2) {$-\tfrac 12$};
\node at (-3/2,-1/2) {$-\tfrac 32$};
\node at (-5/2,-1/2) {$-\tfrac 52$};
\node at (-7/2,-1/2) {$-\tfrac 72$};
\node at (-9/2,-1/2) {$-\tfrac 92$};
\node at (-11/2,-1/2) {$-\tfrac {11}2$};
\node at (-13/2,-1/2) {$-\tfrac {13}2$};
\node at (-15/2,-1/2) {$-\tfrac {15}2$};
\node at (1/2,-1/2) {$\tfrac 12$};
\node at (3/2,-1/2) {$\tfrac 32$};
\node at (5/2,-1/2) {$\tfrac 52$};
\node at (7/2,-1/2) {$\tfrac 72$};
\node at (9/2,-1/2) {$\tfrac 92$};
\node at (11/2,-1/2) {$\tfrac {11}2$};
\node at (13/2,-1/2) {$\tfrac {13}2$};
\node at (15/2,-1/2) {$\tfrac {15}2$};

\node at (-9,0) {$S_\mu$};

\end{scope}

\draw[red,thick](-1/2,0) circle (7 pt);
\draw[red,thick](-1/2,-2) circle (7 pt);
\draw[red,thick](-7/2,0) circle (7 pt);
\draw[red,thick](-7/2,-2) circle (7 pt);
\draw[red,thick](7/2,0) circle (7 pt);
\draw[red,thick](7/2,-2) circle (7 pt);
\draw[red,thick](13/2,0) circle (7 pt);
\draw[red,thick](13/2,-2) circle (7 pt);

\end{scope}
\end{tikzpicture}
\]
Here, the black dots correspond to elements in~$S_\lambda$ and~$S_\mu\mspace{1mu}$, and we indicated in red the four places at which $S_\lambda$ and $S_\mu$ differ;
note that, in this case, the indices in Eq.~\eqref{eq:remarkforsimplification} are given by $a=1,$ $b=4,$ $a'=1,$ $b'=3$.

Next, recall that in Maya diagrams, removing a border strip is represented by swapping a black dot with any white dot to its left.
Therefore we see that there are exactly two ways to remove a border strip from $\lambda$ and one from $\mu$ such that the resulting partitions coincide.
Namely, we either swap the white and black dots at $\tfrac 72$ and $\tfrac {13}2$ in~$S_\lambda$ and at $-\tfrac 72$ and $-\tfrac 12$ in~$S_\mu\mspace{1mu}$, or we swap white and black dots at $-\tfrac 12$ and $\tfrac {13}2$ in~$S_\lambda$ and at $-\tfrac 72$ and $\tfrac 72$ in~$S_\mu\mspace{1mu}$.
These border strips are visualized in the Young diagrams as follows 
\[
\ytableaushort{{}{}{}{}{*(lightgray)}{*(lightgray)}{*(lightgray)},{}{},{}}\qquad
\ytableaushort{{}{}{}{},{}{},{}{*(lightgray)},{*(lightgray)}{*(lightgray)}} \qquad \text{and} \qquad 
\ytableaushort{{}{*(lightgray)}{*(lightgray)}{*(lightgray)}{*(lightgray)}{*(lightgray)}{*(lightgray)},{}{*(lightgray)},{}}\qquad
\ytableaushort{{}{*(lightgray)}{*(lightgray)}{*(lightgray)},{}{*(lightgray)},{}{*(lightgray)},{*(lightgray)}{*(lightgray)}}\,\raisebox{-22pt}{.}
\]
\end{example}

The general situation is completely parallel, as shown in the following lemma.
\begin{lemma}
\label{lemma:borderstrips}
Let $\mu\in\mathcal U(\lambda)$.
There exist exactly two border strips $\gamma_1,\gamma_2$ in $\lambda$ and two border strips $\gamma_1',\gamma_2'$ in $\mu$ such that $\lambda\setminus \gamma_i = \mu\setminus \gamma_i '$ ($i=1,2$).
These border strips satisfy
\begin{enumerate}[{\upshape (i)}]
\item\label{it:border1}
$\displaystyle
\gamma_1\subset\gamma_2\,,\ \gamma_1'\subset \gamma_2'\,,$
\item \label{it:border2}
$\displaystyle
|\gamma_1|=|\gamma_1'|=|\lambda_a-a-\mu_{a'}+{a'}|,\ 
|\gamma_2|=|\gamma_2'|=|\lambda_a-a-\mu_{b'}+{b'}|
$
\item If $ \lambda_a-a>\mu_{a'}-{a'}$, then 
\be
\mathrm{ht}(\gamma_1)=a'-a,\
\mathrm{ht}(\gamma_1')=b-b',\
\mathrm{ht}(\gamma_2)=b'-a-1,\
\mathrm{ht}(\gamma_2')=b-a'
\ee
and if $\lambda_a-a<\mu_{a'}-{a'}$, then
\be
\mathrm{ht}(\gamma_1)=b'-b,\
\mathrm{ht}(\gamma_1')=a-a',\
\mathrm{ht}(\gamma_2)=b'-a,\
\mathrm{ht}(\gamma_2')=b-a'-1.
\ee
\end{enumerate}
\end{lemma}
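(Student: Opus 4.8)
The plan is to translate everything into the language of Maya diagrams, i.e., of the bead sequences $S_\lambda$ and $S_\mu$ of Eq.~\eqref{eq:Slambda}, and to use the standard dictionary between border-strip removal and single bead moves: removing a border strip of size $h$ from a partition $\kappa$ amounts to moving one bead of $S_\kappa$ exactly $h$ positions to the left, from an occupied to an empty position, with the height of the strip equal to the number of beads lying strictly between the two positions. Writing $S_\lambda\setminus S_\mu=\{p_1>p_2\}$ and $S_\mu\setminus S_\lambda=\{q_1>q_2\}$, which have two elements each by Lemma~\ref{lemma:Hamming}(\ref{metric:iii}), the hypothesis $|\lambda|=|\mu|$ becomes the single linear relation $p_1+p_2=q_1+q_2$, cf.~Eq.~\eqref{eq:remarkforsimplification}.

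First I would classify the common partitions $\nu$ with $\nu=\lambda\setminus\gamma=\mu\setminus\gamma'$ for border strips $\gamma,\gamma'$. Encoding each bead sequence by its indicator function $\chi$, such a $\nu$ forces both $\chi_\nu-\chi_\lambda$ and $\chi_\nu-\chi_\mu$ to be supported on two points (a $-1$ at the vacated bead and a $+1$ at the filled hole). Since $\chi_\lambda-\chi_\mu$ is supported exactly on $\{p_1,p_2,q_1,q_2\}$, with sign $+$ at the $p_i$ and $-$ at the $q_j$, subtracting the two relations shows that the bead moved in $\lambda$ must start at one of $p_1,p_2$ and end at one of $q_1,q_2$, and the $\mu$-move is then forced to be the complementary pair. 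This produces four a priori candidates.

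Next I would pin down the linear order of $p_1,p_2,q_1,q_2$. Combining $p_1>p_2$, $q_1>q_2$, $p_1+p_2=q_1+q_2$ and distinctness leaves exactly two orderings: $p_1>q_1>q_2>p_2$ when $\lambda_a-a>\mu_{a'}-a'$, and $q_1>p_1>p_2>q_2$ otherwise. Imposing that every bead move is strictly to the left then eliminates two of the four candidates in each case, leaving precisely the two border strips of the lemma. The size identities $|\gamma_i|=|\gamma_i'|$ are immediate from $p_1+p_2=q_1+q_2$, and the nestings $\gamma_1\subset\gamma_2$, $\gamma_1'\subset\gamma_2'$ follow because the two surviving $\lambda$-moves share a common endpoint, as do the two $\mu$-moves, so that the corresponding skew shapes are nested.

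Finally I would compute the four heights. The useful observation is that the $j$-th bead of $S_\kappa$ sits at $\kappa_j-j+\tfrac12$, so the number of beads of $S_\kappa$ strictly above a given position is read off directly from the index, while passing from $S_\lambda$ to $S_\mu$ changes this count by $\pm1$ only across the four special sites $p_i,q_j$. Counting beads in the relevant open intervals by comparing $S_\lambda$ and $S_\mu$ site by site then expresses each height as a difference of the indices $a,b,a',b'$, reproducing the stated formulas; the two cases of the lemma are interchanged by the involution $\lambda\leftrightarrow\mu$. The main obstacle I anticipate is not conceptual but bookkeeping: keeping the orientation of each of the four strips straight and correctly tracking the $\pm1$ corrections at the endpoints, since an off-by-one there (confusing $\ge$ with $>$ at a special site) is exactly what distinguishes $b'-a$ from $b'-a-1$ in the height formulas.
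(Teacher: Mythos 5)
Your proposal is correct and follows essentially the same route as the paper's proof: both work in the Maya diagram, identify the two admissible white/black dot swaps (bead moves) in $S_\lambda$ and $S_\mu$ producing a common partition, obtain the sizes from $p_1+p_2=q_1+q_2$ (Eq.~\eqref{eq:remarkforsimplification}), and read off the heights as the number of beads strictly between the swapped positions. Your classification of the four candidate moves via indicator functions and their elimination by the leftward-move constraint just makes explicit the ``exactly two ways'' step that the paper asserts directly from the ordering $a\leq a'<b'\leq b$.
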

\begin{proof}
We can assume without loss of generality that $\lambda_a-a>\mu_{a'}-a'$ (otherwise we just interchange the roles of $\lambda$ and $\mu$). Since $a$ and $a'$ are minimal such that $\lambda_a-a$ and $\mu_{a'}-a'$ are not elements of $S_\mu$ and $S_\lambda$ respectively, this implies $a'\geq a$. Using Eq.~\eqref{eq:remarkforsimplification} one concludes $a\leq a'<b'\leq b$. 
There are exactly two ways to remove border strips from $\lambda$ and $\mu$ such that the resulting partitions coincide: we either swap white and black dots at $\mu_{a'}-a'+\tfrac 12$ and $\lambda_a-a+\tfrac 12$ in $S_\lambda$ and those at $\lambda_b-b+\tfrac 12$ and $\mu_{b'}-b'+\tfrac 12$ in $S_\mu$ (corresponding to border strips $\gamma_1,\gamma_1'$ respectively), or we swap white and black dots at $\mu_{b'}-b'+\tfrac 12$ and $\lambda_a-a+\tfrac 12$ in $S_\lambda$ and those at $\lambda_b-b+\tfrac 12$ and $\mu_{a'}-a'+\tfrac 12$ in $S_\mu$ (corresponding to border strips $\gamma_2,\gamma_2'$ respectively).
The properties~(\ref{it:border1}) and~(\ref{it:border2}) are clear by construction.
The last one follows from the fact that the height of a border strip in a partition~$\nu$ is read off the Maya diagram as follows: the border strip itself is a pair of white and black dots, the white one to the left of the right one, and the height is the number of black dots (i.e., of elements of $S_\nu$) in between them (see, e.g.,~\cite{RiosZertuche}).
\end{proof}

We now compute the scalar products relevant to the proof of Theorem~\ref{thm:2}.

\begin{lemma}
\label{lemma:scalarproduct}
Let $\lambda,\mu\in\partitions$ with $\lambda\not=\mu$.
\begin{enumerate}[{\upshape (i)}]
\item\label{it:scalarproduct-i} If $\mu\not\in\mathcal U(\lambda)$, then $\langle s_\mu, \wh G_k^{[1]} s_\lambda\rangle=0$ for all $k\geq 0$.
\item\label{it:scalarproduct-ii} If $\mu\in\mathcal U(\lambda)$, then (with the notations of the previous lemma)
\begin{align}
\nonumber
\langle s_\mu,\wh G_1^{[1]}s_\lambda\rangle\={}&2\,(-1)^{\mathrm{ht}(\gamma_1)+\mathrm{ht}(\gamma_1')}|\gamma_1|^2+(-1)^{\mathrm{ht}(\gamma_2)+\mathrm{ht}(\gamma_2')}|\gamma_2|^2
\\
\={}&
    2\,(-1)^{a+a'+b+b'}\,\bigl((\lambda_a-a-\mu_{a'}+a')^2-(\lambda_a-a-\mu_{b'}+b')^2\bigr)\,.
\end{align}
\end{enumerate}
\end{lemma}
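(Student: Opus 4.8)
The plan is to transport everything to the fermionic Fock space via the isometry $\Phi$ of Theorem~\ref{thm:bosfer}, writing $\langle s_\mu,\wh G_k^{[1]}s_\lambda\rangle=\langle v_\mu,\Phi^{-1}\wh G_k^{[1]}\Phi\,v_\lambda\rangle$ and inserting the quartic formula of Corollary~\ref{corollary:G1fermion}. Since $\lambda\neq\mu$, the constant $\gamma_k$ drops out, and by Eq.~\eqref{eq:Eaa} the quadratic operators $\Xi_{aa}$ act diagonally on the basis $\{v_\nu\}$, so $\langle v_\mu,\Xi_{aa}v_\lambda\rangle=0$; hence only the quartic sum survives, uniformly in $k$. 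For part (ii) I would specialize to $k=1$, where $A_1(a,b,c,a+c-b)=(a-b)^2$ by Eq.~\eqref{eq:reductionAgeneral} (consistently with Eq.~\eqref{eq:LEMMA2}), so that $\langle v_\mu,\Phi^{-1}\wh G_1^{[1]}\Phi\,v_\lambda\rangle=\sum_{a+c=b+d}(a-b)^2\langle v_\mu,\Xi_{abcd}v_\lambda\rangle$.

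The key reduction, used for both parts, is the factorization $\Xi_{abcd}=\Xi_{ab}\Xi_{cd}+\delta^-_{ad}\Xi_{cb}-\delta^+_{bc}\Xi_{ad}-\delta^-_{ad}\delta^+_{bc}$ coming from Eq.~\eqref{eq:4to2}. The quadratic correction terms move a single particle, hence can only connect partitions at Hamming distance $1$; since $d(\lambda,\mu)\neq 1$ whenever $|\lambda|=|\mu|$ (Lemma~\ref{lemma:Hamming}(\ref{point})), they contribute nothing, and one is left with $\langle v_\mu,\Xi_{abcd}v_\lambda\rangle=\langle\Xi_{ba}v_\mu,\Xi_{cd}v_\lambda\rangle$. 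Now $\Xi_{cd}v_\lambda$ (for $c\neq d$) deletes $d$ and adjoins $c$ to $S_\lambda$, so a nonzero matrix element forces $S_\mu=(S_\lambda\setminus\{b,d\})\cup\{a,c\}$; together with the defining constraint $a+c=b+d$ this yields both $d(\lambda,\mu)\le 2$ and $|\mu|=|\lambda|$ (the latter because $|\nu|-|\lambda|=(a+c)-(b+d)$ for $S_\nu=(S_\lambda\setminus\{b,d\})\cup\{a,c\}$). For $\mu\neq\lambda$ this means $\mu\in\mathcal U(\lambda)$, which proves part (\ref{it:scalarproduct-i}).

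For part (\ref{it:scalarproduct-ii}), with $\mu\in\mathcal U(\lambda)$ I would use the notation of Eq.~\eqref{eq:notationU}, writing $S_\lambda\setminus S_\mu=\{x_1,x_2\}$ and $S_\mu\setminus S_\lambda=\{y_1,y_2\}$. The discussion above shows that the only contributing tuples are the four with $\{a,c\}=\{y_1,y_2\}$ and $\{b,d\}=\{x_1,x_2\}$, the constraint $a+c=b+d$ being automatic from Eq.~\eqref{eq:remarkforsimplification}. Using $\Xi_{abcd}=\Xi_{cdab}$, these four tuples realize only two distinct operators, and $y_1+y_2=x_1+x_2$ forces the two tuples realizing a given operator to carry equal weights $(a-b)^2$, equal to $|\gamma_1|^2$ for one pair and to $|\gamma_2|^2$ for the other (by Lemma~\ref{lemma:borderstrips}(\ref{it:border2})). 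Each surviving matrix element factors through a single intermediate partition as a product of two single-particle moves, namely the removal of a border strip from $\lambda$ and one from $\mu$; by Lemma~\ref{lemma:borderstrips} these are precisely $\gamma_1,\gamma_1'$ for one operator and $\gamma_2,\gamma_2'$ for the other. Summing the four contributions reproduces the claimed value: each pair of tuples contributes $2(-1)^{\mathrm{ht}(\gamma_i)+\mathrm{ht}(\gamma_i')}|\gamma_i|^2$, and the second displayed form then follows by substituting the explicit heights and sizes of Lemma~\ref{lemma:borderstrips}(\ref{it:border2})--(iii) and simplifying the parities $(-1)^{(a'-a)+(b-b')}=(-1)^{a+a'+b+b'}$ and $(-1)^{(b'-a-1)+(b-a')}=-(-1)^{a+a'+b+b'}$.

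The main obstacle is the sign bookkeeping. I expect the crux to be the elementary lemma that a single-particle move $\psi_p\psi_q^*\,v_\nu$ carries the sign $(-1)^{H}$, where $H$ is the number of elements of $S_\nu$ strictly between $p$ and $q$: this is proved by tracking the reordering signs of $\psi_q^*$ (yielding $(-1)^{i-1}$ with $i-1=\#\{s\in S_\nu:s>q\}$) and of $\psi_p$ (yielding $(-1)^{j-1}$ with $j-1=\#\{s\in S_\nu\setminus\{q\}:s>p\}$) and observing that an even multiple of $\#\{s\in S_\nu:s>\max(p,q)\}$ drops out modulo $2$. Matching $H$ with the height $\mathrm{ht}$ of the corresponding border strip, via the Maya-diagram reading recalled in the proof of Lemma~\ref{lemma:borderstrips}, is exactly what produces the signs $(-1)^{\mathrm{ht}(\gamma_i)+\mathrm{ht}(\gamma_i')}$; here one must also check that the intermediate partition computed from the $\lambda$-side equals the one computed from the $\mu$-side, which is where the identity $S_\mu=(S_\lambda\setminus\{x_1,x_2\})\cup\{y_1,y_2\}$ is used once more.
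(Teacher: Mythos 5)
Your proposal is correct, and part~(\ref{it:scalarproduct-i}) follows the paper essentially verbatim: both you and the authors invoke Corollary~\ref{corollary:G1fermion}, observe that the $\Xi_{aa}$ terms are diagonal, and note that the quartic terms can only connect partitions at Hamming distance at most~$2$, which together with Lemma~\ref{lemma:Hamming}(\ref{point}) forces $\mu\in\mathcal U(\lambda)$ for a nonzero matrix element. (One small remark on your Wick reduction: within the constrained sum $d=a+c-b$, the contraction terms $\delta^-_{ad}\Xi_{cb}$ and $\delta^+_{bc}\Xi_{ad}$ of Eq.~\eqref{eq:4to2} automatically have $c=b$, resp.\ $d=a$, hence are diagonal; so they vanish for $\lambda\neq\mu$ without needing the distance-$1$ exclusion at all, which also covers the case $|\lambda|\neq|\mu|$ that your phrasing leaves implicit.)

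For part~(\ref{it:scalarproduct-ii}) you take a genuinely different route. The paper sidesteps the quartic Wick calculus entirely: it returns to the bosonic presentation $\Phi^{-1}\wh G_1^{[1]}\Phi-\tfrac1{120}=2\sum_{n\geq 1}n^2\alpha_n\alpha_{-n}$ of Eq.~\eqref{eq:firsteq}, uses the adjointness $\alpha_n^*=\alpha_{-n}$ to write the matrix element as $2\sum_n n^2\langle\alpha_{-n}v_\mu,\alpha_{-n}v_\lambda\rangle$, and then quotes the standard border-strip (Murnaghan--Nakayama) expansion $\alpha_{-n}v_\lambda=\sum_{\gamma\in\mathrm{BS}(\lambda,n)}(-1)^{\mathrm{ht}(\gamma)}v_{\lambda\setminus\gamma}$ from the literature, after which Lemma~\ref{lemma:borderstrips} immediately selects the two pairs $(\gamma_i,\gamma_i')$. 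You instead stay fermionic: specialize $A_1(a,b,c,a+c-b)=(a-b)^2$ (your verification via Eq.~\eqref{eq:reductionAgeneral} is correct and consistent with Eq.~\eqref{eq:LEMMA2}), enumerate the four contributing tuples $\{a,c\}=\{y_1,y_2\}$, $\{b,d\}=\{x_1,x_2\}$ (the constraint $a+c=b+d$ indeed being Eq.~\eqref{eq:remarkforsimplification}), pair them via $\Xi_{abcd}=\Xi_{cdab}$, and prove by hand the sign lemma that a single-particle move carries $(-1)^H$ with $H$ the number of elements of $S_\nu$ strictly between the two modes --- which is exactly the Maya-diagram reading of the height recalled in the proof of Lemma~\ref{lemma:borderstrips}. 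In effect you re-derive the needed special case of the Murnaghan--Nakayama rule rather than citing it; the paper's route is shorter, while yours is self-contained and, as you note, extends without change to the weights $m^{n+1}$ of the remark following the lemma and in principle to the full quartic matrix elements of $\wh G_k^{[1]}$ for general~$k$. One final point in your favor: your total, $2\bigl((-1)^{\mathrm{ht}(\gamma_1)+\mathrm{ht}(\gamma_1')}|\gamma_1|^2+(-1)^{\mathrm{ht}(\gamma_2)+\mathrm{ht}(\gamma_2')}|\gamma_2|^2\bigr)$, agrees with the second displayed line of the lemma and with what the paper's own computation produces; the first displayed line, as typeset, appears to attach the factor~$2$ only to the first summand, which is evidently a parenthesization typo that your calculation silently corrects.
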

\begin{remark}
{Directly by Eq.~\eqref{eq:G1Explicit} we see that the left-hand side in the last identity involves twice the quantity $\sum_{m\geq 1}m^3\,\bigl\langle s_\mu,p_m\pdv{s_\lambda}{p_m}\bigr\rangle$.}
It is straightforward to adapt the proof below to show that, more generally, we have
\begin{align}
    \nonumber
\sum_{m\geq 1}m^{n+1}\,\Bigl\langle s_\mu,p_m\pdv{s_\lambda}{p_m}\Bigr\rangle\={}&
(-1)^{\mathrm{ht}(\gamma_1)+\mathrm{ht}(\gamma_1')}|\gamma_1|^n+(-1)^{\mathrm{ht}(\gamma_2)+\mathrm{ht}(\gamma_2')}|\gamma_2|^n
\\
\={}&
    (-1)^{a+a'+b+b'}\,\bigl((\lambda_a-a-\mu_{a'}+a')^n-(\lambda_a-a-\mu_{b'}+b')^n\bigr)
\end{align}
for any integer $n\geq 0$ and partitions $\lambda,\mu$ satisfying $\mu\in\mathcal U(\lambda)$.
\end{remark}
\begin{proof}
Recalling Corollary~\ref{corollary:G1fermion} we obtain that, when $\lambda\not=\mu$, $\langle s_\mu, \wh G_k^{[1]} s_\lambda\rangle$ is a linear combination of $\langle v_\mu,\Xi_{a,b,c,a+c-b}v_\lambda\rangle$ and $\langle v_\mu,\Xi_{aa}v_\lambda\rangle$, for $a,b,c\in\mathbb F$.
That the latter terms do not contribute is clear from the fact that $\Xi_{aa}$ is diagonal on $v_\lambda$, cf. Eq.~\eqref{eq:Eaa}.
Moreover, $\Xi_{abcd}v_\lambda$ is a linear combination of $v_\mu$ such that at most four modified Frobenius coordinates of $\mu$ are different from those of $\lambda$, i.e., such that $d(\mu,\lambda)\leq 2$. Since $|\lambda|=|\mu|$ and $\lambda\not=\mu$ also these other terms do not contribute, cf.~point~(\ref{point}) in~Lemma~\ref{lemma:Hamming}).
It remains to prove the last assertion. To this end, we use Eq.~\eqref{eq:firsteq} to write
\be
\langle s_\mu,\wh G_1^{[1]}s_\lambda\rangle\=2\,\sum_{n\geq 1} n^2\langle v_\lambda,\alpha_n\alpha_{-n}v_\mu\rangle\qquad (\lambda\not=\mu)
\ee
where $\alpha_n$ are the operators defined by Eq.~\eqref{eq:alphaop}.
It is well known (see, for example,~\cite{RiosZertuche}) that
\be
\alpha_{-n}v_\lambda \= \sum_{\substack{\gamma\in \mathrm{BS}(\lambda,n)}}(-1)^{\mathrm{ht}(\gamma)}v_{\lambda\setminus\gamma}\qquad (n\geq 1)
\ee
where~$\mathrm{BS}(\lambda,n)$ is the set of border strips of size $n$ in a partition $\lambda$.
Since the operators $\alpha_n$ and $\alpha_{-n}$ are mutually adjoint, cf.~\cite{RiosZertuche}, we get
\be
\langle s_\mu,\wh G_1^{[1]}s_\lambda\rangle\=2\,\sum_{n\geq 1}n^2\langle \alpha_{-n} v_\mu,\alpha_{-n}v_\lambda\rangle \= 2\,\sum_{n\geq 1}n^2\!\!\!\sum_{\substack{\gamma\in \mathrm{BS}(\lambda,n)\\ \gamma'\in \mathrm{BS}(\mu,n)}}\!\!\!(-1)^{\mathrm{ht}(\gamma)+\mathrm{ht}(\gamma')}\,\delta_{\lambda\setminus\gamma,\mu\setminus\gamma'}\,.
\ee
Finally, we can use Lemma~\ref{lemma:borderstrips} to complete the proof.
\end{proof}

\subsection{Proof of Theorem~\ref{thm:2}}\label{sec:proof2}

With the notations of Eqs.~\eqref{eq:spectrum} and~\eqref{eq:spectrum1}, we define coefficients $c(\lambda,\mu)$, depending on pairs of partitions $\lambda,\mu$, by
\be
\label{eq:defa}
r^{[1]}_\lambda(\bs p) \= \frac 1{24}\sum_{\mu\in\partitions}c(\lambda,\mu)\,s_\mu(\bs p) \,.
\ee
By Eq.~\eqref{eq:gauge}, $c(\lambda,\lambda)=0$ for all~$\lambda\in\partitions$.
Obviously, since $r_\lambda(\epsilon)\in \mathsf B_{|\lambda|}\llbracket\epsilon\rrbracket$, one has $c(\lambda,\mu)=0$ unless $|\lambda|=|\mu|$.

\begin{lemma}\label{lemma:simple2}
For all $\lambda,\mu\in\partitions$ we have
\be
c(\lambda,\mu)\=\begin{cases}
\displaystyle\frac{\langle s_\mu, \wh G_1^{[1]}s_\lambda\rangle}{Q_3(\lambda)-Q_3(\mu)} & \text{if }Q_3(\lambda)\not=Q_3(\mu),
\\
0 & \text{otherwise.}
\end{cases}
\ee
\end{lemma}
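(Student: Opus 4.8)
The plan is to read off the coefficient of $\epsilon^1$ in the eigenvalue equation~\eqref{eq:spectrum} for $k=1$ and $c=0$. Writing $\wh G_1(\epsilon,0)=\wh G_1^{[0]}+\tfrac\epsilon{24}\wh G_1^{[1]}$ (cf.~\eqref{eq:decomp}) and substituting the expansions~\eqref{eq:spectrum1}, the linear term in $\epsilon$ of $\wh G_1(\epsilon,0)\,r_\lambda=E_1(\lambda;\epsilon,0)\,r_\lambda$ reads
\[
\wh G_1^{[0]}r_\lambda^{[1]}+\tfrac1{24}\wh G_1^{[1]}s_\lambda=E_1^{[0]}(\lambda;0)\,r_\lambda^{[1]}+E_1^{[1]}(\lambda;0)\,s_\lambda .
\]
First I would pair this identity with $s_\mu$ for $\mu\neq\lambda$. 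The term $E_1^{[1]}(\lambda;0)\langle s_\lambda,s_\mu\rangle$ drops out, and because $\wh G_1^{[0]}$ is diagonal in the (orthonormal) Schur basis, hence symmetric, with $\wh G_1^{[0]}s_\mu=E_1^{[0]}(\mu;0)\,s_\mu$, I may move it onto $s_\mu$. Using the $c=0$ specialization $E_1^{[0]}(\,\cdot\,;0)=Q_3$ of~\eqref{eq:E0} and $\langle s_\mu,r_\lambda^{[1]}\rangle=\tfrac1{24}c(\lambda,\mu)$ from~\eqref{eq:defa}, everything rearranges to the single scalar relation
\[
\bigl(Q_3(\lambda)-Q_3(\mu)\bigr)\,c(\lambda,\mu)=\langle s_\mu,\wh G_1^{[1]}s_\lambda\rangle .
\]

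When $Q_3(\lambda)\neq Q_3(\mu)$ this yields the first branch of the asserted formula immediately by division, so the real content lies in the degenerate branch: for $\lambda\neq\mu$ with $Q_3(\lambda)=Q_3(\mu)$ (the case $\lambda=\mu$ being covered by the gauge condition~\eqref{eq:gauge}) I must prove $c(\lambda,\mu)=0$, and here the displayed $k=1$ relation has become vacuous. To handle this I would run the very same first-order computation for every $k\geq0$, obtaining $\bigl(E_k^{[0]}(\lambda;0)-E_k^{[0]}(\mu;0)\bigr)c(\lambda,\mu)=\langle s_\mu,\wh G_k^{[1]}s_\lambda\rangle$ with $E_k^{[0]}(\,\cdot\,;0)=Q_{k+2}$. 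Via~\eqref{eq:QkC} the family $(Q_{k+2})_{k\geq0}$ recovers all signed power sums $\sum_{c\in C_\lambda}\mathrm{sgn}(c)\,c^{\,k+1}$ of the finite signed configuration $C_\lambda$, and these determine $C_\lambda$ and hence $\lambda$; thus the $E_k^{[0]}(\,\cdot\,;0)$ separate partitions, and for $\lambda\neq\mu$ there exists some $k$ with $E_k^{[0]}(\lambda;0)\neq E_k^{[0]}(\mu;0)$. For that $k$ it then suffices to show the numerator $\langle s_\mu,\wh G_k^{[1]}s_\lambda\rangle$ vanishes.

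By Lemma~\ref{lemma:scalarproduct}(\ref{it:scalarproduct-i}) this numerator is zero for all $k$ as soon as $\mu\notin\mathcal U(\lambda)$, so the crux is to exclude the coincidence $\mu\in\mathcal U(\lambda)$ together with $Q_3(\lambda)=Q_3(\mu)$. Here I would use that, for $\mu\in\mathcal U(\lambda)$, combining~\eqref{eq:QkC} with the dictionary~\eqref{eq:SandC} between $C_\lambda$ and $S_\lambda$ collapses the difference $Q_3(\lambda)-Q_3(\mu)$ to a finite sum supported on $S_\lambda\,\Delta\,S_\mu$: writing $\{x_1,x_2\}=S_\lambda\setminus S_\mu$ and $\{y_1,y_2\}=S_\mu\setminus S_\lambda$ as in~\eqref{eq:notationU},
\[
Q_3(\lambda)-Q_3(\mu)=\frac12\biggl(\sum_{s\in S_\lambda\setminus S_\mu}s^2-\sum_{s\in S_\mu\setminus S_\lambda}s^2\biggr)=y_1y_2-x_1x_2 ,
\]
where the last equality uses the balance $x_1+x_2=y_1+y_2$ from~\eqref{eq:remarkforsimplification}. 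Since $S_\lambda\setminus S_\mu$ and $S_\mu\setminus S_\lambda$ are disjoint and nonempty, the pairs $\{x_1,x_2\}$ and $\{y_1,y_2\}$ are distinct but have equal sum, hence distinct products, so $Q_3(\lambda)\neq Q_3(\mu)$. Therefore $\mu\in\mathcal U(\lambda)$ forces $Q_3(\lambda)\neq Q_3(\mu)$; the degenerate branch only ever occurs for $\mu\notin\mathcal U(\lambda)$, where the numerator vanishes for every $k$, and the argument closes. I expect this last step — verifying that $Q_3$ alone already distinguishes $\lambda$ from its neighbours in $\mathcal U(\lambda)$, so that the quotient in the statement is never of the indeterminate form $0/0$ — to be the main obstacle, since it is precisely what reconciles the two cases and guarantees consistency of the overdetermined first-order system.
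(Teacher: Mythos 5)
Your proposal is correct and follows the same skeleton as the paper's proof: the first-order perturbative relation $\bigl(Q_{k+2}(\lambda)-Q_{k+2}(\mu)\bigr)\,c(\lambda,\mu)=\langle s_\mu,\wh G_k^{[1]}s_\lambda\rangle$ (this is Eq.~\eqref{eq:firstorderspectrum}), division at $k=1$ for the generic branch, the gauge condition~\eqref{eq:gauge} for $\lambda=\mu$, and, in the degenerate branch, Lemma~\ref{lemma:scalarproduct}(\ref{it:scalarproduct-i}) combined with the fact that the family $(Q_{k+2})_{k\geq 0}$ separates partitions --- your signed-power-sum/Vandermonde argument is an equivalent substitute for the paper's injectivity of the exponential generating function $\sum_{k\geq 1}(Q_k(\nu)-\beta_k)z^{k-1}$. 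The one local difference is how you exclude the coincidence $\mu\in\mathcal U(\lambda)$ with $Q_3(\lambda)=Q_3(\mu)$: the paper reads off $\langle s_\mu,\wh G_1^{[1]}s_\lambda\rangle=0$ from the $k=1$ relation and then invokes the explicit value in Lemma~\ref{lemma:scalarproduct}(\ref{it:scalarproduct-ii}), which is nonzero because $\gamma_1\subsetneq\gamma_2$ forces $|\gamma_1|\neq|\gamma_2|$, whereas you prove a priori that $Q_3$ never degenerates on $\mathcal U(\lambda)$ via $Q_3(\lambda)-Q_3(\mu)=y_1y_2-x_1x_2$ and the equal-sums/distinct-pairs observation. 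Your variant is logically equivalent and has the merit of showing directly that the quotient in the statement is never of indeterminate form; note that your identity is precisely the factorization $Q_3(\lambda)-Q_3(\mu)=(\lambda_a-a-\mu_{a'}+a')(\lambda_a-a-\mu_{b'}+b')$, which the paper establishes anyway in the proof of Theorem~\ref{thm:2b}, so the two proofs differ only in the order in which this computation is deployed.
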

\begin{proof}
Considering the linear terms in~$\epsilon$ in Eq.~\eqref{eq:spectrum} (after setting $c=0$) yields
\be
\label{eq:temp}
\wh G_k^{[1]} s_\lambda\+\sum_{\nu\in\partitions}c(\lambda,\nu)\,\wh G_k^{[0]}s_\nu\=
24\,E^{[1]}_k(\lambda)\,s_\lambda\+Q_{k+2}(\lambda)\sum_{\nu\in\partitions}c(\lambda,\nu)\,s_\nu\,,
\ee
where we use Eqs.~\eqref{eq:E0}, \eqref{eq:compGhat} (with $c=0$) and~\eqref{eq:defa}.
Using Eq.~\eqref{eq:Dubrovin} we have $G_k^{[0]}s_\nu=Q_{k+2}(\nu)s_\nu$ and so taking the scalar product of Eq.~\eqref{eq:temp} with $s_\mu$ we obtain
\be
\label{eq:firstorderspectrum}
\langle s_\mu,\wh G_k^{[1]} s_\lambda\rangle\+Q_{k+2}(\mu)\,c(\lambda,\mu)\=Q_{k+2}(\lambda)\,c(\lambda,\mu).
\ee
Setting $k=1$, this shows the part of the statement relative to the case $Q_3(\lambda)\not=Q_3(\mu)$.

When $\lambda=\mu$ there is nothing to prove because $c(\lambda,\lambda)=0$ by Eq.~\eqref{eq:gauge}, as noted above.
Therefore, let us consider the only remaining case, which is $Q_3(\lambda)=Q_3(\mu)$ and $\lambda\neq\mu$.
In such case, Eq.~\eqref{eq:firstorderspectrum} with $k=1$ shows that $\langle s_\mu,\wh G_1^{[1]} s_\lambda\rangle=0$. Hence,  by~Lemma~\ref{lemma:scalarproduct}(\ref{it:scalarproduct-ii}) we find $\mu\not\in\mathcal U(\lambda)$. Therefore, by Lemma~\ref{lemma:scalarproduct}(\ref{it:scalarproduct-i}) we obtain $\langle s_\mu,\wh G_k^{[1]} s_\lambda\rangle=0$ for all $k\geq 0$.
Then, if there exists $k\geq 1$ such that $Q_k(\lambda)\not=Q_k(\mu)$ we can conclude by looking at Eq.~\eqref{eq:firstorderspectrum} that $c(\lambda,\mu)=0$, as claimed.
The existence of such $k\geq 1$ is obvious because we can map each partition~$\nu$ to the entire function in~$z$ given by
\begin{align}
\nonumber
\sum_{k\geq 1}\bigl(Q_k(\nu)-\beta_k\bigr)\,z^{k-1}&\:=\:\sum_{i\geq 1}\Bigl(\e^{z(\nu_i-i+\frac 12)}-\e^{z(-i+\frac 12)}\Bigr),
\end{align}
and this map is clearly injective.
\end{proof}

The previous lemmas imply that Theorem~\ref{thm:2} is equivalent to the following statement.

\begin{theorem}\label{thm:2b}
For all partitions~$\lambda$ 
\begin{align}
    \nonumber
r^{[1]}_\lambda\,= {}&\frac 1{12}
\sum_{\mu \in \mathcal{U}(\lambda)} (-1)^{a+a'+b+b'}\,\Bigl(\frac{\lambda_a-a-\mu_{a'}+{a'}}{\lambda_a-a-\mu_{b'}+b'}-\frac{\lambda_a-a-\mu_{b'}+b'}{\lambda_a-a-\mu_{a'}+a'}\Bigr) \,s_{\mu}\mspace{1mu}.
\end{align}
\end{theorem}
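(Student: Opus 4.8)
The plan is to prove Theorem~\ref{thm:2b} by evaluating the coefficients $c(\lambda,\mu)$ defined in Eq.~\eqref{eq:defa} explicitly, using Lemma~\ref{lemma:simple2} to reduce the problem to a ratio of two already-computed quantities. By Lemma~\ref{lemma:simple2}, for $\lambda\neq\mu$ the coefficient $c(\lambda,\mu)$ equals $\langle s_\mu,\wh G_1^{[1]}s_\lambda\rangle/(Q_3(\lambda)-Q_3(\mu))$ whenever $Q_3(\lambda)\neq Q_3(\mu)$, and vanishes otherwise. The numerator is controlled by Lemma~\ref{lemma:scalarproduct}: it is zero unless $\mu\in\mathcal{U}(\lambda)$, so the sum in Eq.~\eqref{eq:defa} is automatically supported on the neighbourhood $\mathcal{U}(\lambda)$. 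Thus the entire content of the theorem lies in computing, for $\mu\in\mathcal{U}(\lambda)$, both the numerator and the denominator $Q_3(\lambda)-Q_3(\mu)$, and checking that their ratio produces the stated expression.

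First I would compute the denominator. Writing $P=\lambda_a-a-\mu_{a'}+a'$ and $Q=\lambda_a-a-\mu_{b'}+b'$ for the two differences appearing in Lemmas~\ref{lemma:borderstrips} and~\ref{lemma:scalarproduct}, I would expand $Q_3$ via Eq.~\eqref{eq:Qk}. Since $S_\lambda$ and $S_\mu$ coincide except at the four half-integers listed in Eq.~\eqref{eq:notationU}, the difference $Q_3(\lambda)-Q_3(\mu)$ collapses to the finite expression $\tfrac12\bigl((\lambda_a-a+\tfrac12)^2+(\lambda_b-b+\tfrac12)^2-(\mu_{a'}-a'+\tfrac12)^2-(\mu_{b'}-b'+\tfrac12)^2\bigr)$. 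The linear and constant parts cancel, and invoking the size constraint $\lambda_a-a+\lambda_b-b=\mu_{a'}-a'+\mu_{b'}-b'$ from Eq.~\eqref{eq:remarkforsimplification} to eliminate the squared-sum terms, this simplifies to exactly $Q_3(\lambda)-Q_3(\mu)=PQ$. I would also record here that $P,Q\neq 0$, since an element of $S_\lambda\setminus S_\mu$ cannot equal an element of $S_\mu\setminus S_\lambda$; hence $Q_3(\lambda)\neq Q_3(\mu)$ for every $\mu\in\mathcal{U}(\lambda)$, placing us always in the first case of Lemma~\ref{lemma:simple2}.

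With the denominator in hand, the numerator from Lemma~\ref{lemma:scalarproduct}(\ref{it:scalarproduct-ii}) is $2(-1)^{a+a'+b+b'}(P^2-Q^2)$, and therefore
\[
c(\lambda,\mu)=\frac{2(-1)^{a+a'+b+b'}(P^2-Q^2)}{PQ}=2(-1)^{a+a'+b+b'}\Bigl(\frac PQ-\frac QP\Bigr).
\]
Substituting this into Eq.~\eqref{eq:defa}, the prefactor $\tfrac1{24}$ combines with the factor $2$ to give $\tfrac1{12}$, and restoring the expressions for $P$ and $Q$ reproduces exactly the claimed formula of Theorem~\ref{thm:2b}.

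The main obstacle is the second step: the exact factorization $Q_3(\lambda)-Q_3(\mu)=PQ$. What makes the whole argument work is that this denominator is precisely the \emph{product} $PQ$, while the numerator is the \emph{difference of squares} $P^2-Q^2$, so that the two combine into the symmetric ratio $P/Q-Q/P$. Verifying the factorization requires treating $Q_3$ as a regularized (conditionally indexed) sum over $S_\lambda$ and $S_\mu$, and it relies essentially on the size constraint Eq.~\eqref{eq:remarkforsimplification}, without which the squared-sum terms would fail to cancel. Once this identity is established, the remainder of the proof is routine bookkeeping with the signs $(-1)^{a+a'+b+b'}$ and the $\tfrac1{24}$-to-$\tfrac1{12}$ reduction.
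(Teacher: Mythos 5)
Your proposal is correct and takes essentially the same route as the paper's own proof: both combine Lemma~\ref{lemma:simple2} and Lemma~\ref{lemma:scalarproduct} with the key factorization $Q_3(\lambda)-Q_3(\mu)=\bigl(\lambda_a-a-\mu_{a'}+a'\bigr)\bigl(\lambda_a-a-\mu_{b'}+b'\bigr)$, obtained from Eqs.~\eqref{eq:QkC} and~\eqref{eq:remarkforsimplification}, and then simplify the ratio into the symmetric expression $P/Q-Q/P$. Your explicit verification that both factors are nonzero, so that $Q_3(\lambda)\neq Q_3(\mu)$ for every $\mu\in\mathcal U(\lambda)$ and one is always in the first case of Lemma~\ref{lemma:simple2}, is a detail the paper leaves implicit.
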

\begin{proof}
With the notation in Eq.~\eqref{eq:notationU}, we note that for $\mu\in\mathcal U(\lambda)$ we have
\begin{align}
\nonumber
Q_3(\lambda)-Q_3(\mu)\={}&
\frac 12\Bigl(\bigl(\lambda_a-a+\tfrac 12\bigr)^2+\bigl(\lambda_b-b+\tfrac 12\bigr)^2-\bigl(\mu_{a'}-a'+\tfrac 12\bigr)^2-\bigl(\mu_{b'}-b'+\tfrac 12\bigr)^2\Bigr)
\\
\={}&
\bigl(\lambda_a-a-\mu_{a'}+a'\bigr)\bigl(\lambda_a-a-\mu_{b'}+b'\bigr)\,,
\end{align}
as it follows from Eqs.~\eqref{eq:QkC} and~\eqref{eq:remarkforsimplification}.
(Note that this quantity also equals $w_{\lambda\mu}\,|\gamma_1|\,|\gamma_2|$, in the notation of Theorem~\ref{thm:2}.)
Then, from Lemma~\ref{lemma:scalarproduct} and Lemma~\ref{lemma:simple2} and elementary algebraic simplification using Eq.~\eqref{eq:remarkforsimplification}, we obtain the statement of the theorem. 
\end{proof}

\appendix
\renewcommand{\arraystretch}{1.5}

\section{Tables supporting Conjecture~\ref{conj}}\label{app:table}

Polynomials obtained by matching Conjecture~\ref{conj} with the eigenvalues computed in SageMath  for $k=0,\ldots, 10$. Note that we did not restrict the degree of $f_{D,\nu}$ to be~$2D$, but only verified this bound in each of the examples. 
\[\begin{array}{lll} 
  \nu   & D \hspace{5pt} & f_{D,\nu}(k) \hspace{323pt} \\\hline\hline
  \emptyset  & 0 & 1 \\\hline
  \emptyset  & 1 & \frac{1}{24} k (k+3) \\\hline
  \emptyset  & 2 & \frac{1}{3456} k (k+4) (3 k^2+16 k+17) \\\hline
  \emptyset  & 3 & \frac{1}{1244160} k (k+5) (15 k^4+210 k^3+896 k^2+1405 k+50)\\\hline
  \emptyset  & 4 & \frac{1}{597196800} k (k+6) (75 k^6+1950 k^5+17570 k^4+68042 k^3+89913 k^2 \\&&\hfill -100568 k-277382) \\\hline
  \emptyset & 5 & \frac{1}{300987187200} k (k+7) (315 k^8+13020 k^7+199920 k^6+1433012 k^5+4539665 k^4\\&& \hfill + 142128 k^3-39506516 k^2-99890840 k-59164224) \\ \hline
 \hline
 (1) & 0 & \frac{1}{12}\\\hline
 (1) & 1 & \frac{1}{288} \left(k^2+2 k-1\right)\\\hline
(1) & 2 & \frac{1}{207360} (k+1) \left(15 k^3+95 k^2-148 k-616\right)\\\hline
 (1) & 3 & \frac{1}{14929920} \left(15 k^6+240 k^5+467 k^4-5229 k^3-24476 k^2-33159 k-7482\right)\\\hline
(1) & 4 &  \frac{1}{{50164531200}}\bigl(525 k^8+14700 k^7+91910 k^6-448826 k^5-5919635 k^4\\&&\hfill-17900130 k^3-11934720 k^2+25552736 k+32402640\bigr)\\\hline
\hline
 (2) & 0 &\frac{1}{288}\\\hline
(2) & 1 &\frac{1}{103680}\left(15 k^2+323 k+278\right) \\\hline
(2) & 2 &\frac{1}{4976640} \left(15 k^4+696 k^3+3170 k^2+3631 k-30\right) \\\hline
(2) & 3  &\frac{1}{2508226560} \bigl(105 k^6+7833 k^5+85309 k^4+27485 k^3-1345564 k^2-3262016 k\\&&\hfill-1993680\bigr) \\\hline
\hline
\end{array}\]\[
\begin{array}{lll} 
 \nu   & D \hspace{5pt}& f_{D,\nu}(k) \hspace{318pt}  \\\hline\hline
(1^2) & 0 & \frac{1}{288} \\\hline
(1^2) & 1 & \frac{1}{6912}
k (k+1) \\\hline
(1^2) & 2 & \frac{1}{4976640}\left(15 k^4+80 k^3-421 k^2-486 k-60\right) \\\hline
(1^2) & 3 & \frac{1}{358318080}\left(15 k^6+195 k^5-832 k^4-9599 k^3-24029 k^2-22834 k-4056\right) \\\hline
\hline
(3) & 0 &-\frac{1}{6912} \\\hline
(3) & 1 & -\frac{1}{165888}k (k+1) \\\hline
(3) & 2 & \frac{1}{119439360}\left(-15 k^4-80 k^3+421 k^2+486 k+60\right) \\\hline
\hline
(2,1) & 0 &\frac{1}{3456}\\\hline
(2,1) & 1 &\frac{1}{1244160}\left(15 k^2+308 k+30\right)\\\hline
(2,1) & 2 &\frac{1}{59719680}\left(15 k^4+666 k^3+1690 k^2+2271 k+1052\right)\\\hline\hline
(1^3) & 0 &\frac{1}{10368}\\\hline
(1^3) & 1 & \frac{1}{248832}\left(k^2+3\right)\\\hline
(1^3) & 2 & \frac{1}{179159040} \left(15 k^4+50 k^3-699 k^2+994 k+60\right)\\\hline
\hline
(4) & 0 & -\frac{967}{829440}\\\hline
(4)  & 1 &\frac{1}{418037760}\left(-20307 k^2-290329 k-229408\right)\\\hline
\hline
(3,1) & 0 & -\frac{109}{622080} \\\hline
(3,1) & 1 & \frac{1}{29859840}\left(-218 k^2+1247 k-1488\right)\\\hline
\hline
(2^2)& 0 &\frac{1}{165888} \\\hline
(2^2)& 1 & \frac{1}{59719680}\left(15 k^2+601 k+1232\right)\\\hline
\hline
(2,1^2)& 0 & \frac{1}{82944}\\\hline
(2,1^2)& 1 &\frac{1}{29859840}\left(15 k^2+293 k-188\right) \\\hline
\hline
(1^4)& 0 & \frac{1}{497664}\\\hline
(1^4)& 1 & \frac{1}{11943936}\left(k^2-k+8\right)\\\hline
\hline
(5) & 0 &\frac{253}{2903040} \\\hline\hline
(4,1)  & 0 &\frac{967}{238878720} \\\hline\hline
(3,2)  & 0 & -\frac{109}{8599633920}\\\hline\hline
(3,1^2)  & 0 & \frac{109}{206391214080}\\\hline\hline
(2^2,1)  & 0  &\frac{1}{660451885056} \\\hline\hline
(2,1^3)  & 0 &-\frac{1}{23776267862016} \\\hline\hline
(1^5) & 0 &\frac{1}{5706304286883840} \\\hline\hline\\
\end{array}
\]

\end{document}